\newtheorem{thm}{Theorem}[section]
\newtheorem{lem}[thm]{Lemma}
\numberwithin{equation}{section}
\title{A Unified Spectral Method for FPDEs with Two-sided Derivatives; Stability, and Error Analysis} 
\author{
	Mehdi Samiee
	\footnote{D\lowercase{epartment of} C\lowercase{omputational} M\lowercase{athematics}, S\lowercase{cience}, \lowercase{and}, E\lowercase{ngineering} \& D\lowercase{epartment of} M\lowercase{echanical} E\lowercase{ngineering},	
		M\lowercase{ichigan} S\lowercase{tate} U\lowercase{niversity}, 428 S S\lowercase{haw} L\lowercase{ane}, E\lowercase{ast} L\lowercase{ansing}, MI 48824, USA}
	, Mohsen Zayernouri
	\footnote{D\lowercase{epartment of} C\lowercase{omputational} M\lowercase{athematics}, S\lowercase{cience}, \lowercase{and}, E\lowercase{ngineering} \&
		D\lowercase{epartment of} M\lowercase{echanical} E\lowercase{ngineering},	
		M\lowercase{ichigan} S\lowercase{tate} U\lowercase{niversity}, 428 S S\lowercase{haw} L\lowercase{ane}, E\lowercase{ast} L\lowercase{ansing}, MI 48824, USA,  C\lowercase{orresponding author; zayern@msu.edu}}
	AND Mark M. Meerschaert
	\footnote{D\lowercase{epartment of} S\lowercase{tatistics and} P\lowercase{robability}, M\lowercase{ichigan} S\lowercase{state} U\lowercase{niversity}, 619 R\lowercase{ed} C\lowercase{edar} R\lowercase{oad} W\lowercase{ells} H\lowercase{all}, E\lowercase{ast} L\lowercase{ansing}, MI 48824, USA}
}
\begin{document}
	
	\maketitle

	\begin{abstract}
	We present the stability and error analysis of the unified Petrov-Galerkin spectral method, developed in \cite{samiee2017Unified}, for linear fractional partial differential equations with two-sided derivatives and constant coefficients in any ($1+d$)-dimensional space-time hypercube, $d = 1, 2, 3, \cdots$, subject to homogeneous Dirichlet initial/boundary conditions. 
	Specifically, we prove the existence and uniqueness of the weak form and perform the corresponding stability and error analysis of the proposed method.
	Finally, we perform several numerical simulations to compare the theoretical and computational rates of convergence.
	\end{abstract}
	\begin{keywords}
	Well-posedness, discrete \textit{inf-sup} condition, spectral convergence, Jacobi poly-fractonomials, Legendre polynomials
	\end{keywords}

	\pagestyle{myheadings}
	\thispagestyle{plain}

\section{Introduction}
\label{Sec: Intro}
For anomalous transport, it has been shown that fractional ordinary/partial differential equations FODEs/FPDEs are the most tractable models that rigorously code memory effects,
%
%
self-similar structures, and power-law distributions \cite{metzler2000random,  zaslavsky2007physics, Klages2008, meerschaert2012stochastic, naghibolhosseini2015estimation}.
In addition to finite difference and higher-order compact methods \cite{lubich1986discretized, meerschaert2004finite, tadjeran2007second, hejazi2013finite, chen2014second, zeng2015numerical, cao2015compact, li2016linear, zayernouri2016fractional,zeng2016fast}, a great progress has been made on developing finite-element methods \cite{mclean2009convergence,jin2014error,nochetto2015pde} and spectral/spectral-element methods \cite{shen2007fourierization, zayernouri2013fractional,  zayernouri2015unified, zayernouri2015tempered, chen2015multi,  chen2015generalized, zhang2015optimal, mao2016efficient, zhao2016spectral, kharazmi2017petrov, zhang2017review, suzuki2016fractional, anna2017PGmethod, kharazmi2017pseudo-spectral} to obtain higher accuracy for FODEs/FPDEs. 


In \cite{samiee2017Unified}, we constructed a Petrov-Galerkin (PG) method to solve the weak form of linear FPDEs with two-sided derivatives, including fractional advection, fractional diffusion, fractional advection-dispersion (FADE), and fractional wave equations with constant coefficients in any (1+d)-dimensional  \textit{space-time} hypercube of the form 
\begin{eqnarray}
\label{strongform}
\nonumber
\prescript{}{0}{\mathcal{D}}_{t}^{2\tau} u^{}  
 +  
\sum_{i=1}^{d} 
[c_{l_i}\prescript{}{a_i}{\mathcal{D}}_{x_i}^{2\mu_i} u^{} +c_{r_i}\prescript{}{x_i}{\mathcal{D}}_{b_i}^{2\mu_i} u^{} ]
&=&  \sum_{j=1}^{d} 
[\kappa_{l_j}\prescript{}{a_j}{\mathcal{D}}_{x_j}^{2\nu_j} u^{} +\kappa_{r_j}\prescript{}{x_j}{\mathcal{D}}_{b_j}^{2\nu_j} u^{} ]
\\
&&
+\gamma\,\, u^{}  +
f,
\end{eqnarray}
where $2\tau, \, \in (0, \, 2]$, $2\mu_i, \, \in (0, \, 1]$, and $2\nu_j, \, \in (1, \, 2]$, and  subject to Dirichlet initial and boundary conditions, where $i,j=1, \, 2, \, ..., \, d$, where subject to Dirichlet initial and boundary conditions. 

The main contribution of this study is to prove the well-posedness of problem, the discrete \textit{inf-sup} stability of the PG method, and the corresponding spectral convergence study of the method, complementing authors' work in \cite{samiee2017Unified}. Moreover, we show
a good agreement between the theoretical prediction and numerical experiments. 
%
%

The paper is organized as follows: in section 2, we introduce some preliminaries from fractional calculus. In section 3, we construct the solution/test spaces and develop the PG method. We prove the well-posedness of the weak form and perform the stability analysis in section 4. In section 5, we present the error analysis in details. In section 6, we illustrate the convergence rate of the method. We conclude the paper in section 7 with a summary and discussion.

%
\section{Preliminaries on Fractional Calculus}
\label{Sec: Notation}
%
%
Here, 
we recall the definitions of fractional derivatives and integrals from \cite{meerschaert2012stochastic, zayernouri2015unified}. The left-sided and right-sided fractional integral are given by
\begin{equation}
\label{eq2_3}
\prescript{}{a}{\mathcal{I}}_{x}^{\nu} g(x) = \frac{1}{\Gamma(\nu)}  \int_{a}^{x} \frac{g(s) }{(x - s)^{1-\nu} }\,\,ds,\quad \forall x \in [a,b],
\end{equation}
and
\begin{equation}
\label{eq2_4}
\prescript{}{x}{\mathcal{I}}_{b}^{\nu} g(x) = \frac{1}{\Gamma(\nu)}  \int_{x}^{b} \frac{g(s) }{(s-x)^{1-\nu} }\,\,ds,\quad \forall x \in [a,b],
\end{equation}
where $\Gamma(\cdot)$ represents the Euler gamma function and $0<\nu \leq 1$. Moreover, the Reimann-Liouville left-sided and right-sided fractional derivatives are respectively defined as 
\begin{equation}
\label{eq2}
\prescript{RL}{a}{\mathcal{D}}_{x}^{\nu} g(x) = \frac{1}{\Gamma(1-\nu)}  \frac{d^{}}{d x^{}} \int_{a}^{x} \frac{g(s) }{(x - s)^{\nu} }\,\,ds,\quad x \in [a,b],
\end{equation}
and
\begin{equation}
\label{eq2_2}
\prescript{RL}{x}{\mathcal{D}}_{b}^{\nu} g(x) = \frac{-1}{\Gamma(1-\nu)}  \frac{d^{}}{d x^{}} \int_{x}^{b} \frac{g(s) }{(s-x)^{\nu } }\,\,ds,\quad x \in [a,b].
\end{equation}
%
%

To analytically obtain the fractional differentiation of Jacobi polyfractonomials, we employ the following relations \cite{zayernouri2013fractional}:  
\begin{eqnarray}
\label{6}
\prescript{RL}{-1}{\mathcal{I}}_{x}^{\nu} \{ (1+x)^{\beta} P_{n}^{\alpha, \beta} {(x)}\} = \frac{\Gamma(n+\beta+1)}{\Gamma(n+\beta+\nu+1)}\, (1+x)^{\beta+\nu} P_{n}^{\alpha-\nu,\beta+\nu}{(x)},
\end{eqnarray}
and
\begin{eqnarray}
\label{7}
\prescript{RL}{x}{\mathcal{I}}_{1}^{\nu} \{ (1-x)^{\alpha} P_{n}^{\alpha, \beta} {(x)}\} = \frac{\Gamma(n+\alpha+1)}{\Gamma(n+\alpha+\nu+1)}\, (1-x)^{\alpha+\nu} P_{n}^{\alpha+\nu,\beta-\nu}{(x)},
\end{eqnarray}
where $0 < \nu < 1$, $\alpha > -1$, $\beta > -1$, $x \in [-1, \,1]$ and $P^{\alpha, \, \beta}_{n} (x)$ denotes the standard Jacobi Polynomials of order n and parameters $\alpha$ and $\beta$ \cite{karniadakis2013spectral}. 
Employing \eqref{6} and \eqref{7}, the left-sided and right-sided Reimann-Liouville derivative of Legendre polynomials \cite{karniadakis2013spectral} are obtained as
\begin{eqnarray}
\label{Eq: 10}
\prescript{}{-1}{\mathcal{D}}_{x}^{\nu} P_{n}(x) = \frac{\Gamma(n+1)}{\Gamma(n-\nu+1)}(1+x)^{-\nu}P_{n}^{\, \nu,-\nu}{(x)}\,
\end{eqnarray}
and
\begin{eqnarray}
\label{Eq: 11}
\prescript{}{x}{\mathcal{D}}_{1}^{\nu} P_{n}(x) = \frac{\Gamma(n+1)}{\Gamma(n-\nu+1)}(1-x)^{-\nu}P_{n}^{\, -\nu,\nu}{(x)}, \,
\end{eqnarray}
where $P_{n}(x) = P^{\, 0,0}_{n}(x)$ represents Legendre polynomial of degree n.

\section{Petrov-Galerkin Mathematical Formulation}
\label{Sec: General FPDE}
%
We introduce the underlying solution and test spaces with their proper norms. Moreover, we provide some lemmas in order to prove the well-posedness of the problem in addition to constructing the spatial basis/test functions and performing the discrete stability and convergence analysis of the PG spectral method.

\subsection{\textbf{Mathematical Framework}}
We first recall the definition of the Sobolev space for real $s\geq 0$ from \cite{kharazmi2016petrov,li2009space}. Let 
\begin{eqnarray}
H^{s}(\mathbb{R})= \big{\{}  u \in L^{2}(\mathbb{R}) \vert \, (1+\vert \omega \vert^2)^{\frac{s}{2}} \mathcal{F}(u)(\omega) \in L^{2}(\mathbb{R}) \big{\}},
\end{eqnarray}
endowed with the norm $\Vert u \Vert_{H^{s}({\mathbb{R}})}=\Vert (1+\vert \omega \vert^2)^{\frac{s}{2}} F(u)(\omega) \Vert_{L^{2}(\mathbb{R})}$, where $\mathcal{F}(u)$ is the Fourier transform of $u$. For bounded domain $I=(0,T)$, we define
\begin{equation}
H^{s}(I)=\big{\{} u\in L^{2}(I)\, \vert \,\exists \tilde{u} \in H^{s}(\mathbb{R})\, \, s.t. \, \,\tilde{u}\vert_{I}=u \big{\}},
\end{equation}
associated with $ \Vert u \Vert_{H^{s}(I)}= \underset{\tilde{u}\in H^{s}({\mathbb{R}}),\, \tilde{u}\vert_{I}=u }{\inf} \, \Vert \tilde{u} \Vert_{{}H^{s}(\mathbb{R})}$. Let $\prescript{}{0}C^{\infty}(I)$ and $C_{0}^{\infty}(I)$ be the spaces of smooth functions with compact support in $(0,T]$ and $[0,T)$, respectively. Then, denoted by ${^l}H^s(I)$ and ${^r}H^s(I)$ are the closure of ${_0}C^{\infty}(I)$ and ${C}_0^{\infty}(I)$ with respect to the norm $\Vert \cdot \Vert_{H^s(I)}$ in $(0,T]$ and $[0,T)$, respectively. Here, we recall from \cite{li2009space,ervin2007variational} that
 \begin{equation}
 \label{equivalent}
 \vert \cdot \vert_{H^{s}(I)} \equiv \vert \cdot \vert_{{^l}H^{s}(I)} \equiv \vert \cdot \vert_{{^r}H^{s}(I)} \equiv \vert \cdot \vert_{{}H^{s}(I)}^*,
 \end{equation}
 where $"\equiv"$ denotes equivalence relation and
 $ \vert \cdot \vert_{{^l}H^{s}(I)} = \Vert \prescript{}{0}{\mathcal{D}}_{t}^{s} (\cdot)\Vert_{L^2(I)}$, $ \vert \cdot \vert_{{^r}H^{s}(I)} = \Vert \prescript{}{t}{\mathcal{D}}_{T}^{s} (\cdot) \Vert_{L^2(I)}$, and $ \vert \cdot \vert_{{}H^{s}(I)}^* = \vert (\prescript{}{0}{\mathcal{D}}_{t}^{s}(\cdot),\prescript{}{t}{\mathcal{D}}_{T}^{s}(\cdot))_{I} \vert^{\frac{1}{2}}$. It follows from Lemma 5.2 in \cite{ervin2007variational} that
 \begin{equation}
 \label{equiv_time_1}
 \vert \cdot \vert_{{}H^{s}(I)}^*\equiv \vert \cdot \vert_{{^l}H^{s}(I)}^{\frac{1}{2}} \, \vert \cdot \vert_{{^r}H^{s}(I)}^{\frac{1}{2}}
= \Vert \prescript{}{0}{\mathcal{D}}_{t}^{s} (\cdot)\Vert_{L^2(I)}^{\frac{1}{2}}\, \Vert \prescript{}{t}{\mathcal{D}}_{T}^{s} (\cdot)\Vert_{L^2(I)}^{\frac{1}{2}}.
 \end{equation}
Take $\Lambda = (a,b)$. $H^{\sigma}_{}(\Lambda)$ denotes the usual Sobolev space associated with the real index $\sigma \geq 0$ and $\sigma \neq n-\frac{1}{2}$ on the bounded interval $\Lambda$, and equipped with the norm $\Vert \cdot \Vert_{H^{\sigma}_{}(\Lambda)}$. In \cite{li2010existence}, it has been shown that  the following norms are equivalent:
\begin{equation}
\label{eq14}
\Vert \cdot \Vert_{H^{\sigma}_{}(\Lambda)} \equiv \Vert \cdot \Vert_{{^l}H^{\sigma}_{}(\Lambda)} \equiv \Vert \cdot \Vert_{{^r}H^{\sigma}_{}(\Lambda)},
\end{equation}
where 
\begin{equation}
\Vert \cdot \Vert_{{^l}H^{\sigma}_{}(\Lambda)} = \Big(\Vert \prescript{}{a}{\mathcal{D}}_{x}^{\sigma}\, (\cdot)\Vert_{L^2(\Lambda)}^2+\Vert \cdot \Vert_{L^2(\Lambda)}^2 \Big)^{\frac{1}{2}},
\end{equation}
and 
\begin{equation}
\Vert \cdot \Vert_{{^r}H^{\sigma}_{}(\Lambda)} = \Big(\Vert \prescript{}{x}{\mathcal{D}}_{b}^{\sigma}\, (\cdot)\Vert_{L^2(\Lambda)}^2+\Vert \cdot \Vert_{L^2(\Lambda)}^2 \Big)^{\frac{1}{2}}.
\end{equation}

\begin{lem}
\label{lemma31}
Let $\sigma \geq 0$ and $\sigma \neq n-\frac{1}{2}$. Then, the norms $\Vert \cdot \Vert_{{^l}H^{\sigma}_{}(\Lambda)}$ and $\Vert \cdot \Vert_{{^r}H^{\sigma}_{}(\Lambda)}$ are equivalent to $\Vert \cdot \Vert_{{^c}H^{\sigma}_{}(\Lambda)}$ in space $C^{\infty}_{0}(\Lambda)$, where
\begin{equation}
	\Vert \cdot \Vert_{{^c}H^{\sigma}_{}(\Lambda)} = \Big(\Vert \prescript{}{x}{\mathcal{D}}_{b}^{\sigma}\, (\cdot)\Vert_{L^2(\Lambda)}^2+\Vert \prescript{}{a}{\mathcal{D}}_{x}^{\sigma}\, (\cdot)\Vert_{L^2(\Lambda)}^2+\Vert \cdot \Vert_{L^2(\Lambda)}^2 \Big)^{\frac{1}{2}}.
\end{equation}
\end{lem}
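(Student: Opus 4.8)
The plan is to obtain the desired equivalence as a direct consequence of the two-sided norm equivalence \eqref{eq14}, by sandwiching $\Vert\cdot\Vert_{{^c}H^{\sigma}(\Lambda)}$ between a trivial upper bound and a lower bound supplied by \eqref{eq14}. Since every $u\in C_0^{\infty}(\Lambda)$ vanishes near both endpoints, the left- and right-sided Riemann--Liouville derivatives $\prescript{}{a}{\mathcal{D}}_x^{\sigma} u$ and $\prescript{}{x}{\mathcal{D}}_b^{\sigma} u$ are well defined and lie in $L^2(\Lambda)$, and such $u$ belongs to $H^{\sigma}(\Lambda)$, so \eqref{eq14} applies.

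One direction is immediate: inspecting the definitions, $\Vert\cdot\Vert_{{^c}H^{\sigma}(\Lambda)}^2$ is obtained from $\Vert\cdot\Vert_{{^l}H^{\sigma}(\Lambda)}^2$ (resp. $\Vert\cdot\Vert_{{^r}H^{\sigma}(\Lambda)}^2$) by adding the nonnegative term $\Vert\prescript{}{x}{\mathcal{D}}_b^{\sigma}(\cdot)\Vert_{L^2(\Lambda)}^2$ (resp. $\Vert\prescript{}{a}{\mathcal{D}}_x^{\sigma}(\cdot)\Vert_{L^2(\Lambda)}^2$), whence
\[
\Vert\cdot\Vert_{{^l}H^{\sigma}(\Lambda)}\le \Vert\cdot\Vert_{{^c}H^{\sigma}(\Lambda)},\qquad \Vert\cdot\Vert_{{^r}H^{\sigma}(\Lambda)}\le \Vert\cdot\Vert_{{^c}H^{\sigma}(\Lambda)}.
\]
For the reverse inequality I would invoke \eqref{eq14}, which furnishes a constant $C>0$ with $\Vert\cdot\Vert_{{^r}H^{\sigma}(\Lambda)}\le C\,\Vert\cdot\Vert_{{^l}H^{\sigma}(\Lambda)}$; in particular $\Vert\prescript{}{x}{\mathcal{D}}_b^{\sigma}(\cdot)\Vert_{L^2(\Lambda)}^2\le C^2\,\Vert\cdot\Vert_{{^l}H^{\sigma}(\Lambda)}^2$. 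Adding the remaining terms $\Vert\prescript{}{a}{\mathcal{D}}_x^{\sigma}(\cdot)\Vert_{L^2(\Lambda)}^2+\Vert\cdot\Vert_{L^2(\Lambda)}^2=\Vert\cdot\Vert_{{^l}H^{\sigma}(\Lambda)}^2$ then yields
\[
\Vert\cdot\Vert_{{^c}H^{\sigma}(\Lambda)}^2\le (C^2+1)\,\Vert\cdot\Vert_{{^l}H^{\sigma}(\Lambda)}^2,
\]
so that $\Vert\cdot\Vert_{{^c}H^{\sigma}(\Lambda)}\equiv\Vert\cdot\Vert_{{^l}H^{\sigma}(\Lambda)}$. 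The equivalence with $\Vert\cdot\Vert_{{^r}H^{\sigma}(\Lambda)}$ then follows by the symmetric argument, controlling $\Vert\prescript{}{a}{\mathcal{D}}_x^{\sigma}(\cdot)\Vert_{L^2(\Lambda)}$ by $\Vert\cdot\Vert_{{^r}H^{\sigma}(\Lambda)}$ through the other half of \eqref{eq14}, or simply by transitivity through $\Vert\cdot\Vert_{{^l}H^{\sigma}(\Lambda)}$.

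There is essentially no analytic obstacle here, since all of the fractional-Sobolev subtleties have already been absorbed into \eqref{eq14}: the restriction $\sigma\neq n-\frac{1}{2}$ is precisely the hypothesis under which the cited equivalence of the one-sided norms holds (the borderline half-integer case), and it is inherited unchanged. The only point requiring minor care is to confirm that \eqref{eq14} may legitimately be applied on $C_0^{\infty}(\Lambda)$ rather than on the full space $H^{\sigma}(\Lambda)$; this is harmless because $C_0^{\infty}(\Lambda)\subset H^{\sigma}(\Lambda)$ and the equivalence constants in \eqref{eq14} are uniform over $H^{\sigma}(\Lambda)$, hence a fortiori over its subspace $C_0^{\infty}(\Lambda)$.
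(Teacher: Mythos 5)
Your proposal is correct and follows essentially the same route as the paper's appendix proof: both arguments reduce the claim to the known equivalence \eqref{eq14} of $\Vert\cdot\Vert_{H^{\sigma}(\Lambda)}$, $\Vert\cdot\Vert_{{^l}H^{\sigma}(\Lambda)}$, and $\Vert\cdot\Vert_{{^r}H^{\sigma}(\Lambda)}$ from \cite{li2010existence,ervin2007variational} and finish by elementary norm algebra. The only cosmetic difference is that the paper sandwiches $\Vert\cdot\Vert_{{^c}H^{\sigma}(\Lambda)}$ against the standard Sobolev norm $\Vert\cdot\Vert_{H^{\sigma}(\Lambda)}$ as intermediary, whereas you bound it directly by a multiple of $\Vert\cdot\Vert_{{^l}H^{\sigma}(\Lambda)}$; both are valid instantiations of the same idea.
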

\begin{proof}
See Appendix.
\end{proof}
\noindent In the usual Sobolev space, for $u \in {}H^{\sigma}_{}(\Lambda)$ we define
$$\vert u \vert_{{}H^{\sigma}_{}(\Lambda)}^{*}=\vert (\prescript{}{a}{\mathcal{D}}_{x}^{\sigma}\, u, \prescript{}{x}{\mathcal{D}}_{b}^{\sigma}\, v )\vert_{\Lambda}^{\frac{1}{2}} \quad \forall v \in {}H^{\sigma}_{}(\Lambda).$$
 Denoted by ${^l}H^{\sigma}_{0}(\Lambda)$ and ${^r}H^{\sigma}_{0}(\Lambda)$ are the closure of $C^{\infty}_0(\Lambda)$ with respect to the norms $\Vert \cdot \Vert_{{^l}H^s(\Lambda)}$ and $\Vert \cdot \Vert_{{^r}H^s(\Lambda)}$in $\Lambda$, respectively, where $C^{\infty}_0(\Lambda)$ is the spaces of smooth functions with compact support in $\Lambda$.
\begin{lem}
	\label{lemaa32}
	For $\sigma \geq 0$ and $\sigma \neq n-\frac{1}{2}$, ${^l}H^{\sigma}_{0}(\Lambda)$, ${^r}H^{\sigma}_{0}(\Lambda)$, and ${^c}H^{\sigma}_{0}(\Lambda)$ are equal and their seminorms are equivalent to $\vert \cdot \vert_{{}H^{\sigma}_{}(\Lambda)}^{*}$, where ${^l}H^{\sigma}_{0}(\Lambda)$, ${^r}H^{\sigma}_{0}(\Lambda)$, and ${^c}H^{\sigma}_{0}(\Lambda)$ denotes the closure of $C^{\infty}_{0}(\Lambda)$ with compact support on $\Lambda$ with respect to the norms $\Vert \cdot \Vert_{{^l}H^{\sigma}_{}(\Lambda)}$ and $\Vert \cdot \Vert_{{^r}H^{\sigma}_{}(\Lambda)}$.
\end{lem}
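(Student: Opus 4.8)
The plan is to split the statement into two parts: (i) the set equality ${^l}H^{\sigma}_{0}(\Lambda) = {^r}H^{\sigma}_{0}(\Lambda) = {^c}H^{\sigma}_{0}(\Lambda)$, and (ii) the equivalence of their seminorms with $\vert \cdot \vert_{{}H^{\sigma}_{}(\Lambda)}^{*}$. For part (i) I would observe that each of the three spaces is, by definition, the completion of the \emph{same} space $C^{\infty}_{0}(\Lambda)$, taken with respect to one of the norms $\Vert \cdot \Vert_{{^l}H^{\sigma}_{}(\Lambda)}$, $\Vert \cdot \Vert_{{^r}H^{\sigma}_{}(\Lambda)}$, $\Vert \cdot \Vert_{{^c}H^{\sigma}_{}(\Lambda)}$. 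By Lemma~\ref{lemma31} these three norms are mutually equivalent on $C^{\infty}_{0}(\Lambda)$. Since a sequence is Cauchy in one of a family of equivalent norms if and only if it is Cauchy in the others, and then converges to the same limit inside the ambient space $L^{2}(\Lambda)$, the three closures coincide as subspaces and inherit equivalent norms. This settles the equality assertion with essentially no further work.

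For the seminorm equivalence, the crux is the spatial analogue of \eqref{equiv_time_1}: for $u \in C^{\infty}_{0}(\Lambda)$ one has
\begin{equation}
(\prescript{}{a}{\mathcal{D}}_{x}^{\sigma}\, u,\, \prescript{}{x}{\mathcal{D}}_{b}^{\sigma}\, u)_{\Lambda} = \cos(\sigma \pi)\, \Vert \prescript{}{a}{\mathcal{D}}_{x}^{\sigma}\, u\Vert_{L^{2}(\Lambda)}^{2} = \cos(\sigma \pi)\, \Vert \prescript{}{x}{\mathcal{D}}_{b}^{\sigma}\, u\Vert_{L^{2}(\Lambda)}^{2}.
\end{equation}
I would prove this by extending $u$ by zero to all of $\mathbb{R}$ (legitimate since $u$ has compact support in $\Lambda$), passing to the Fourier side where $\mathcal{F}(\prescript{}{a}{\mathcal{D}}_{x}^{\sigma} u)(\omega) = (i\omega)^{\sigma}\mathcal{F}(u)(\omega)$ and $\mathcal{F}(\prescript{}{x}{\mathcal{D}}_{b}^{\sigma} u)(\omega) = (-i\omega)^{\sigma}\mathcal{F}(u)(\omega)$, and invoking Parseval's identity; writing $(i\omega)^{\sigma}=\vert\omega\vert^{\sigma}e^{i\sigma\pi\,\mathrm{sgn}(\omega)/2}$ and using that $\vert \mathcal{F}(u)\vert^{2}$ is even, the phase factors collapse to $\cos(\sigma\pi)$. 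This is exactly the content of Lemma~5.2 of \cite{ervin2007variational}, which I would cite rather than rederive. In particular the left and right seminorms agree, and taking the square root of the absolute value yields $\vert u \vert_{{}H^{\sigma}_{}(\Lambda)}^{*} = \vert \cos(\sigma\pi)\vert^{\frac{1}{2}}\, \Vert \prescript{}{a}{\mathcal{D}}_{x}^{\sigma} u\Vert_{L^{2}(\Lambda)}$.

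With this identity the equivalences are immediate on $C^{\infty}_{0}(\Lambda)$: both $\Vert \prescript{}{a}{\mathcal{D}}_{x}^{\sigma}\cdot\Vert_{L^{2}(\Lambda)}$ and $\Vert \prescript{}{x}{\mathcal{D}}_{b}^{\sigma}\cdot\Vert_{L^{2}(\Lambda)}$ equal $\vert\cos(\sigma\pi)\vert^{-\frac{1}{2}}\vert\cdot\vert^{*}_{{}H^{\sigma}_{}(\Lambda)}$, while the $^{c}$-seminorm $\big(\Vert \prescript{}{a}{\mathcal{D}}_{x}^{\sigma}\cdot\Vert_{L^{2}(\Lambda)}^{2}+\Vert \prescript{}{x}{\mathcal{D}}_{b}^{\sigma}\cdot\Vert_{L^{2}(\Lambda)}^{2}\big)^{\frac{1}{2}}$ equals $\sqrt{2}\,\vert\cos(\sigma\pi)\vert^{-\frac{1}{2}}\vert\cdot\vert^{*}_{{}H^{\sigma}_{}(\Lambda)}$. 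I would then transfer these identities from $C^{\infty}_{0}(\Lambda)$ to the common closure by density: every seminorm involved, including $\vert\cdot\vert^{*}$ (bounded above by the full norm via Cauchy--Schwarz), is continuous in the closure topology, so approximating $u$ by $u_{k}\in C^{\infty}_{0}(\Lambda)$ and passing to the limit preserves the proportionality constants.

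The step I expect to demand the most care is the hypothesis $\sigma \neq n-\tfrac{1}{2}$, which is precisely the set of exponents where $\cos(\sigma\pi)=0$. At those values the displayed identity degenerates and $\vert\cdot\vert^{*}$ ceases to control the genuine seminorms, so the equivalence fails; I would stress that excluding these exponents keeps the constants $\vert\cos(\sigma\pi)\vert^{\pm\frac{1}{2}}$ finite and strictly positive, making the equivalence two-sided. A secondary technical point is justifying the zero-extension and the Fourier computation for noninteger $\sigma$, and checking that the density limit is taken in the seminorm itself; the latter is unproblematic because each seminorm is dominated by the full norm, and Lemma~\ref{lemma31} guarantees that convergence in any of the three equivalent full norms is the relevant mode of convergence in the shared closure.
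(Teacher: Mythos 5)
Your part (i) is sound, and in fact cleaner than the paper's own treatment: the paper disposes of the lemma in two sentences by citing \cite{ervin2007variational,li2010existence} for the equality ${^l}H^{\sigma}_{0}(\Lambda)={^r}H^{\sigma}_{0}(\Lambda)$ and asserting that ``similar steps'' handle ${^c}H^{\sigma}_{0}(\Lambda)$, whereas you derive the set equality directly from Lemma~\ref{lemma31}: closures of the \emph{same} dense set $C^{\infty}_{0}(\Lambda)$ under mutually equivalent norms coincide, and since each norm dominates $\Vert\cdot\Vert_{L^{2}(\Lambda)}$ the limits are consistently identified in the common ambient space. That argument is complete as you state it.

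Part (ii), however, contains a genuine error: your displayed identity is false as written. The Fourier computation lives on all of $\mathbb{R}$. Writing $\tilde u$ for the zero extension of $u\in C^{\infty}_{0}(\Lambda)$, the derivative $\prescript{}{-\infty}{\mathcal{D}}_{x}^{\sigma}\tilde u$ agrees with $\prescript{}{a}{\mathcal{D}}_{x}^{\sigma}u$ on $\Lambda$ but does \emph{not} vanish on $(b,\infty)$ --- it has a power-law tail decaying like $x^{-\sigma-1}$ --- and symmetrically $\prescript{}{x}{\mathcal{D}}_{\infty}^{\sigma}\tilde u$ has a tail on $(-\infty,a)$. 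Parseval therefore yields
\begin{equation*}
\big(\prescript{}{a}{\mathcal{D}}_{x}^{\sigma}\, u,\ \prescript{}{x}{\mathcal{D}}_{b}^{\sigma}\, u\big)_{\Lambda}\;=\;\cos(\sigma\pi)\,\big\Vert\,\vert\omega\vert^{\sigma}\,\mathcal{F}(\tilde u)(\omega)\,\big\Vert_{L^{2}(\mathbb{R})}^{2}
\end{equation*}
(the inner product is unchanged by extension because the product of the two extended derivatives is supported in $[a,b]$), but it does \emph{not} yield $\cos(\sigma\pi)\Vert \prescript{}{a}{\mathcal{D}}_{x}^{\sigma}u\Vert_{L^{2}(\Lambda)}^{2}$: the interval seminorm discards the tail, so in general $\Vert \prescript{}{a}{\mathcal{D}}_{x}^{\sigma}u\Vert_{L^{2}(\Lambda)}<\Vert\,\vert\omega\vert^{\sigma}\mathcal{F}(\tilde u)\Vert_{L^{2}(\mathbb{R})}$, and likewise the left and right interval seminorms are merely equivalent, not equal (the two tails being cut off are different). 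Consequently your exact proportionality constants $\vert\cos(\sigma\pi)\vert^{\pm\frac{1}{2}}$ and $\sqrt{2}$ are wrong; this is consistent with the paper's \eqref{equiv_time_1}, where Lemma~5.2 of \cite{ervin2007variational} is quoted as an \emph{equivalence} on the bounded interval, not an equality. The repair is to keep the cosine identity with the $\mathbb{R}$-seminorm of $\tilde u$ on the right, and then invoke the nontrivial equivalence between $\Vert\,\vert\omega\vert^{\sigma}\mathcal{F}(\tilde u)\Vert_{L^{2}(\mathbb{R})}$ and the one-sided interval seminorms on $C^{\infty}_{0}(\Lambda)$ established in \cite{ervin2007variational,li2010existence} --- which is a second place, besides $\cos(\sigma\pi)\neq 0$, where the hypothesis $\sigma\neq n-\frac{1}{2}$ is indispensable. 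With that substitution your density argument transfers the (now two-sided, constant-bearing) bounds to the common closure, and the lemma's conclusion stands; note that this repaired route is essentially the chain of results the paper's one-line citation proof silently relies on.
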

\begin{proof}
In \cite{ervin2007variational,li2010existence}, it has been shown that the spaces ${^l}H^{\sigma}_{0}(\Lambda)$ and ${^r}H^{\sigma}_{0}(\Lambda)$ are equal. Following similar steps, we can show that  ${^c}H^{\sigma}_{0}(\Lambda)$ is equal with ${^l}H^{\sigma}_{0}(\Lambda)$ and ${^c}H^{\sigma}_{0}(\Lambda)$ and the corresponding seminorms are equivalent.
\end{proof}
\noindent 
Lemma \ref{lemaa32} directly results in $\big{\vert}(\prescript{}{a}{\mathcal{D}}_{x}^{\sigma}\, u, \prescript{}{x}{\mathcal{D}}_{b}^{\sigma}\, v )_{\Lambda}^{} \big{\vert} \geq \beta \, \vert u \vert_{{^l}H^{\sigma}_{}(\Lambda)}\, \vert  v \vert_{{^r}H^{\sigma}_{}(\Lambda)}$, where $\beta$ is a positive constant. Similarly, we can prove that $\big{\vert}(\prescript{}{x}{\mathcal{D}}_{b}^{\sigma}\, u, \prescript{}{a}{\mathcal{D}}_{x}^{\sigma}\, v )_{\Lambda}^{} \big{\vert} \geq \beta \, \vert u \vert_{{^r}H^{\sigma}_{}(\Lambda)}\, \vert  v \vert_{{^l}H^{\sigma}_{}(\Lambda)}$.

Let $\Lambda_1 = (a_1,b_1)$, $\Lambda_i = (a_i,b_i) \times \Lambda_{i-1}$ for $i=2,\cdots,d$, and $\mathcal{X}_1 = H^{\nu_1}_{0}(\Lambda_1)$, with the associated norm $ \Vert \cdot \Vert_{{^c}H^{\nu_1}_{}(\Lambda_1)}$. 
Accordingly, we construct $\mathcal{X}_d$ such that
\begin{eqnarray}
\mathcal{X}_2 &=& H^{\nu_2}_0 \Big((a_2,b_2); L^2(\Lambda_1) \Big) \cap L^2((a_2,b_2); \mathcal{X}_1),
\nonumber
\\
&\vdots&
\nonumber
\\
\mathcal{X}_d &=& H^{\nu_d}_0 \Big((a_d,b_d); L^2(\Lambda_{d-1}) \Big) \cap L^2((a_d,b_d); \mathcal{X}_{d-1}),
\end{eqnarray}
associated with the norm
\begin{equation}
\label{norm_Xd}
\Vert \cdot \Vert_{\mathcal{X}_d} = \bigg{\{} \Vert \cdot \Vert_{{^c}H^{\nu_d} \Big((a_d,b_d); L^2(\Lambda_{d-1}) \Big)}^2 + \Vert \cdot \Vert_{ L^2\Big((a_d,b_d); \mathcal{X}_{d-1}\Big)}^2 \bigg{\}}^{\frac{1}{2}}.
\end{equation}
\begin{lem}
	\label{norm_221}
Let $\nu_i \geq 0$ and $\nu_i \neq n-\frac{1}{2}$ for $i=1,\cdots,d$. Then
\begin{equation}
\label{norm_Xd_2}
\Vert \cdot \Vert_{\mathcal{X}_d} \equiv \bigg{\{}  \sum_{i=1}^{d} \Big(\Vert \prescript{}{x_i}{\mathcal{D}}_{b_i}^{\nu_i}\, (\cdot)\Vert_{L^2(\Lambda_d)}^2+\Vert \prescript{}{a_i}{\mathcal{D}}_{x_i}^{\nu_i}\, (\cdot)\Vert_{L^2(\Lambda_d)}^2 \Big) + \Vert  \cdot \Vert_{L^2(\Lambda_d)}^2 \bigg{\}}^{\frac{1}{2}}.
\end{equation}
\end{lem}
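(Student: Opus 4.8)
The plan is to argue by induction on the spatial dimension $d$, peeling off one variable at a time and matching each contribution against the recursive definition of $\Vert \cdot \Vert_{\mathcal{X}_d}$ in \eqref{norm_Xd}. The hypotheses $\nu_i \geq 0$, $\nu_i \neq n-\tfrac{1}{2}$ are exactly what is needed so that Lemma \ref{lemma31} is applicable in each variable. For the base case $d=1$, the claim is immediate: by construction $\mathcal{X}_1 = H^{\nu_1}_{0}(\Lambda_1)$ carries the norm $\Vert \cdot \Vert_{{^c}H^{\nu_1}(\Lambda_1)}$, and the right-hand side of \eqref{norm_Xd_2} for $d=1$ is precisely $\big(\Vert \prescript{}{x_1}{\mathcal{D}}_{b_1}^{\nu_1}(\cdot)\Vert_{L^2(\Lambda_1)}^2 + \Vert \prescript{}{a_1}{\mathcal{D}}_{x_1}^{\nu_1}(\cdot)\Vert_{L^2(\Lambda_1)}^2 + \Vert \cdot \Vert_{L^2(\Lambda_1)}^2\big)^{1/2}$, which is the definition of $\Vert \cdot \Vert_{{^c}H^{\nu_1}(\Lambda_1)}$.

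For the inductive step, I would assume \eqref{norm_Xd_2} holds on $\Lambda_{d-1}$, with two-sided constants independent of the remaining variable $x_d$, and treat the two summands in \eqref{norm_Xd} separately. For the first summand I would unfold the Bochner-type norm $\Vert \cdot \Vert_{{^c}H^{\nu_d}((a_d,b_d);L^2(\Lambda_{d-1}))}$ via Lemma \ref{lemma31} applied to the $L^2(\Lambda_{d-1})$-valued map $x_d \mapsto u(x_d,\cdot)$, writing it as the sum of $\Vert \prescript{}{x_d}{\mathcal{D}}_{b_d}^{\nu_d} u\Vert^2$, $\Vert \prescript{}{a_d}{\mathcal{D}}_{x_d}^{\nu_d} u\Vert^2$, and $\Vert u\Vert^2$ in $L^2((a_d,b_d);L^2(\Lambda_{d-1}))$. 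Since $\Lambda_d = (a_d,b_d)\times\Lambda_{d-1}$, Fubini's theorem identifies $L^2((a_d,b_d);L^2(\Lambda_{d-1}))$ isometrically with $L^2(\Lambda_d)$, so this summand yields exactly the $i=d$ terms and the $L^2(\Lambda_d)$ term of \eqref{norm_Xd_2}.

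For the second summand I would write $\Vert u\Vert_{L^2((a_d,b_d);\mathcal{X}_{d-1})}^2 = \int_{a_d}^{b_d}\Vert u(x_d,\cdot)\Vert_{\mathcal{X}_{d-1}}^2\,dx_d$, insert the inductive hypothesis under the integral, and integrate the uniform two-sided bounds over $x_d$. Because the operators $\prescript{}{a_i}{\mathcal{D}}_{x_i}^{\nu_i}$ and $\prescript{}{x_i}{\mathcal{D}}_{b_i}^{\nu_i}$ for $i\le d-1$ act only on the variables of $\Lambda_{d-1}$, they commute with integration in $x_d$; a further application of Fubini then turns $\int_{a_d}^{b_d}\Vert \prescript{}{x_i}{\mathcal{D}}_{b_i}^{\nu_i} u(x_d,\cdot)\Vert_{L^2(\Lambda_{d-1})}^2\,dx_d$ into $\Vert \prescript{}{x_i}{\mathcal{D}}_{b_i}^{\nu_i} u\Vert_{L^2(\Lambda_d)}^2$, and likewise for the left derivatives and the $L^2$ term, recovering the $i=1,\dots,d-1$ terms of \eqref{norm_Xd_2}. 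Summing the two contributions gives \eqref{norm_Xd_2}; the $\Vert u\Vert_{L^2(\Lambda_d)}^2$ term is produced by both summands, but this merely alters the overall constant by a bounded factor and does not spoil the equivalence.

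The main technical obstacle I anticipate lies in the vector-valued (Bochner) fractional derivative appearing in the first summand: I must verify that $\prescript{}{x_d}{\mathcal{D}}_{b_d}^{\nu_d}$ applied to the $L^2(\Lambda_{d-1})$-valued map $x_d\mapsto u(x_d,\cdot)$ coincides, after the Fubini identification, with the scalar fractional derivative of $u$ in $x_d$ on $\Lambda_d$, and that the equivalence constants furnished by Lemma \ref{lemma31} and by the inductive hypothesis are genuinely uniform in the transverse variables, so that they survive integration in $x_d$. Once this commutation-and-uniformity point is secured, the remainder is bookkeeping of finitely many equivalence constants, whose product stays finite for each fixed $d$.
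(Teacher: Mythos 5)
Your proposal is correct and follows essentially the same route as the paper: the paper also argues by induction on $d$ (with the $d=2$ case worked out explicitly first), splits the norm in \eqref{norm_Xd} into its two summands, and uses Fubini-type iterated-integral computations to identify $L^2((a_d,b_d);L^2(\Lambda_{d-1}))$ with $L^2(\Lambda_d)$ and to push the inductive hypothesis \eqref{eq23421} through the $x_d$-integration. The only difference is presentational: the paper sidesteps your Bochner-derivative commutation concern by writing everything directly as iterated scalar integrals, and it silently absorbs the doubled $\Vert u \Vert_{L^2(\Lambda_d)}^2$ term into the equivalence constant, exactly as you note.
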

\begin{proof}
$\mathcal{X}_1$ is endowed with the norm $\Vert \cdot \Vert_{\mathcal{X}_1}$, where $\Vert \cdot \Vert_{\mathcal{X}_1}\equiv \Vert \cdot \Vert_{{}H^{\nu_2}_{}(\Lambda_1)}$ (see Lemma \ref{lemma31}). Moreover, $\mathcal{X}_2$ is associated with the norm 
\begin{equation}
\label{eq234}
\Vert \cdot \Vert_{\mathcal{X}_2} \equiv \bigg{\{} \Vert \cdot \Vert_{{^c}H^{\nu_2} \Big((a_2,b_2); L^2(\Lambda_{1}) \Big)}^2 + \Vert \cdot \Vert_{ L^2\Big((a_2,b_2); \mathcal{X}_{1}\Big)}^2 \bigg{\}}^{\frac{1}{2}},
\end{equation}	
where
\begin{eqnarray}
\Vert u \Vert_{{^c}H^{\nu_2} \Big((a_2,b_2); L^2(\Lambda_{1}) \Big)}^2 
&=& \int_{a_1}^{b_1}\, \Big( \int_{a_2}^{b_2}\,  \vert \prescript{}{a_2}{\mathcal{D}}_{x_2}^{\nu_2} u  \vert^2  \,  dx_2 + \int_{a_2}^{b_2}\,  \vert \prescript{}{x_2}{\mathcal{D}}_{b_2}^{\nu_2} u \vert^2  \,  dx_2 + \int_{a_2}^{b_2}\,  \vert  u \vert^2  \,  dx_2 \Big) \,dx_1
\nonumber
\\
&=& \int_{a_1}^{b_1}\int_{a_2}^{b_2}\,  \vert \prescript{}{a_2}{\mathcal{D}}_{x_2}^{\nu_2} u  \vert^2  \,  dx_2 dx_1 + \int_{a_1}^{b_1}\int_{a_2}^{b_2}\,  \vert \prescript{}{x_2}{\mathcal{D}}_{b_2}^{\nu_2} u \vert^2  \,  dx_2 dx_1
\nonumber
\\
&&
+ \int_{a_1}^{b_1} \int_{a_2}^{b_2}\,  \vert  u \vert^2  \,  dx_2 dx_1
\nonumber
\\
&=&
\Vert \prescript{}{x_2}{\mathcal{D}}_{b_2}^{\nu_2}\, (u)\Vert_{L^2(\Lambda_2)}^2+\Vert \prescript{}{a_2}{\mathcal{D}}_{x_2}^{\nu_2}\, (u)\Vert_{L^2(\Lambda_2)}^2+\Vert u \Vert_{L^2(\Lambda_2)}^2,
\nonumber
\end{eqnarray}
and
\begin{eqnarray}
\Vert u \Vert_{L^2\Big((a_2,b_2); \mathcal{X}_{1}\Big)}^2
&=& \int_{a_2}^{b_2}\, \Big( \int_{a_1}^{b_1}\,  \vert \prescript{}{a_1}{\mathcal{D}}_{x_1}^{\nu_1} u  \vert^2  \,  dx_1 + \int_{a_1}^{b_1}\,  \vert \prescript{}{x_1}{\mathcal{D}}_{b_1}^{\nu_1} u \vert^2  \,  dx_1 + \int_{a_1}^{b_1}\,  \vert u \vert^2  \,  dx_1 \Big) \,dx_2
\nonumber
\\
&=& \int_{a_2}^{b_2}\int_{a_1}^{b_1}  \vert \prescript{}{a_1}{\mathcal{D}}_{x_1}^{\nu_1} u  \vert^2    dx_1 dx_2 + \int_{a_2}^{b_2}\int_{a_1}^{b_1}  \vert \prescript{}{x_1}{\mathcal{D}}_{b_1}^{\nu_1} u \vert^2    dx_1 dx_2
+ \int_{a_2}^{b_2}\int_{a_1}^{b_1}  \vert  u \vert^2    dx_1 dx_2
\nonumber
\\
&=&
\Vert \prescript{}{x_1}{\mathcal{D}}_{b_1}^{\nu_1}\, u\Vert_{L^2(\Lambda_2)}^2+\Vert \prescript{}{a_1}{\mathcal{D}}_{x_1}^{\nu_1}\, u\Vert_{L^2(\Lambda_2)}^2+\Vert u \Vert_{L^2(\Lambda_2)}^2.
\nonumber
\end{eqnarray}
Now, we assume that
\begin{equation}
\label{eq23421}
\Vert \cdot \Vert_{\mathcal{X}_{d-1}} \equiv \bigg{\{} \sum_{i=1}^{d-1} \Big(\Vert \prescript{}{x_i}{\mathcal{D}}_{b_i}^{\nu_i}\, (\cdot)\Vert_{L^2(\Lambda_{d-1})}^2+\Vert \prescript{}{a_i}{\mathcal{D}}_{x_i}^{\nu_i}\, (\cdot)\Vert_{L^2(\Lambda_{d-1})}^2 \Big) + \Vert  \cdot \Vert_{L^2(\Lambda_{d-1})}^2 \bigg{\}}^{\frac{1}{2}}.
\end{equation}
Then,
\begin{eqnarray}
&&\Vert u \Vert_{{^c}H^{\nu_d}_0 \Big((a_d,b_d); L^2(\Lambda_{d-1}) \Big)}^2 
\nonumber
\\
&&= \int_{\Lambda_{d-1}}^{}\, \Big( \int_{a_d}^{b_d}\,  \vert \prescript{}{a_d}{\mathcal{D}}_{x_d}^{\nu_d} u  \vert^2  \,  dx_d + \int_{a_d}^{b_d}\,  \vert \prescript{}{x_d}{\mathcal{D}}_{b_d}^{\nu_d} u \vert^2  \,  dx_d + \int_{a_d}^{b_d}\,  \vert u \vert^2  \,  dx_d \Big) \,d\Lambda_{d-1}
\nonumber
\\
&&= \int_{\Lambda_{d-1}}^{}\int_{a_d}^{b_d}\,  \vert \prescript{}{a_d}{\mathcal{D}}_{x_d}^{\nu_d} u  \vert^2  \,  dx_d d\Lambda_{d-1} + \int_{\Lambda_{d-1}}^{}\int_{a_d}^{b_d}\,  \vert \prescript{}{x_d}{\mathcal{D}}_{b_d}^{\nu_d} u \vert^2  \,  dx_d d\Lambda_{d-1} + \int_{\Lambda_{d-1}}^{}\int_{a_d}^{b_d}\,  \vert u \vert^2  \,  dx_d d\Lambda_{d-1}
\nonumber
\\
&&=
\Vert \prescript{}{x_d}{\mathcal{D}}_{b_d}^{\nu_d}\, (u)\Vert_{L^2(\Lambda_d)}^2+\Vert \prescript{}{a_d}{\mathcal{D}}_{x_d}^{\nu_d}\, (u)\Vert_{L^2(\Lambda_d)}^2+\Vert u \Vert_{L^2(\Lambda_d)}^2,
\nonumber
\end{eqnarray}
and
\begin{eqnarray}
\Vert u \Vert_{L^2\Big((a_d,b_d); \mathcal{X}_{d-1}\Big)}^2 
&=& \int_{a_d}^{b_d} \Big(  \int_{\Lambda_{d-1}}^{} \sum_{i=1}^{d-1} \big( \vert \prescript{}{a_i}{\mathcal{D}}_{x_i}^{\nu_i} u  \vert^2   +  \vert \prescript{}{x_i}{\mathcal{D}}_{b_i}^{\nu_i} u \vert^2 \big)  d\Lambda_{d-1} + \int_{\Lambda_{d-1}}^{}  \vert u \vert^2    d\Lambda_{d-1} \Big) dx_d
\nonumber
\\
&=& \sum_{i=1}^{d-1} \Big( \int_{\Lambda_d}^{} \vert \prescript{}{a_i}{\mathcal{D}}_{x_i}^{\nu_i} u  \vert^2    d\Lambda_d +  \int_{\Lambda_d}^{}  \vert \prescript{}{x_i}{\mathcal{D}}_{b_i}^{\nu_i} u \vert^2    d\Lambda_d \Big)
+ \int_{\Lambda_d}^{}  \vert  u \vert^2     d\Lambda_d
\nonumber
\\
&=&
\sum_{i=1}^{d-1} \Big( \Vert \prescript{}{x_i}{\mathcal{D}}_{b_i}^{\nu_i}\, u\Vert_{L^2(\Lambda_d)}^2+\Vert \prescript{}{a_i}{\mathcal{D}}_{x_i}^{\nu_i}\, u\Vert_{L^2(\Lambda_d)}^2 \Big)+\Vert u \Vert_{L^2(\Lambda_d)}^2.
\nonumber
\end{eqnarray}
Therefore, \eqref{norm_Xd_2} arises from \eqref{eq23421}.
\end{proof}
\noindent  In Lemma 2.8 in \cite{li2010existence}, it is shown that if $u,v \, \in \, H^{\nu}_0(\Lambda)$ for $0<2\nu<2$ and $2\nu \neq 1$, then $\big(\prescript{}{x}{\mathcal{D}}_{b}^{2\nu} u, v\big)_{\Lambda}=\big(\prescript{}{x}{\mathcal{D}}_{b}^{\nu} u, \prescript{}{a}{\mathcal{D}}_{x}^{\nu} v\big)_{\Lambda},$ and
$\big(\prescript{}{a}{\mathcal{D}}_{x}^{2\nu} u, v\big)_{\Lambda}=\big(\prescript{}{a}{\mathcal{D}}_{x}^{\nu} u, \prescript{}{x}{\mathcal{D}}_{b}^{\nu} v\big)_{\Lambda}.$ Here, we generalize this lemma for the corresponding (1+d)-D case.
\begin{lem}
\label{lem_generalize}
If $0<2\nu_i<2$ and $2\nu_i \neq 1$ for $i=1,\cdots,d$, and $u,v \in  \mathcal{X}_d$, then $\big(\prescript{}{x_i}{\mathcal{D}}_{b_i}^{2\nu_i} u, v\big)_{\Lambda_d}=\big(\prescript{}{x_i}{\mathcal{D}}_{b_i}^{\nu_i} u, \prescript{}{a_i}{\mathcal{D}}_{x_i}^{\nu_i} v\big)_{\Lambda_d},$ and
$\big(\prescript{}{a_i}{\mathcal{D}}_{x_i}^{2\nu_i} u, v\big)_{\Lambda_d}=\big(\prescript{}{a_i}{\mathcal{D}}_{x_i}^{\nu_i} u, \prescript{}{x_i}{\mathcal{D}}_{b_i}^{\nu_i} v\big)_{\Lambda_d}.$ 
\end{lem}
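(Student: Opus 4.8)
The plan is to reduce the $(1+d)$-dimensional identities to the one-dimensional result, Lemma~2.8 of \cite{li2010existence}, by slicing in the active coordinate. Both assertions concern operators that differentiate only in the single variable $x_i$, so I would fix the index $i$, regard $x_i$ as the active variable ranging over $(a_i,b_i)$, and collect the remaining coordinates into a parameter $\hat{x}_i$ ranging over the product $\hat{\Lambda}_i$ of the other intervals. Writing $u(\cdot,\hat{x}_i)$ and $v(\cdot,\hat{x}_i)$ for the corresponding one-dimensional slices, Tonelli's theorem unfolds each $L^2(\Lambda_d)$ inner product in the statement into $\int_{\hat{\Lambda}_i}(\,\cdot\,)_{(a_i,b_i)}\,d\hat{x}_i$, an integral over $\hat{\Lambda}_i$ of one-dimensional $L^2((a_i,b_i))$ inner products of the slices.

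The first step is to establish that the slices inherit the correct regularity. By Lemma~\ref{norm_221}, membership $u\in\mathcal{X}_d$ forces $\prescript{}{a_i}{\mathcal{D}}_{x_i}^{\nu_i}u$ and $\prescript{}{x_i}{\mathcal{D}}_{b_i}^{\nu_i}u$ to lie in $L^2(\Lambda_d)$; by Tonelli this gives, for almost every $\hat{x}_i$, that the slice $u(\cdot,\hat{x}_i)$ lies in $L^2((a_i,b_i))$ together with its left and right $\nu_i$-order Riemann--Liouville derivatives, hence $u(\cdot,\hat{x}_i)\in H^{\nu_i}((a_i,b_i))$ for a.e.\ $\hat{x}_i$, and likewise for $v$. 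To upgrade this to membership in $H^{\nu_i}_0((a_i,b_i))$ --- which is what the one-dimensional lemma requires --- I would use that $\mathcal{X}_d$ is, by construction (the ``$0$'' subscripts in its definition), the closure of $C^{\infty}_0(\Lambda_d)$ with respect to $\Vert\cdot\Vert_{\mathcal{X}_d}$: choosing $u_n\in C^{\infty}_0(\Lambda_d)$ with $u_n\to u$ in $\mathcal{X}_d$ and passing to a subsequence, the slices $u_n(\cdot,\hat{x}_i)\in C^{\infty}_0((a_i,b_i))$ converge to $u(\cdot,\hat{x}_i)$ in $H^{\nu_i}((a_i,b_i))$ for a.e.\ $\hat{x}_i$, so the limiting slice lies in the closure $H^{\nu_i}_0((a_i,b_i))$.

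With the slices in $H^{\nu_i}_0((a_i,b_i))$ and $0<2\nu_i<2$, $2\nu_i\neq1$, the second step is to apply Lemma~2.8 of \cite{li2010existence} slicewise, giving for a.e.\ $\hat{x}_i$
\[
\big(\prescript{}{x_i}{\mathcal{D}}_{b_i}^{2\nu_i}u(\cdot,\hat{x}_i),\,v(\cdot,\hat{x}_i)\big)_{(a_i,b_i)}=\big(\prescript{}{x_i}{\mathcal{D}}_{b_i}^{\nu_i}u(\cdot,\hat{x}_i),\,\prescript{}{a_i}{\mathcal{D}}_{x_i}^{\nu_i}v(\cdot,\hat{x}_i)\big)_{(a_i,b_i)},
\]
together with the companion identity obtained by interchanging the left and right operators. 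The third step is then to integrate this slicewise identity over $\hat{\Lambda}_i$: the right-hand pairing is integrable in $\hat{x}_i$ (by Cauchy--Schwarz it is dominated by the $L^2(\Lambda_d)$ norms controlled by $\Vert\cdot\Vert_{\mathcal{X}_d}$), so Fubini reassembles both sides into the claimed $d$-dimensional inner products over $\Lambda_d$; the $2\nu_i$-order pairing on the left is understood in exactly the integrated slicewise (weak) sense inherited from the one-dimensional lemma.

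I expect the main obstacle to be the first step --- transferring the homogeneous boundary regularity from the global space $\mathcal{X}_d$ down to the one-dimensional slices for almost every value of the complementary variables. The mere finiteness of the slice $\nu_i$-seminorm is immediate from Tonelli, but the delicate point is that the slices lie in $H^{\nu_i}_0$ rather than only in $H^{\nu_i}$, i.e.\ that the zero-trace/compact-support condition survives the slicing; this is precisely why the characterization of $\mathcal{X}_d$ as a $C^{\infty}_0(\Lambda_d)$-closure, combined with Lemma~\ref{norm_221}, is the essential ingredient.
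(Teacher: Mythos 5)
Your proof is correct in substance but takes a genuinely different route from the paper's. The paper proves Lemma \ref{lem_generalize} by direct computation in the $d$-dimensional integral: it invokes the semigroup property $\prescript{}{a_i}{\mathcal{D}}_{x_i}^{2\nu_i}u=\prescript{}{a_i}{\mathcal{D}}_{x_i}^{\nu_i}\big(\prescript{}{a_i}{\mathcal{D}}_{x_i}^{\nu_i}u\big)$ (citing \cite{kharazmi2016petrov}), sets $\bar{u}=\prescript{}{a_i}{\mathcal{D}}_{x_i}^{\nu_i}u$, writes out the Riemann--Liouville derivative explicitly, integrates by parts once in $x_i$ (the boundary term over $\Lambda_d^i$ vanishing by the homogeneous boundary conditions), and then swaps the $s$ and $x_i$ integrations to recognize the right-sided derivative acting on $v$, the transverse variables simply riding along in $d\Lambda_d$; the one-dimensional Lemma 2.8 of \cite{li2010existence} serves only as motivation, not as an ingredient. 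You instead use that one-dimensional lemma as a black box and transfer it by slicing: Tonelli to unfold the inner products, an a.e.-subsequence argument (via Lemma \ref{norm_221} and Lemma \ref{lemaa32}) to put the slices in $H^{\nu_i}_0((a_i,b_i))$, and Fubini to reassemble. Each approach buys something: yours avoids redoing the 1-D computation and is more honest about regularity --- the paper's calculation tacitly assumes $v$ differentiable in $x_i$ and leaves the extension to general $v\in\mathcal{X}_d$ by density unstated --- while the paper's is self-contained and sidesteps your one delicate step, namely the claim that $\mathcal{X}_d$ is ``by construction'' the closure of $C^{\infty}_0(\Lambda_d)$ in $\Vert\cdot\Vert_{\mathcal{X}_d}$. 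That density is true and standard for such tensor-type spaces (truncation plus mollification), but $\mathcal{X}_d$ is defined as an intersection of vector-valued spaces with zero boundary conditions, not as a closure, so this point deserves a proof or a citation rather than an appeal to the definition; with that supplied, your slicewise argument is complete and arguably tighter than the paper's own.
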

\begin{proof}
See Appendix.
\end{proof}

\noindent Additionally, in the light of Lemma \ref{lemaa32}, we can prove that 
\begin{equation}
\label{equiv_space}
\vert \big(\prescript{}{a_d}{\mathcal{D}}_{x_d}^{\nu_d} u, \prescript{}{x_d}{\mathcal{D}}_{b_d}^{\nu_d} v\big)_{\Lambda_d} \vert \equiv  \vert u \vert_{{^c}H^{\nu_d} \Big((a_d,b_d); L^2(\Lambda_{d-1}) \Big)} \, \vert v \vert_{{^c}H^{\nu_d} \Big((a_d,b_d); L^2(\Lambda_{d-1}) \Big)},
\end{equation}
and similarly
\begin{equation}
\label{equiv_space2}
\vert \big(\prescript{}{x_d}{\mathcal{D}}_{b_d}^{\nu_d} u, \prescript{}{a_d}{\mathcal{D}}_{x_d}^{\nu_d} v\big)_{\Lambda_d} \vert \equiv  \vert u \vert_{{^c}H^{\nu_d} \Big((a_d,b_d); L^2(\Lambda_{d-1}) \Big)} \, \vert v \vert_{{^c}H^{\nu_d} \Big((a_d,b_d); L^2(\Lambda_{d-1}) \Big)}.
\end{equation} 
Next, we study the property of the fractional time derivative in the following lemmas. 
\begin{lem}
	\label{lemma_ehsan}
	If  $0<2\tau<1$ $(1 <2\tau < 2)$ and $u,v \, \in \, H^{\tau}(I)$, when $u\vert_{t=0}(=\frac{du}{dt}\vert_{t=0})=0$, then $\big(\prescript{}{0}{\mathcal{D}}_{t}^{2\tau} u, v\big)_I=\big(\prescript{}{0}{\mathcal{D}}_{t}^{\tau} u, \prescript{}{t}{\mathcal{D}}_{T}^{\tau} v\big)_I$. 
\end{lem}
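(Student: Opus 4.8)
The plan is to factor the order-$2\tau$ left derivative into two order-$\tau$ derivatives and then transfer one of them onto $v$ through a fractional integration by parts. Throughout, I would first establish the identity for $u \in \prescript{}{0}C^{\infty}(I)$ satisfying the stated initial conditions and for smooth $v \in C^{\infty}(\bar{I})$ (carrying no boundary constraint, as befits the Petrov--Galerkin test space), and then extend to all $u,v \in H^{\tau}(I)$ by density, invoking the norm equivalences \eqref{equiv_time_1} to guarantee that both sides depend continuously on $u$ and $v$ in the $H^{\tau}(I)$ topology.

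Step one (composition). I would show that under the stated initial conditions $\prescript{}{0}{\mathcal{D}}_{t}^{2\tau} u = \prescript{}{0}{\mathcal{D}}_{t}^{\tau}\big(\prescript{}{0}{\mathcal{D}}_{t}^{\tau} u\big)$. By the Riemann--Liouville index law, $\prescript{}{0}{\mathcal{D}}_{t}^{\tau}\prescript{}{0}{\mathcal{D}}_{t}^{\tau}u = \prescript{}{0}{\mathcal{D}}_{t}^{2\tau}u - \big(\prescript{}{0}{\mathcal{I}}_{t}^{1-\tau}u\big)(0^{+})\,\tfrac{t^{-\tau-1}}{\Gamma(-\tau)}$, where the correction term is governed by $\prescript{}{0}{\mathcal{D}}_{t}^{\tau-1}u = \prescript{}{0}{\mathcal{I}}_{t}^{1-\tau}u$; it drops out because $\big(\prescript{}{0}{\mathcal{I}}_{t}^{1-\tau}u\big)(0^{+})=0$. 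The conditions $u|_{t=0}=0$ (when $0<2\tau<1$) and $u|_{t=0}=\frac{du}{dt}|_{t=0}=0$ (when $1<2\tau<2$) are exactly what places $u$ in the left fractional Sobolev space for which $\prescript{}{0}{\mathcal{D}}_{t}^{2\tau}u \in L^{2}(I)$ and the Riemann--Liouville and Caputo derivatives coincide, so that the factorization is legitimate and $w := \prescript{}{0}{\mathcal{D}}_{t}^{\tau}u \in L^{2}(I)$.

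Step two (integration by parts). With $w = \prescript{}{0}{\mathcal{D}}_{t}^{\tau} u$, I would prove $\big(\prescript{}{0}{\mathcal{D}}_{t}^{\tau}w, v\big)_{I} = \big(w, \prescript{}{t}{\mathcal{D}}_{T}^{\tau}v\big)_{I}$. Writing $\prescript{}{0}{\mathcal{D}}_{t}^{\tau}w = \frac{d}{dt}\,\prescript{}{0}{\mathcal{I}}_{t}^{1-\tau}w$ and setting $W = \prescript{}{0}{\mathcal{I}}_{t}^{1-\tau}w$, an ordinary integration by parts gives $\big(\prescript{}{0}{\mathcal{D}}_{t}^{\tau}w, v\big)_{I} = W(T)v(T) - W(0)v(0) - (W, v')_{I}$. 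Here $W(0)=0$, while the Fubini (adjoint) identity $(\prescript{}{0}{\mathcal{I}}_{t}^{1-\tau}w, v')_{I} = (w, \prescript{}{t}{\mathcal{I}}_{T}^{1-\tau}v')_{I}$ together with $\prescript{}{t}{\mathcal{I}}_{T}^{1-\tau}v' = -\prescript{}{t}{\mathcal{D}}_{T}^{\tau}v + \tfrac{(T-t)^{-\tau}}{\Gamma(1-\tau)}v(T)$ converts the remaining integral into $-\big(w,\prescript{}{t}{\mathcal{D}}_{T}^{\tau}v\big)_{I} + v(T)\,W(T)$, using $\int_{0}^{T} w(t)\tfrac{(T-t)^{-\tau}}{\Gamma(1-\tau)}\,dt = W(T)$. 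The two $W(T)v(T)$ contributions cancel, leaving the claimed identity; combining with Step one yields $\big(\prescript{}{0}{\mathcal{D}}_{t}^{2\tau}u, v\big)_{I} = \big(\prescript{}{0}{\mathcal{D}}_{t}^{\tau}u, \prescript{}{t}{\mathcal{D}}_{T}^{\tau}v\big)_{I}$.

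The main obstacle is the bookkeeping of the boundary terms in Step two: because this is a Petrov--Galerkin formulation, $v$ carries no boundary condition, so, unlike the two-sided spatial situation of Lemma~\ref{lem_generalize}, one cannot simply invoke compact support to annihilate endpoint contributions. One must instead verify explicitly that the two nonvanishing $t=T$ terms cancel, which is the only place where the one-sided (initial-value) structure of the time variable is genuinely used. A secondary, more routine difficulty is the density argument passing from smooth functions to general $u,v \in H^{\tau}(I)$; here I would rely on \eqref{equiv_time_1} and on the boundedness of the fractional operators between the relevant $L^{2}$-based spaces to pass to the limit on both sides.
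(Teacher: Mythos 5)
Your attempt cannot be matched line-by-line against the paper, because the paper does not actually prove Lemma~\ref{lemma_ehsan}: its entire proof is the citation ``See \cite{kharazmi2016petrov}.'' What you have supplied is a correct, self-contained version of the standard argument that such references use, and your computations check out. Specifically: the index-law correction $\prescript{}{0}{\mathcal{D}}_{t}^{\tau}\prescript{}{0}{\mathcal{D}}_{t}^{\tau}u=\prescript{}{0}{\mathcal{D}}_{t}^{2\tau}u-\big(\prescript{}{0}{\mathcal{I}}_{t}^{1-\tau}u\big)(0^{+})\,t^{-\tau-1}/\Gamma(-\tau)$ is the right one for $\tau\in(0,1)$; the right-sided Caputo--Riemann--Liouville relation $\prescript{}{t}{\mathcal{I}}_{T}^{1-\tau}v'=-\prescript{}{t}{\mathcal{D}}_{T}^{\tau}v+v(T)(T-t)^{-\tau}/\Gamma(1-\tau)$ is stated with the correct signs; and the identity $\tfrac{1}{\Gamma(1-\tau)}\int_{0}^{T}w(t)(T-t)^{-\tau}\,dt=W(T)$ makes the two $W(T)v(T)$ terms cancel exactly, which is indeed the crux, since $v$ carries no condition at $t=T$ in the lemma as stated. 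One small remark on emphasis: for bounded smooth $u$ the correction term $\big(\prescript{}{0}{\mathcal{I}}_{t}^{1-\tau}u\big)(0^{+})$ vanishes automatically (it is $O(t^{1-\tau})$), so the initial conditions $u|_{t=0}=0$ (and $u'|_{t=0}=0$ when $1<2\tau<2$) are not really what kills it on the smooth class; their genuine role is to make the Riemann--Liouville and Caputo derivatives coincide and to place $u$ in the left closure space so that the smooth class you compute on is actually dense in the right topology --- which your Step one partially acknowledges.

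The only genuine soft spot is the density step. For $\tau>\frac{1}{2}$ (the case $1<2\tau<2$), $\prescript{}{t}{\mathcal{D}}_{T}^{\tau}v$ fails to be in $L^{2}(I)$ when $v(T)\neq 0$ because of the $(T-t)^{-\tau}$ singularity, and $(\prescript{}{0}{\mathcal{D}}_{t}^{2\tau}u,v)_{I}$ is only a duality pairing for $u$ merely in $H^{\tau}(I)$; so continuity of both sides ``in the $H^{\tau}(I)$ topology'' via \eqref{equiv_time_1} does not hold for arbitrary $v\in H^{\tau}(I)$. The limit passage needs either $v$ in the closure with $v|_{t=T}=0$ (and $\partial_t v|_{t=T}=0$ in the higher case) --- which is exactly the setting $\prescript{r}{0}H^{\tau}$, respectively $\prescript{r}{0,0}H^{\tau}$, in which the paper actually invokes the lemma through Lemma~\ref{lem_generalize_2} --- or an interpretation of the left side by duality against ${^r}H^{2\tau}$. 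Likewise $W(0^{+})=0$ for $w\in L^{2}$ uses $1-\tau>\frac{1}{2}$, so for $\tau>\frac{1}{2}$ you should lean, as you implicitly do, on the vanishing order $u=O(t^{2})$ of the smooth approximants rather than on $w\in L^{2}$ alone. These are statements about where the lemma's loose hypotheses need tightening rather than errors in your formal computation, which is sound and, unlike the paper, actually constitutes a proof.
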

\begin{proof}
	See \cite{kharazmi2016petrov}.
\end{proof}
\noindent Lemma \ref{lem_generalize} and \ref{lemma_ehsan} will help us obtain the corresponding weak form of \eqref{strongform}. 
Let $2\tau \in (0,1)$ and $\Omega=I \times \Lambda_d$. We define
\begin{equation}
\prescript{l}{0}H^{\tau} \Big(I; L^2(\Lambda_d) \Big) := \Big{\{} u \,|\, \Vert u(t,\cdot) \Vert_{L^2(\Lambda_d)} \in H^{\tau}(I), u\vert_{t=0}=u\vert_{x_i=a_i}=u\vert_{x_i=b_i}=0,\, i=1,\cdots,d  \Big{\}},
\end{equation}
which is equipped with the norm $\Vert u \Vert_{\prescript{l}{}H^{\tau}(I; L^2(\Lambda_d))}$.
For real $0 < 2\tau < 1$, $\prescript{l}{}H^{\tau}(I; L^2(\Lambda_d))$ is associated with the norm $\Vert \cdot \Vert_{\prescript{l}{}H^{\tau}(I; L^2(\Lambda_d))}$, which is defined as $\Vert u \Vert_{\prescript{l}{}H^{\tau}(I; L^2(\Lambda_d))} = \Big{\Vert} \, \Vert u(t,\cdot) \Vert_{L^2(\Lambda_d)}\, \Big{\Vert}_{{^l}H^{\tau}(I)}$. Therefore, we have
\begin{eqnarray}
\label{norm_222}
\Vert u \Vert_{\prescript{l}{}H^{\tau}(I; L^2(\Lambda_d))} &=& \Big{\Vert} \, \Vert u(t,\cdot) \Vert_{L^2(\Lambda_d)}\, \Big{\Vert}_{{^l}H^{\tau}(I)}
\nonumber
\\
&=&
\bigg{\{} \int_{0}^{T}\bigg( \big{(}\int_{\Lambda_d}^{}  \vert \prescript{}{0}{\mathcal{D}}_{t}^{\tau} u \vert^2  \, d\Lambda_d \big{)}^{\frac{1}{2}} \bigg)^{2}   \,dt + \int_{0}^{T} \int_{\Lambda_d}^{}  \vert u \vert^2  \, d\Lambda_d   \,dt\bigg{\}}^{\frac{1}{2}} 
\nonumber
\\
&=& 
\Big(\Vert \prescript{}{0}{\mathcal{D}}_{t}^{\tau}\, (u)\Vert_{L^2(\Omega)}^2 + \Vert u\Vert_{L^2(\Omega)}^2 \Big)^{\frac{1}{2}}.
\end{eqnarray}
\noindent Similarly, we define
\begin{equation}
\prescript{r}{0}H^{\tau} \Big(I; L^2(\Lambda_d) \Big) := \Big{\{} v \,|\, \Vert v(t,\cdot) \Vert_{L^2(\Lambda_d)} \in H^{\tau}(I), v\vert_{t=T}=v\vert_{x_i=a_i}=v\vert_{x_i=b_i}=0,\, i =1,\cdots,d  \Big{\}},
\end{equation}
which is equipped with the norm $\Vert u \Vert_{\prescript{r}{}H^{\tau}(I; L^2(\Lambda_d))}$. Following \eqref{norm_222},
\begin{eqnarray}
\Vert u \Vert_{\prescript{r}{}H^{\tau}(I; L^2(\Lambda_d))} &=& \Big{\Vert} \, \Vert u(t,\cdot) \Vert_{L^2(\Lambda_d)}\, \Big{\Vert}_{{^r}H^{\tau}(I)}
\nonumber
\\
&=& \Big(\Vert \prescript{}{t}{\mathcal{D}}_{T}^{\tau}\, (u)\Vert_{L^2(\Omega)}^2+\Vert u\Vert_{L^2(\Omega)}^2\Big)^{\frac{1}{2}}.
\end{eqnarray}
\begin{lem}
	\label{norm_223}
For $u\in \prescript{l}{0}H^{\tau}(I; L^2(\Lambda_d))$ and $2\tau \in (0,1),$ $\vert( \prescript{}{0}{\mathcal{D}}_{t}^{\tau} u, \prescript{}{t}{\mathcal{D}}_{T}^{\tau} v)_{\Omega} \vert \equiv \Vert u \Vert_{\prescript{l}{}H^{\tau}(I; L^2(\Lambda_d))} \, \Vert v \Vert_{\prescript{r}{}H^{\tau}(I; L^2(\Lambda_d))} $ $\forall v \in \prescript{r}{0}H^{\tau}(I; L^2(\Lambda_d))$ .
\end{lem}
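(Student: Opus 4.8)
The relation ``$\equiv$'' here asserts that the bilinear pairing $b(u,v):=(\prescript{}{0}{\mathcal{D}}_{t}^{\tau} u,\,\prescript{}{t}{\mathcal{D}}_{T}^{\tau} v)_{\Omega}$ is simultaneously bounded above and below by the product of norms; concretely, the plan is to prove (i) the continuity estimate $|b(u,v)|\le \Vert u\Vert_{\prescript{l}{}H^{\tau}(I;L^2(\Lambda_d))}\,\Vert v\Vert_{\prescript{r}{}H^{\tau}(I;L^2(\Lambda_d))}$ for all admissible $u,v$, and (ii) an inf--sup lower bound $\sup_{v\neq 0}|b(u,v)|/\Vert v\Vert_{\prescript{r}{}H^{\tau}}\ge \beta\,\Vert u\Vert_{\prescript{l}{}H^{\tau}}$ for some $\beta>0$. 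The step common to both is to observe that $\prescript{}{0}{\mathcal{D}}_{t}^{\tau}$ and $\prescript{}{t}{\mathcal{D}}_{T}^{\tau}$ act only in the temporal fiber, so that by Fubini $b(u,v)=\int_{\Lambda_d}\big(\prescript{}{0}{\mathcal{D}}_{t}^{\tau} u(\cdot,x),\,\prescript{}{t}{\mathcal{D}}_{T}^{\tau} v(\cdot,x)\big)_{I}\,dx$, reducing the whole statement to the one--dimensional time result \eqref{equiv_time_1}.

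For the continuity bound (i), I would apply the Cauchy--Schwarz inequality first in $t$ inside the fiber integral and then in $x$ over $\Lambda_d$, giving $|b(u,v)|\le \Vert \prescript{}{0}{\mathcal{D}}_{t}^{\tau} u\Vert_{L^2(\Omega)}\,\Vert \prescript{}{t}{\mathcal{D}}_{T}^{\tau} v\Vert_{L^2(\Omega)}$, which is dominated by the product of the full norms \eqref{norm_222} since each derivative seminorm is bounded by the corresponding full norm. To move freely between full norms and pure derivative seminorms I would also record a fractional Poincar\'e inequality on the zero--trace spaces: because $u=\prescript{}{0}{\mathcal{I}}_{t}^{\tau}\prescript{}{0}{\mathcal{D}}_{t}^{\tau}u$ and the fractional integral \eqref{eq2_3} is bounded on $L^2(I)$, one has $\Vert u\Vert_{L^2(\Omega)}\le C\,\Vert \prescript{}{0}{\mathcal{D}}_{t}^{\tau} u\Vert_{L^2(\Omega)}$, so that $\Vert u\Vert_{\prescript{l}{}H^{\tau}(I;L^2(\Lambda_d))}\equiv \Vert \prescript{}{0}{\mathcal{D}}_{t}^{\tau} u\Vert_{L^2(\Omega)}$, and symmetrically for $v$ on the right space.

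The lower bound (ii) is where the real work lies. Since $2\tau\in(0,1)$, i.e. $\tau<\frac{1}{2}$, the fractional spaces do not see endpoint traces, so $\prescript{l}{0}H^{\tau}(I;L^2(\Lambda_d))$ and $\prescript{r}{0}H^{\tau}(I;L^2(\Lambda_d))$ coincide as sets with equivalent norms; this lets me take the test function $v=u$. Applying \eqref{equiv_time_1} fiber--wise gives $\big|\big(\prescript{}{0}{\mathcal{D}}_{t}^{\tau} u(\cdot,x),\prescript{}{t}{\mathcal{D}}_{T}^{\tau} u(\cdot,x)\big)_{I}\big|\equiv \Vert \prescript{}{0}{\mathcal{D}}_{t}^{\tau} u(\cdot,x)\Vert_{L^2(I)}\,\Vert \prescript{}{t}{\mathcal{D}}_{T}^{\tau} u(\cdot,x)\Vert_{L^2(I)}$, and by \eqref{equivalent} the left and right temporal seminorms are equivalent, so the fiber pairing is bounded below by $c\,\Vert \prescript{}{0}{\mathcal{D}}_{t}^{\tau} u(\cdot,x)\Vert_{L^2(I)}^2$. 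The subtle point is that integrating this estimate over $\Lambda_d$ requires the fiber pairing to be genuinely nonnegative, so that $|b(u,u)|=|\int_{\Lambda_d}(\cdots)_I\,dx|$ suffers no cancellation in $x$; this is exactly guaranteed by the coercivity identity underlying \eqref{equiv_time_1}, whose constant $\cos(\pi\tau)$ is strictly positive precisely because $2\tau\in(0,1)$. Integrating then yields $|b(u,u)|\ge c\,\Vert \prescript{}{0}{\mathcal{D}}_{t}^{\tau} u\Vert_{L^2(\Omega)}^2$, and combining with the Poincar\'e equivalence of step (i) and with $\Vert u\Vert_{\prescript{r}{}H^{\tau}}\equiv \Vert u\Vert_{\prescript{l}{}H^{\tau}}$ produces $|b(u,u)|\ge \beta\,\Vert u\Vert_{\prescript{l}{}H^{\tau}}\,\Vert u\Vert_{\prescript{r}{}H^{\tau}}$, which is the sought inf--sup bound.

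I expect the main obstacle to be exactly this lower bound: ruling out sign cancellation when passing from the fiber--wise equivalence to the space--time integral, which I resolve through the positivity afforded by the sign condition $2\tau\in(0,1)$, together with the identification of the left and right one--sided spaces that legitimizes the choice $v=u$. The continuity half and the Fubini reduction are routine by comparison.
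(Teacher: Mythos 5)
Your proposal is correct, and on the lower-bound half it takes a genuinely different (and sounder) route than the paper. The upper-bound half coincides with the paper's: H\"older applied fiber-wise in $t$ and then in $x$, followed by dominating the derivative seminorms by the full norms of \eqref{norm_222}; your fractional Poincar\'e step ($u=\prescript{}{0}{\mathcal{I}}_{t}^{\tau}\prescript{}{0}{\mathcal{D}}_{t}^{\tau}u$ plus $L^2$-boundedness of the fractional integral) is extra but harmless, and it makes explicit the passage between seminorms and full norms that the paper leaves implicit. For the lower bound the paper does \emph{not} argue as you do: it moves the absolute value \emph{inside} the time integral, invokes \eqref{equivalent} to assert a reverse-H\"older-type inequality $\int_0^T \vert \prescript{}{0}{\mathcal{D}}_{t}^{\tau}u\,\prescript{}{t}{\mathcal{D}}_{T}^{\tau}v\vert\,dt \ge \tilde{\beta}_1 \Vert \prescript{}{0}{\mathcal{D}}_{t}^{\tau}u\Vert_{L^2(I)}\Vert \prescript{}{t}{\mathcal{D}}_{T}^{\tau}v\Vert_{L^2(I)}$ for \emph{arbitrary} pairs $(u,v)$, and then integrates over $\Lambda_d$ with unexplained constants $\bar{\beta}_1,\bar{\beta}_2$. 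As literally written this is not justified: $\vert\int fg\vert \neq \int \vert fg\vert$ in general, and a two-sided bound valid for all pairs is impossible, since for fixed $u\neq 0$ the continuous functional $v\mapsto (\prescript{}{0}{\mathcal{D}}_{t}^{\tau}u,\prescript{}{t}{\mathcal{D}}_{T}^{\tau}v)_{\Omega}$ vanishes on a nontrivial closed hyperplane. Your reading of ``$\equiv$'' as continuity plus an inf--sup lower bound is the only tenable one, and your mechanism --- take $v=u$, legitimate because for $\tau<\tfrac{1}{2}$ the left and right zero-trace spaces coincide, then use the $\cos(\pi\tau)>0$ coercivity identity underlying \eqref{equiv_time_1} fiber-wise so that the $x$-integration involves only nonnegative quantities, and finish with the seminorm equivalences \eqref{equivalent} --- makes explicit exactly the positivity that the paper's proof silently relies on. What the paper's version buys is brevity and a symmetric-looking two-sided statement; what yours buys is a valid lower bound in a form (the inf--sup estimate) that is precisely what is consumed downstream in Theorems \ref{inf_sup_lem} and \ref{inf_sup_d_lem}.
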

\begin{proof} 
\begin{eqnarray}
&&\vert( \prescript{}{0}{\mathcal{D}}_{t}^{\tau} u, \prescript{}{t}{\mathcal{D}}_{T}^{\tau} v)_{\Omega}\vert =
\Big( \int_{\Lambda_d}^{} \int_{0}^{T}  \vert \prescript{}{0}{\mathcal{D}}_{t}^{\tau} u \, \prescript{}{t}{\mathcal{D}}_{T}^{\tau} v \vert\, dt d\Lambda_d \Big)
\end{eqnarray}
By H\"{o}lder inequality,
\begin{eqnarray}
\vert( \prescript{}{0}{\mathcal{D}}_{t}^{\tau} u, \prescript{}{t}{\mathcal{D}}_{T}^{\tau} v)_{\Omega}\vert &\leq& \Big( \int_{\Lambda_d}^{} \int_{0}^{T}  \vert \prescript{}{0}{\mathcal{D}}_{t}^{\tau} u \vert^2\, dt d\Lambda_d \Big)^{\frac{1}{2}} \, \Big( \int_{\Lambda_d}^{} \int_{0}^{T}  \vert \prescript{}{t}{\mathcal{D}}_{T}^{\tau} v \vert^2\, dt d\Lambda_d \Big)^{\frac{1}{2}} 
\nonumber
\\
\nonumber
&\leq&
\Big( \int_{\Lambda_d}^{} \int_{0}^{T}  \vert \prescript{}{0}{\mathcal{D}}_{t}^{\tau} u \vert^2\, dt d\Lambda_d + \int_{\Lambda_d}^{} \int_{0}^{T}  \vert u \vert^2\, dt d\Lambda_d \Big)^{\frac{1}{2}} \, \Big( \int_{\Lambda_d}^{} \int_{0}^{T}  \vert \prescript{}{t}{\mathcal{D}}_{T}^{\tau} v \vert^2\, dt d\Lambda_d + \int_{\Lambda_d}^{} \int_{0}^{T}  \vert  v \vert^2\, dt d\Lambda_d \Big)^{\frac{1}{2}} 
\\
\nonumber
&=& \Big( \Vert \prescript{}{0}{\mathcal{D}}_{t}^{\tau} u \Vert_{L^2(\Omega)}^2+\Vert u \Vert_{L^2(\Omega)}^2\Big)^{\frac{1}{2}} \, \Big( \Vert \prescript{}{t}{\mathcal{D}}_{T}^{\tau} v \Vert_{L^2(\Omega)}^2+ \Vert  v \Vert_{L^2(\Omega)}^2\Big)^{\frac{1}{2}}
\\
\nonumber
&=& \Vert u \Vert_{\prescript{l}{}H^{\tau}(I; L^2(\Lambda_d))} \, \Vert v \Vert_{\prescript{r}{}H^{\tau}(I; L^2(\Lambda_d))}.
\end{eqnarray}
Besides, recalling from \eqref{equivalent} that 	
\begin{eqnarray}
\vert( \prescript{}{0}{\mathcal{D}}_{t}^{\tau} u, \prescript{}{t}{\mathcal{D}}_{T}^{\tau} v)_{I}\vert&=&\int_{0}^{T}  \vert \prescript{}{0}{\mathcal{D}}_{t}^{\tau} u \, \prescript{}{t}{\mathcal{D}}_{T}^{\tau} v \vert\, dt
\nonumber
\\
&\geq& \tilde{\beta}_1
\big(\int_{0}^{T}  \vert \prescript{}{0}{\mathcal{D}}_{t}^{\tau} u \vert^2 dt \big)^{\frac{1}{2}} \, \big(\int_{0}^{T} \vert \prescript{}{t}{\mathcal{D}}_{T}^{\tau} v \vert^2\, dt)^{\frac{1}{2}}
\geq C_1 \,\tilde{\beta}_1 \Vert u \Vert_{{^l}H^{s}(I)} \Vert v \Vert_{{^r}H^{s}(I)},
\end{eqnarray}  
where $0<\tilde{\beta}_1, \, C_1 \leq 1$. Therefore,
\begin{eqnarray}
\vert( \prescript{}{0}{\mathcal{D}}_{t}^{\tau} u, \prescript{}{t}{\mathcal{D}}_{T}^{\tau} v)_{\Omega} \vert
&=&
 \int_{\Lambda_d}^{} \int_{0}^{T}  \vert \prescript{}{0}{\mathcal{D}}_{t}^{\tau} u \, \prescript{}{t}{\mathcal{D}}_{T}^{\tau} v \vert\, dt \, d\Lambda_d  
\nonumber
\\
&\geq& \bar{\beta}_1 \, \bar{\beta}_2 \big(\int_{\Lambda_d}^{}  \int_{0}^{T}  \vert \prescript{}{0}{\mathcal{D}}_{t}^{\tau} u \vert^2 dt d\Lambda_d \big)^{\frac{1}{2}}\, \big(\int_{\Lambda_d}^{} \int_{0}^{T} \vert \prescript{}{t}{\mathcal{D}}_{T}^{\tau} v \vert^2\, dt \, \Lambda_d\big)^{\frac{1}{2}}
\nonumber
\\
&\geq&
\bar{\beta}_1 \, \bar{\beta}_2 \, C_2 \Vert u \Vert_{{^l}H^{s}(I)} \Vert v \Vert_{{^r}H^{s}(I)},
\end{eqnarray}
where $\bar{\beta}_1$, $\bar{\beta}_2$, and $C_2 \in (0,1]$. 
\end{proof}


\begin{lem}
\label{lem_generalize_2}
If  $0<2\tau<2$, $2\tau\neq 1$ and $u \, \in \, \prescript{l}{0}H^{\tau}(I; L^2(\Lambda_d))$, then $$\big(\prescript{}{0}{\mathcal{D}}_{t}^{2\tau} u, v\big)_{\Omega}=\big(\prescript{}{0}{\mathcal{D}}_{t}^{\tau} u, \prescript{}{t}{\mathcal{D}}_{T}^{\tau} v\big)_{\Omega} \quad \forall v \in \prescript{r}{0}H^{\tau}(I; L^2(\Lambda_d)).$$
\end{lem}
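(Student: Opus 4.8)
The plan is to reduce this $(1+d)$-dimensional identity to the purely temporal one-dimensional statement of Lemma~\ref{lemma_ehsan} by freezing the spatial variables and invoking Fubini's theorem. The decisive structural observation is that the operators $\prescript{}{0}{\mathcal{D}}_t^{2\tau}$, $\prescript{}{0}{\mathcal{D}}_t^{\tau}$, and $\prescript{}{t}{\mathcal{D}}_T^{\tau}$ all differentiate and integrate in the time variable alone, so they commute with evaluation at a fixed spatial point $x=(x_1,\dots,x_d)\in\Lambda_d$; that is, for a.e.\ $x$ one has $(\prescript{}{0}{\mathcal{D}}_t^{2\tau}u)(\cdot,x)=\prescript{}{0}{\mathcal{D}}_t^{2\tau}\big(u(\cdot,x)\big)$ as functions of $t$, and likewise for the first-order fractional operators applied to $u$ and $v$.

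First I would rewrite the left-hand pairing as an iterated integral,
\[
\big(\prescript{}{0}{\mathcal{D}}_t^{2\tau}u,\,v\big)_{\Omega}
=\int_{\Lambda_d}\Big(\int_0^T \prescript{}{0}{\mathcal{D}}_t^{2\tau}u(t,x)\,v(t,x)\,dt\Big)\,d\Lambda_d,
\]
the interchange being justified by Fubini once the integrand is shown to be absolutely integrable over $\Omega$; this integrability follows from $u\in\prescript{l}{0}H^{\tau}(I;L^2(\Lambda_d))$ and $v\in\prescript{r}{0}H^{\tau}(I;L^2(\Lambda_d))$ together with the pairing estimate already recorded in Lemma~\ref{norm_223}.

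Next, for almost every fixed $x\in\Lambda_d$ I would verify that the time-slices satisfy the hypotheses of Lemma~\ref{lemma_ehsan}: from the finiteness of $\Vert \prescript{}{0}{\mathcal{D}}_t^{\tau}u\Vert_{L^2(\Omega)}$ and $\Vert \prescript{}{t}{\mathcal{D}}_T^{\tau}v\Vert_{L^2(\Omega)}$ (cf.\ \eqref{norm_222}) a Fubini argument yields $u(\cdot,x)\in H^{\tau}(I)$ and $v(\cdot,x)\in H^{\tau}(I)$ for a.e.\ $x$, while the defining conditions of the spaces $\prescript{l}{0}H^{\tau}$ and $\prescript{r}{0}H^{\tau}$ supply the vanishing traces $u(0,x)=0$ (together with $\partial_t u(0,x)=0$ in the range $1<2\tau<2$) and $v(T,x)=0$. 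Applying Lemma~\ref{lemma_ehsan} slice-by-slice then converts the inner temporal integral,
\[
\int_0^T \prescript{}{0}{\mathcal{D}}_t^{2\tau}u(t,x)\,v(t,x)\,dt
=\int_0^T \prescript{}{0}{\mathcal{D}}_t^{\tau}u(t,x)\,\prescript{}{t}{\mathcal{D}}_T^{\tau}v(t,x)\,dt,
\]
and integrating this identity over $\Lambda_d$ (again by Fubini) reassembles the right-hand pairing $(\prescript{}{0}{\mathcal{D}}_t^{\tau}u,\prescript{}{t}{\mathcal{D}}_T^{\tau}v)_{\Omega}$, which is the claim.

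I expect the main obstacle to be the measure-theoretic bookkeeping of this slice-by-slice reduction rather than any hard inequality: one must confirm that the time-fractional derivatives genuinely commute with restriction to a.e.\ spatial slice and that the slices inherit both the $H^{\tau}(I)$ regularity and the correct vanishing traces at $t=0$, with particular care in the regime $1<2\tau<2$, where the second relation of Lemma~\ref{lemma_ehsan} additionally requires the initial-velocity trace to vanish almost everywhere in $\Lambda_d$. The Fubini interchanges themselves are routine once integrability is certified through the norm identity \eqref{norm_222} and the duality bound of Lemma~\ref{norm_223}.
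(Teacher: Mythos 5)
Your proposal is correct and follows essentially the same route as the paper's own proof, which likewise writes the pairing as an iterated integral, applies Lemma \ref{lemma_ehsan} to the inner time integral slice-by-slice, and reassembles via Fubini. You in fact supply the measure-theoretic justifications (a.e.\ slice regularity, vanishing traces, and the extra condition $\partial_t u|_{t=0}=0$ needed when $1<2\tau<2$, which properly belongs to the space $\prescript{l}{0,0}H^{\tau}$) that the paper's three-line proof leaves implicit.
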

\begin{proof}
Following Lemma \ref{lemma_ehsan}, 
\begin{eqnarray}
\big(\prescript{}{0}{\mathcal{D}}_{t}^{2\tau} u, v\big)_{\Omega}&=&\int_{0}^{T} \int_{\Lambda_d}^{}\, \prescript{}{0}{\mathcal{D}}_{t}^{2\tau} u \, v   \, d\Lambda_d \, dt
=\int_{\Lambda_d}^{}\int_{0}^{T} \, \prescript{}{0}{\mathcal{D}}_{t}^{\tau} u \, \prescript{}{t}{\mathcal{D}}_{T}^{\tau} v   \, d\Lambda_d \, dt
\nonumber
\\
&=&
\big(\prescript{}{0}{\mathcal{D}}_{t}^{\tau} u, \prescript{}{t}{\mathcal{D}}_{T}^{\tau} v\big)_{\Omega}.
\end{eqnarray}
\end{proof}

\subsection{\textbf{Solution and Test Function Spaces}}
\label{Solution and Test Function Spaces}
For $2\tau \in (0,1)$ and $2\nu_i \in (1,2)$, we define the solution space
\begin{equation}
\mathcal{B}^{\tau,\nu_1,\cdots,\nu_d} (\Omega):= \prescript{l}{0}H^{\tau}\Big(I; L^2(\Lambda_d) \Big) \cap L^2(I; \mathcal{X}_d),
\end{equation}
endowed with the norm
\begin{equation}
\label{def_2222}
\Vert u \Vert_{\mathcal{B}^{\tau,\nu_1,\cdots,\nu_d}(\Omega)} = \Big{\{}\Vert u \Vert_{\prescript{l}{}H^{\tau}(I; L^2(\Lambda_d))}^2 + \Vert u \Vert_{L^2(I; \mathcal{X}_d)}^2 \Big{\}}^{\frac{1}{2}},
\end{equation}
where due to \eqref{norm_Xd} and Lemma \ref{norm_221},
\begin{eqnarray}
\label{norm_2221}
\Vert u \Vert_{L^2(I; \mathcal{X}_d)}&=&  \Big{\Vert} \, \Vert u(t,.) \Vert_{\mathcal{X}_d}\,\Big{\Vert}_{L^2(I)}.
\nonumber
\\ &=& \Big{\{}  \Vert u \Vert_{L^2(\Omega)}^2 + \sum_{i=1}^{d} \big( \Vert \prescript{}{x_i}{\mathcal{D}}_{b_i}^{\nu_i}\, (u)\Vert_{L^2(\Omega)}^2+\Vert \prescript{}{a_i}{\mathcal{D}}_{x_i}^{\nu_i}\, (u)\Vert_{L^2(\Omega)}^2 \big) \Big{\}}^{\frac{1}{2}}.  \quad \quad
\end{eqnarray}
Therefore, by \eqref{norm_222} and \eqref{norm_2221},
\begin{equation}
\Vert u \Vert_{\mathcal{B}^{\tau,\nu_1,\cdots,\nu_d}(\Omega)} = \Big{\{}  \Vert u \Vert_{L^2(\Omega)}^2 + \Vert \prescript{}{0}{\mathcal{D}}_{t}^{\tau}\, (u)\Vert_{L^2(\Omega)}^2 + \sum_{i=1}^{d} \big( \Vert \prescript{}{x_i}{\mathcal{D}}_{b_i}^{\nu_i}\, (u)\Vert_{L^2(\Omega)}^2+\Vert \prescript{}{a_i}{\mathcal{D}}_{x_i}^{\nu_i}\, (u)\Vert_{L^2(\Omega)}^2 \big) \Big{\}}^{\frac{1}{2}}. 
\end{equation}
Likewise, we define the test space
\begin{equation}
\mathfrak{B}^{\tau,\nu_1,\cdots,\nu_d} (\Omega) := \prescript{r}{0}H^{\tau}\Big(I; L^2(\Lambda_d)\Big) \cap L^2(I; \mathcal{X}_d),
\end{equation}
endowed with the norm
\begin{eqnarray}
\Vert v \Vert_{\mathfrak{B}^{\tau,\nu_1,\cdots,\nu_d}(\Omega)} &=& \Big{\{}\Vert v \Vert_{\prescript{r}{}H^{\tau}(I; L^2(\Lambda_d))}^2  + \Vert v \Vert_{ L^2(I; \mathcal{X}_d)}^2 \Big{\}}^{\frac{1}{2}}.
\nonumber
\\
&=& \Big{\{}  \Vert v \Vert_{L^2(\Omega)}^2 + \Vert \prescript{}{t}{\mathcal{D}}_{T}^{\tau}\, (v)\Vert_{L^2(\Omega)}^2 + \sum_{i=1}^{d} \big( \Vert \prescript{}{x_i}{\mathcal{D}}_{b_i}^{\nu_i}\, (v)\Vert_{L^2(\Omega)}^2+\Vert \prescript{}{a_i}{\mathcal{D}}_{x_i}^{\nu_i}\, (v)\Vert_{L^2(\Omega)}^2 \big) \Big{\}}^{\frac{1}{2}}. \quad \quad
\end{eqnarray}
If $2\tau \in (0,1)$, our method is essentially Galerkin in the $\infty$-dimensional space. Yet in the discretization, we choose two different subspaces as basis and test spaces, leading to the PG spectral method; that is, $U_N \subset \mathcal{B}^{\tau,\nu_1,\cdots,\nu_d}(\Omega)$ and $V_N \subset \mathfrak{B}^{\tau,\nu_1,\cdots,\nu_d}(\Omega)$ such that $U_N \neq V_N$.
In case $2\tau \in (1,2)$, we define the solution space as
\begin{equation}
\mathcal{B}^{\tau,\nu_1,\cdots,\nu_d}(\Omega) := \prescript{l}{0,0}H^{\tau}\Big(I; L^2(\Lambda_d)\Big) \cap L^2(I; \mathcal{X}_d),
\end{equation}
where 
\begin{eqnarray}
\prescript{l}{0,0}H^{\tau}\Big(I; L^2(\Lambda_d)\Big) &:=& \Big{\{} u \,|\, \Vert u(t,\cdot) \Vert_{L^2(\Lambda_d)} \in H^{\tau}(I), 
\nonumber
\\
&&
\frac{\partial u}{\partial t}\vert_{t=0}= u\vert_{t=0}=u\vert_{x_i=a_i}=u\vert_{x_i=b_i}=0,\, i=1,\cdots,d  \Big{\}},
\nonumber
\end{eqnarray}
which is associated with $\Vert \cdot \Vert_{\mathcal{B}^{\tau,\nu_1,\cdots,\nu_d}}$. 
The corresponding test space is also defined as
\begin{equation}
\mathfrak{B}^{\tau,\nu_1,\cdots,\nu_d}(\Omega) := \prescript{r}{0,0}H^{\tau}\Big(I; L^2(\Lambda_d)\Big) \cap L^2(I; \mathcal{X}_d),
\end{equation}
where 
\begin{eqnarray}
\prescript{r}{0,0}H^{\tau}\Big(I; L^2(\Lambda_d)\Big) &:=& \Big{\{} v \,|\, \Vert v(t,\cdot) \Vert_{L^2(\Lambda_d)} \in H^{\tau}(I),
\nonumber
\\
&&
\frac{\partial v}{\partial t}\vert_{t=T}= v\vert_{t=T}=v\vert_{x_i=a_i}=v\vert_{x_i=b_i}=0,\, i=1,\cdots,d  \Big{\}},
\nonumber
\end{eqnarray}
which is endowed with $\Vert \cdot \Vert_{\mathfrak{B}^{\tau,\nu_1,\cdots,\nu_d}(\Omega)}$. 
It should be noted that similar to Lemma \ref{norm_223}, for $u\in \prescript{l}{0,0}H^{\tau}\big(I; L^2(\Lambda_d)\big)$ and $2\tau \in (1,2),$ we obtain
\begin{equation}
\label{lema3.6.2}
\vert( \prescript{}{0}{\mathcal{D}}_{t}^{\tau} u, \prescript{}{t}{\mathcal{D}}_{T}^{\tau} v)_{\Omega} \vert \equiv \Vert u \Vert_{\prescript{l}{}H^{\tau}\Big(I; L^2(\Lambda_d)\Big)} \, \Vert v \Vert_{\prescript{r}{}H^{\tau}\Big(I; L^2(\Lambda_d)\Big)} \quad \forall v \in \prescript{r}{0,0}H^{\tau}\Big(I; L^2(\Lambda_d)\Big).
\end{equation}

Let $u \in \mathcal{B}^{\tau,\nu_1,\cdots,\nu_d}(\Omega)$ and $\Omega =  (0,T)\times (a_1,b_1)\times (a_2,b_2) \times \cdots \times (a_d,b_d)$, where $d$ is a positive integer. The Petrov-Galerkin spectral method reads as:

\noindent find $u\in \mathcal{B}^{\tau,\nu_1,\cdots,\nu_d} (\Omega)$ such that
\begin{eqnarray}
\label{Eq: general weak form}
a(u,v) = l(v), \quad \forall v \in \mathfrak{B}^{\tau,\nu_1,\cdots,\nu_d} (\Omega),
\end{eqnarray}
where the functional $l(v)=(f,v)_{\Omega} $ and
\begin{eqnarray}
\label{Eq: general weak form_2}
\nonumber
a(u,v)&=&(\prescript{}{0}{\mathcal{D}}_{t}^{\tau}\, u, \prescript{}{t}{\mathcal{D}}_{T}^{\tau}\, v )_{\Omega}  +
 \sum_{i=1}^{d} \Big[c_{l_i}  ( \prescript{}{a_i}{\mathcal{D}}_{x_i}^{\mu_i}\, u,\, \prescript{}{x_i}{\mathcal{D}}_{b_i}^{\mu_i}\, v )_{\Omega} 
+ c_{r_i}  ( \prescript{}{a_i}{\mathcal{D}}_{x_i}^{\mu_i}\, v,\, \prescript{}{x_i}{\mathcal{D}}_{b_i}^{\mu_i}\, u )_{\Omega}\Big] 
\\
&&-\sum_{j=1}^{d} \Big[k_{l_j}  ( \prescript{}{a_j}{\mathcal{D}}_{x_j}^{\nu_j}\, u,\, \prescript{}{x_j}{\mathcal{D}}_{b_j}^{\nu_j}\, v )_{\Omega}+k_{r_j}  ( \prescript{}{a_j}{\mathcal{D}}_{x_j}^{\nu_j}\, v,\, \prescript{}{x_j}{\mathcal{D}}_{b_j}^{\nu_j}\, u )_{\Omega}\Big] 
+\gamma 
(u,v)_{\Omega} \quad \quad
\end{eqnarray}
following Lemmas \ref{lem_generalize}, \ref{lem_generalize}, and \ref{lem_generalize_2} and $\gamma, c_{l_i}, \, c_{r_i}, \, \kappa_{l_i},$ and  $\kappa_{r_i}$ are all constant. $2\mu_j \in (0,1)$, $2\nu_j \in (1,2)$, and $2\tau \in (0,2)$, for $j=1,2,\cdots,d$.
%
In case $\tau < \frac{1}{2}$, the solution to the bilinear form in \eqref{Eq: general weak form_2} does not lead to the homogeneous initial condition in the strong form. To guarantee the equivalence between the problem under the strong formulation and the bilinear form, we assume that the solution posses enough regularity.

In \cite{samiee2017Unified}, we presented the construction of the finite-dimensional subspaces of $\mathcal{B}^{\tau,\nu_1,\cdots,\nu_d} (\Omega)$ and $\mathfrak{B}^{\tau,\nu_1,\cdots,\nu_d} (\Omega)$ in details. We define the space-time trial space as 
\begin{eqnarray}
\label{Eq: Trial Space: PG}
&U_N = 
span \Big\{    \Big( (1+\eta)^{\tau} P^{-\tau,\tau}_{n-1} \circ \eta \Big) ( t )
\prod_{j=1}^{d} \Big(P_{m_j+1}\circ \xi_j -P_{m_j-1}\circ \xi_j\Big)  (x_j)\,
 : n = 1, \cdots, \mathcal{N}, &
 \nonumber
 \\
& \,m_j= 1, \cdots, \mathcal{M}_j\Big\},& \,
\end{eqnarray}
where $\eta(t) = 2t/T -1$ and $\xi_j(x_j)=2\frac{x_j-a_j}{b_j-a_j}-1 $. Moreover, we define the space-time test space to be 
\begin{eqnarray}
\label{Eq: Test Space: PG}
&V_N = span \Big\{  \Big((1-\eta)^{\tau} P^{\tau,-\tau}_{k-1} \circ \eta \Big)(t)
\prod_{j=1}^{d} \Big( P_{r_j+1}-P_{r_j-1}\circ \xi_j \Big)(x_j)\,
 : k = 1, \ldots, \mathcal{N}, &
 \nonumber
 \\
 & \,r_j= 1, \ldots, \mathcal{M}_j\Big\}.&
\end{eqnarray}

\noindent Then, the PG scheme reads as: find $u_N \in U_N$ such that
\begin{eqnarray}
\label{Eq: infinit-dim PG method_1111}
a(u_N,v_N) = l(v_N), \quad \forall v \in V_N,
\end{eqnarray} 
where
\begin{eqnarray}
\label{Eq: infinit-dim PG method_1122}
a(u_N,v_N)&=&
(\prescript{}{0}{\mathcal{D}}_{t}^{\tau}\, u_N, \prescript{}{t}{\mathcal{D}}_{T}^{\tau}\, v_N )_{\Omega}  
\nonumber
\\
&+&
\sum_{i=1}^{d} [c_{l_i}  ( \prescript{}{a_i}{\mathcal{D}}_{x_i}^{\mu_i}\, u_N,\, \prescript{}{x_i}{\mathcal{D}}_{b_i}^{\mu_i}\, v_N )_{\Omega}+c_{r_i}  ( \prescript{}{x_i}{\mathcal{D}}_{a_i}^{\mu_i}\, u_N,\, \prescript{}{a_i}{\mathcal{D}}_{x_i}^{\mu_i}\, v_N )_{\Omega} ]
 \nonumber
\\ 
&
-&
\sum_{j=1}^{d} [\kappa_{l_j}  ( \prescript{}{a_j}{\mathcal{D}}_{x_j}^{\nu_j}\, u_N,\, \prescript{}{x_j}{\mathcal{D}}_{b_j}^{\nu_j}\, v_N )_{\Omega}+
\kappa_{r_j}  ( \prescript{}{x_j}{\mathcal{D}}_{b_j}^{\nu_j}\, u_N,\, \prescript{}{a_j}{\mathcal{D}}_{x_j}^{\nu_j}\, v_N )_{\Omega} 
\nonumber
\\
&+&\gamma 
(u_N,v_N)_{\Omega}.
\end{eqnarray}
Considering $u_N$ as a linear combination of points in $U_N$, the corresponding linear system known as \textit{Lyapunov} system originates from the finite-dimensional problem. The properties of the corresponding mass and stiffness matrices allowed us to formulate a general linear fast solver in \cite{samiee2017Unified}.

%
%
%
\section{Well-posedness and Stability Analysis}
\label{Sec: Stability and Convergence of PG}
%
Based upon the Lemmas provided in Section \ref{Sec: General FPDE}, we are able to prove the stability of the problem \eqref{Eq: infinit-dim PG method_1111} in the following theorems. 
\begin{lem}
\label{continuity_lem}
\textbf{(Continuity)} The bilinear form in \eqref{Eq: general weak form_2} is continuous, i.e., for $u \in \mathcal{B}^{\tau,\nu_1,\cdots,\nu_d} (\Omega)$,
\begin{equation}
\label{continuity_eq}
\exists \beta > 0, \, \,\, \vert a(u,v)\vert \leq \beta \, \Vert u \Vert_{\mathcal{B}^{\tau,\nu_1,\cdots,\nu_d}(\Omega)}\Vert v \Vert_{\mathfrak{B}^{\tau,\nu_1,\cdots,\nu_d}(\Omega)} \,\, \, \forall v \in \mathfrak{B}^{\tau,\nu_1,\cdots,\nu_d}(\Omega).
\end{equation}
\end{lem}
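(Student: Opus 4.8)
The plan is to split $a(u,v)$ into its individual bilinear terms and bound each one separately by the Cauchy--Schwarz (H\"older) inequality in $L^2(\Omega)$, then recombine. Applying the triangle inequality to \eqref{Eq: general weak form_2} produces a finite sum of absolute values of inner products, each of the form $|(\mathcal{D}u,\mathcal{D}v)_\Omega|$ weighted by the constants $c_{l_i},c_{r_i},k_{l_j},k_{r_j},\gamma$. Since $d$ is finite, it suffices to bound every such term by $C\,\Vert u \Vert_{\mathcal{B}^{\tau,\nu_1,\cdots,\nu_d}(\Omega)}\,\Vert v \Vert_{\mathfrak{B}^{\tau,\nu_1,\cdots,\nu_d}(\Omega)}$ and then take $\beta$ to be the resulting weighted sum of constants.

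First I would dispatch the terms that appear verbatim in the two norms. For the temporal term, Cauchy--Schwarz gives $|(\prescript{}{0}{\mathcal{D}}_{t}^{\tau} u, \prescript{}{t}{\mathcal{D}}_{T}^{\tau} v)_\Omega| \le \Vert \prescript{}{0}{\mathcal{D}}_{t}^{\tau} u\Vert_{L^2(\Omega)}\,\Vert \prescript{}{t}{\mathcal{D}}_{T}^{\tau} v\Vert_{L^2(\Omega)}$, and both factors are dominated by $\Vert u \Vert_{\mathcal{B}^{\tau,\nu_1,\cdots,\nu_d}(\Omega)}$ and $\Vert v \Vert_{\mathfrak{B}^{\tau,\nu_1,\cdots,\nu_d}(\Omega)}$, since $\Vert \prescript{}{0}{\mathcal{D}}_{t}^{\tau}u\Vert_{L^2(\Omega)}^2$ and $\Vert \prescript{}{t}{\mathcal{D}}_{T}^{\tau}v\Vert_{L^2(\Omega)}^2$ are summands of the respective squared norms. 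The same reasoning handles the $\nu_j$ dispersion terms: each factor $\Vert \prescript{}{a_j}{\mathcal{D}}_{x_j}^{\nu_j}(\cdot)\Vert_{L^2(\Omega)}$ or $\Vert \prescript{}{x_j}{\mathcal{D}}_{b_j}^{\nu_j}(\cdot)\Vert_{L^2(\Omega)}$ is itself one of the summands in the norms, so Cauchy--Schwarz closes these immediately; the reaction term is likewise bounded via $|(u,v)_\Omega|\le\Vert u\Vert_{L^2(\Omega)}\Vert v\Vert_{L^2(\Omega)}$.

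The essential obstacle is the advection terms involving $\mu_i$. Because $2\mu_i\in(0,1)$ and $2\nu_i\in(1,2)$ force $\mu_i<\nu_i$, the $L^2$-norms of the $\mu_i$-order derivatives do not occur among the summands of either norm, so they cannot be estimated directly. Here I would invoke a fractional Sobolev embedding: for $\mu_i\le\nu_i$ one has $\Vert \prescript{}{a_i}{\mathcal{D}}_{x_i}^{\mu_i}u\Vert_{L^2(\Lambda)}\le C\,\Vert u\Vert_{{^l}H^{\nu_i}(\Lambda)}$, which on the whole line follows from $|\omega|^{2\mu_i}\le(1+|\omega|^2)^{\nu_i}$ on the Fourier side and transfers to the bounded interval through the infimum definition of $\Vert \cdot\Vert_{H^{\nu_i}(\Lambda)}$ together with the one-sided and symmetric norm equivalences of Lemmas \ref{lemma31} and \ref{lemaa32}. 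Applying this in the $x_i$-direction and integrating over the remaining variables shows that $\Vert \prescript{}{a_i}{\mathcal{D}}_{x_i}^{\mu_i}u\Vert_{L^2(\Omega)}$ and $\Vert \prescript{}{x_i}{\mathcal{D}}_{b_i}^{\mu_i}v\Vert_{L^2(\Omega)}$ are each controlled by $\Vert u\Vert_{\mathcal{B}^{\tau,\nu_1,\cdots,\nu_d}(\Omega)}$ and $\Vert v\Vert_{\mathfrak{B}^{\tau,\nu_1,\cdots,\nu_d}(\Omega)}$, since these norms contain both the full $\nu_i$-derivative and the $L^2$ contributions.

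Finally, combining Cauchy--Schwarz with the embedding estimates, every term in \eqref{Eq: general weak form_2} is bounded by a constant multiple of $\Vert u\Vert_{\mathcal{B}^{\tau,\nu_1,\cdots,\nu_d}(\Omega)}\Vert v\Vert_{\mathfrak{B}^{\tau,\nu_1,\cdots,\nu_d}(\Omega)}$; summing the finitely many contributions and absorbing $|c_{l_i}|,|c_{r_i}|,|k_{l_j}|,|k_{r_j}|,|\gamma|$ and the embedding constants into a single $\beta>0$ yields \eqref{continuity_eq}. I expect the only delicate point to be making the $\mu_i<\nu_i$ embedding rigorous on the bounded domain---ensuring that the one-sided Riemann--Liouville seminorm of the lower order is genuinely dominated by the higher-order norm---while the remaining estimates are routine applications of Cauchy--Schwarz.
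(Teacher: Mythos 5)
Your proof is correct, and for the terms it shares with the paper it follows essentially the same route: the paper's proof of Lemma \ref{continuity_lem} is a one-line appeal to \eqref{equiv_space} and Lemma \ref{norm_223}, which amounts exactly to your Cauchy--Schwarz bounds for the temporal term, the $\nu_j$-order spatial terms, and the reaction term, all of whose $L^2(\Omega)$-norms occur verbatim as summands of $\Vert\cdot\Vert_{\mathcal{B}^{\tau,\nu_1,\cdots,\nu_d}(\Omega)}$ and $\Vert\cdot\Vert_{\mathfrak{B}^{\tau,\nu_1,\cdots,\nu_d}(\Omega)}$. Where you genuinely go beyond the paper is the advection terms of order $\mu_i$: since $2\mu_i\in(0,1)$ while $2\nu_i\in(1,2)$, one always has $\mu_i<\tfrac{1}{2}<\nu_i$, so these lower-order derivative norms are not summands of either space-time norm, and the two facts the paper cites do not cover them --- the paper tacitly leaves this case to the reader. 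Your fractional embedding $\Vert \prescript{}{a_i}{\mathcal{D}}_{x_i}^{\mu_i}u\Vert_{L^2}\leq C\,\Vert u\Vert_{{^l}H^{\nu_i}}$ is precisely the missing ingredient, and your sketch of it is sound: on $\mathbb{R}$ it is the Fourier-side inequality $\vert\omega\vert^{2\mu_i}\le(1+\vert\omega\vert^2)^{\nu_i}$, and it transfers to the bounded interval by extending $u$ by zero (legitimate here because the solution and test spaces are built from closures of $C_0^{\infty}$ in the $H^{\nu_i}$-norms with $\nu_i\neq n-\tfrac{1}{2}$, so zero extension is bounded), after which the equivalences of Lemmas \ref{lemma31} and \ref{lemaa32} apply, and tensorizing over the remaining coordinates and over $I$ is routine. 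In short, your argument buys a complete, self-contained continuity proof at the modest cost of one embedding lemma, whereas the paper's version is shorter but implicitly assumes exactly that embedding for the $c_{l_i}$, $c_{r_i}$ terms.
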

\begin{proof}
The proof follows easily using \eqref{equiv_space} and Lemma \ref{norm_223}.
\end{proof}	
\begin{thm}
\label{inf_sup_lem}
The \textit{inf-sup} condition for the bilinear form, defined in \eqref{Eq: general weak form_2} when $d=1$, i.e., 
\begin{eqnarray}
\label{Eq: inf sup-time_1_well}
&&
\underset{0 \neq u \in \mathcal{B}^{\tau,\nu_1} (\Omega)}{\inf} \, \, \underset{0 \neq v \in\mathfrak{B}^{\tau,\nu_1} (\Omega)}{\sup} \frac{\vert a(u , v)\vert}{\Vert v\Vert_{\mathfrak{B}^{\tau,\nu_1}(\Omega)}\Vert u\Vert_{\mathcal{B}^{\tau,\nu_1}}(\Omega)} \geq \beta > 0, \quad 
\end{eqnarray}
holds with $\beta > 0$, where $\Omega = I \times \Lambda_1$ and $\underset{u \in \mathcal{B}^{\tau,\nu_1} (\Omega)}{\sup} \vert a(u , v)\vert>0$.
\end{thm}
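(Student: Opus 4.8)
The plan is to verify the two Banach--Ne\v{c}as--Babu\v{s}ka conditions directly: the \emph{inf-sup} bound \eqref{Eq: inf sup-time_1_well} and the transposed nondegeneracy $\sup_{u}\vert a(u,v)\vert>0$. Since continuity is already in hand (Lemma \ref{continuity_lem}), these two conditions together with the Babu\v{s}ka--Aziz theorem will yield well-posedness. The heart of the argument is constructive: for each fixed $0\neq u\in\mathcal{B}^{\tau,\nu_1}(\Omega)$ I would exhibit a single test function $v^\star=v^\star(u)\in\mathfrak{B}^{\tau,\nu_1}(\Omega)$ with $\Vert v^\star\Vert_{\mathfrak{B}^{\tau,\nu_1}(\Omega)}$ comparable to $\Vert u\Vert_{\mathcal{B}^{\tau,\nu_1}(\Omega)}$ for which the leading (time-derivative plus symmetric diffusion plus reaction) part of $a(u,v^\star)$ is bounded below by a fixed multiple of $\Vert u\Vert_{\mathcal{B}^{\tau,\nu_1}(\Omega)}^2$; dividing by $\Vert v^\star\Vert_{\mathfrak{B}^{\tau,\nu_1}(\Omega)}$ and then taking the infimum over $u$ produces the constant $\beta$.

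Concretely, I would first keep $a(u,v)$ in the integrated-by-parts form \eqref{Eq: general weak form_2}, which is legitimate on these spaces by Lemmas \ref{lem_generalize}, \ref{lemma_ehsan}, and \ref{lem_generalize_2}, so that every term is a pairing of matched left/right fractional half-derivatives. The building blocks are the lower bounds already available: Lemma \ref{norm_223} controls the time pairing $\vert(\prescript{}{0}{\mathcal{D}}_{t}^{\tau}u,\prescript{}{t}{\mathcal{D}}_{T}^{\tau}v)_{\Omega}\vert$ from below by $\Vert u\Vert_{\prescript{l}{}H^{\tau}(I;L^2(\Lambda_1))}\Vert v\Vert_{\prescript{r}{}H^{\tau}(I;L^2(\Lambda_1))}$, while \eqref{equiv_space}--\eqref{equiv_space2} furnish the analogous lower bounds for the spatial $\nu_1$-diffusion pairings in terms of the $^cH^{\nu_1}$-seminorms. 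The construction of $v^\star$ then separates the two pieces of the intersection norm $\mathcal{B}^{\tau,\nu_1}=\prescript{l}{0}H^{\tau}(I;L^2(\Lambda_1))\cap L^2(I;\mathcal{X}_1)$: I would use the time reflection $t\mapsto T-t$ to send the $\prescript{}{0}{\mathcal{D}}_{t}^{\tau}$-structure of $u$ into the $\prescript{}{t}{\mathcal{D}}_{T}^{\tau}$-structure demanded of a test function (this reflection also maps the initial condition of $\mathcal{B}$ to the terminal condition of $\mathfrak{B}$ while preserving the spatial boundary conditions), and align the spatial half-derivatives through the equivalence realizers of Lemmas \ref{lemma31} and \ref{lemaa32}, recovering the symmetric $^cH^{\nu_1}$-seminorm of $u$. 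The norm equivalences \eqref{equivalent} and \eqref{eq14} are what let me pass from the mixed seminorm lower bounds back to $\Vert u\Vert_{\mathcal{B}^{\tau,\nu_1}(\Omega)}^2$.

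The remaining terms are lower order and are treated perturbatively. Since $2\mu_1\in(0,1)$ while $2\nu_1\in(1,2)$, the advection pairings $c_{l_1}(\prescript{}{a_1}{\mathcal{D}}_{x_1}^{\mu_1}u,\prescript{}{x_1}{\mathcal{D}}_{b_1}^{\mu_1}v)_{\Omega}+c_{r_1}(\prescript{}{a_1}{\mathcal{D}}_{x_1}^{\mu_1}v,\prescript{}{x_1}{\mathcal{D}}_{b_1}^{\mu_1}u)_{\Omega}$ carry a strictly lower fractional order; using a fractional Poincar\'e/interpolation inequality together with the norm equivalences of Lemmas \ref{lemma31}--\ref{lemaa32}, the $\mu_1$-seminorm is dominated by the $\nu_1$-seminorm (and the surviving $L^2$ contribution by the $\tau$- or $\nu_1$-seminorm), so these terms, together with the reaction term $\gamma(u,v)_\Omega$, can be absorbed into the leading positive part via Young's inequality at the cost of shrinking $\beta$. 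Finally, the nondegeneracy $\sup_{0\neq u}\vert a(u,v)\vert>0$ for every $0\neq v$ follows from the same estimates with the roles of $u$ and $v$ interchanged, i.e.\ swapping the left and right operators and the spaces $\mathcal{B}^{\tau,\nu_1}$ and $\mathfrak{B}^{\tau,\nu_1}$.

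The main obstacle I anticipate is producing a \emph{single} $v^\star$ that simultaneously realizes the lower bounds for the time pairing and for the spatial diffusion pairings while keeping $\Vert v^\star\Vert_{\mathfrak{B}^{\tau,\nu_1}(\Omega)}$ comparable to $\Vert u\Vert_{\mathcal{B}^{\tau,\nu_1}(\Omega)}$: the matched partners in time and in space are a priori different, so the delicate point is to fuse them through the intersection/tensor structure of $\mathcal{B}^{\tau,\nu_1}$ without forfeiting a definite lower bound. A secondary difficulty is keeping $\beta$ strictly positive uniformly, since the equivalence constants in Lemma \ref{norm_223} and \eqref{equiv_space}--\eqref{equiv_space2} degenerate as the parameters approach the excluded values (for instance $2\nu_1=1$ or $\sigma=n-\tfrac{1}{2}$), and the indefinite advection together with a possibly unfavorable sign of $\gamma$ must be controlled so that the absorption step does not exhaust the entire leading bound.
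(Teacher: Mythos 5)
Your plan is the standard constructive Banach--Ne\v{c}as--Babu\v{s}ka route, and it is genuinely different from what the paper does --- but as submitted it has a gap at precisely the step you yourself flag as the ``main obstacle,'' and the one concrete device you propose for the time pairing does not work. If $v^\star(t,x_1)=u(T-t,x_1)$, then $\prescript{}{t}{\mathcal{D}}_{T}^{\tau} v^\star(t,\cdot)=\big(\prescript{}{0}{\mathcal{D}}_{s}^{\tau}u\big)(T-t,\cdot)$, so the time pairing becomes the autocorrelation $\int_{\Lambda_1}\int_{0}^{T} \prescript{}{0}{\mathcal{D}}_{t}^{\tau}u(t,\cdot)\,\prescript{}{0}{\mathcal{D}}_{t}^{\tau}u(T-t,\cdot)\,dt\,d\Lambda_1$: the reflection transports the initial condition of $\mathcal{B}^{\tau,\nu_1}$ to the terminal condition of $\mathfrak{B}^{\tau,\nu_1}$ correctly, but this integral has no definite sign and can even vanish, so it yields no lower bound of the form $\beta\Vert u\Vert^2$. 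The standard repair in this literature (Ervin--Roop, Li--Xu), available here at least for $2\tau\in(0,1)$ where the paper itself remarks the formulation is ``essentially Galerkin,'' is to test with $v=u$ and use the coercivity of the mixed pairing of a function with itself, $(\prescript{}{0}{\mathcal{D}}_{t}^{\tau}w,\prescript{}{t}{\mathcal{D}}_{T}^{\tau}w)_I\geq \cos(\pi\tau)\,\vert w\vert^2_{H^{\tau}(I)}$ --- no reflection and no fusion problem arises because the same test function is simultaneously admissible for the time and space terms. Your secondary concern is also a genuine unresolved gap rather than a technicality: with $c_{l_1},c_{r_1},\gamma$ arbitrary fixed constants, Young-type absorption of the $\mu_1$-advection and reaction terms into the leading part cannot produce a uniform $\beta>0$ in general (for resonant $\gamma$ the form can be degenerate); this is exactly why the theorem carries the nondegeneracy hypothesis $\sup_{u}\vert a(u,v)\vert>0$, and your outline never shows how that hypothesis quantitatively rescues the absorption step.

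For contrast, the paper's proof constructs no test function at all. It bounds $\vert a(u,v)\vert$ from below by $\tilde{\beta}$ times the sum of the absolute values of the individual pairings (the non-cancellation being attributed, admittedly loosely, to the hypothesis $\sup_{u}\vert a(u,v)\vert>0$), then replaces each pairing by a product of matching norms using the two-sided pairing--norm equivalences \eqref{equivalent}, \eqref{equiv_space}--\eqref{equiv_space2} and Lemma \ref{norm_223}, and finally identifies the resulting sum of products with $\Vert u\Vert_{\mathcal{B}^{\tau,\nu_1}(\Omega)}\Vert v\Vert_{\mathfrak{B}^{\tau,\nu_1}(\Omega)}$; notably it silently omits the $\mu_1$-advection terms from the displayed form of $a(u,v)$. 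So where the paper leans on claimed equivalences valid for all pairs $(u,v)$ plus the sup-hypothesis, you attempt the sharper for-each-$u$-exhibit-$v^\star$ argument; that route is in principle sounder, but to complete it you must (i) replace the reflection by a test function with a provably sign-definite time pairing (e.g.\ $v^\star=u$ where admissible, or the Li--Xu construction for $2\tau\in(1,2)$), and (ii) either impose smallness/sign conditions on $c_{l_1},c_{r_1},\gamma$ or show explicitly how the nondegeneracy hypothesis enters the absorption estimate. As it stands, the proposal is a program whose two critical steps are open.
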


\begin{proof}
It is evident that $u$ and $v$ are in Hilbert spaces (see \cite{ervin2007variational,li2010existence}).
We have
\begin{eqnarray}
&&\vert a(u,v)\vert 
\nonumber
\\
&&= \vert (\prescript{}{0}{\mathcal{D}}_{t}^{\tau}\, (u),\prescript{}{t}{\mathcal{D}}_{T}^{\tau}\, (v))_{\Omega} + (\prescript{}{a_1}{\mathcal{D}}_{x_1}^{\nu_1}\, (u),\prescript{}{x_1}{\mathcal{D}}_{b_1}^{\nu_1}\, (v))_{\Omega}+ (\prescript{}{a_1}{\mathcal{D}}_{x_1}^{\nu_1}\, (u),\prescript{}{x_1}{\mathcal{D}}_{b_1}^{\nu_1}\, (v))_{\Omega}+(u,v)_{\Omega}\vert
\nonumber
\\
&&\geq \tilde{\beta} \Big(\vert (\prescript{}{0}{\mathcal{D}}_{t}^{\tau}\, (u),\prescript{}{t}{\mathcal{D}}_{T}^{\tau}\, (v))_{\Omega}\vert + \vert (\prescript{}{a_1}{\mathcal{D}}_{x_1}^{\nu_1}\, (u),\prescript{}{x_1}{\mathcal{D}}_{b_1}^{\nu_1}\, (v))_{\Omega}\vert+\vert (\prescript{}{a_1}{\mathcal{D}}_{x_1}^{\nu_1}\, (u),\prescript{}{x_1}{\mathcal{D}}_{b_1}^{\nu_1}\, (v))_{\Omega}\vert+\vert(u,v)_{\Omega}\vert\Big),
\nonumber
\end{eqnarray}
where $0< \tilde{\beta} \leq 1$ due to $\underset{u \in \mathcal{B}^{\tau,\nu_1} (\Omega)}{\sup} \vert a(u , v)\vert>0$.
Next, by \eqref{equiv_space}, and \eqref{equivalent} we obtain
\begin{eqnarray}
\vert (\prescript{}{0}{\mathcal{D}}_{t}^{\tau}\, (u),\prescript{}{t}{\mathcal{D}}_{T}^{\tau}\, (v))_{\Omega}\vert &\geq& C_1 \Vert \prescript{}{0}{\mathcal{D}}_{t}^{\tau} u\Vert_{L^2(\Omega)}
\, \Vert\prescript{}{t}{\mathcal{D}}_{T}^{\tau} v\Vert_{L^2(\Omega)},
\nonumber
\\
\vert (\prescript{}{a_1}{\mathcal{D}}_{x_1}^{\nu_1}\, (u),\prescript{}{x_1}{\mathcal{D}}_{b_1}^{\nu_1}\, (v))_{\Omega} \vert &\geq& C_2 \Vert \prescript{}{a_1}{\mathcal{D}}_{x_1}^{\nu_1} u \Vert_{L^2(\Omega)}\, \Vert \prescript{}{x_1}{\mathcal{D}}_{b_1}^{\nu_1} v \Vert_{L^2(\Omega)}, \quad
\nonumber
\end{eqnarray}
and
\begin{eqnarray}
\vert (\prescript{}{x_1}{\mathcal{D}}_{b_1}^{\nu_1}\, (u),\prescript{}{a_1}{\mathcal{D}}_{x_1}^{\nu_1}\, (v))_{\Omega}\vert \geq C_3 \Vert \prescript{}{x_1}{\mathcal{D}}_{b_1}^{\nu_1} u \Vert_{L^2(\Omega)} \, \Vert \prescript{}{a_1}{\mathcal{D}}_{x_1}^{\nu_1} v \Vert_{L^2(\Omega)},
\end{eqnarray}
where $C_1$, $C_2$, and $C_3$ are positive constants. Therefore,
\begin{eqnarray}
\label{inequality_eq}
\vert a(u,v)\vert &\geq&
\tilde{C} \tilde{\beta} \Big{\{}  \Vert \prescript{}{0}{\mathcal{D}}_{t}^{\tau} u\Vert_{L^2(\Omega)}
\, \Vert\prescript{}{t}{\mathcal{D}}_{T}^{\tau} v\Vert_{L^2(\Omega)} + \Vert \prescript{}{a_1}{\mathcal{D}}_{x_1}^{\nu_1} u \Vert_{L^2(\Omega)}\, \Vert \prescript{}{x_1}{\mathcal{D}}_{b_1}^{\nu_1} v \Vert_{L^2(\Omega)} 
\nonumber
\\
&& \quad + \Vert \prescript{}{a_1}{\mathcal{D}}_{x_1}^{\nu_1} u \Vert_{L^2(\Omega)} \, \Vert \prescript{}{x_1}{\mathcal{D}}_{b_1}^{\nu_1} v \Vert_{L^2(\Omega)} \Big{\}},
\end{eqnarray}
where $\tilde{C}$ is $min\{C_1, \, C_2, \, C_3 \}$. 
Besides, $
\Vert u \Vert_{\mathcal{B}^{\tau,\nu_1,\cdots,\nu_d}(\Omega)}\Vert v \Vert_{\mathfrak{B}^{\tau,\nu_1,\cdots,\nu_d}(\Omega)}$ for $u\in \mathcal{B}^{\tau,\nu_1,\cdots,\nu_d}(\Omega)$ and $v \in \mathfrak{B}^{\tau,\nu_1,\cdots,\nu_d}(\Omega)$ 
is equivalent to the the right side of the inequality in \eqref{inequality_eq}. Therefore, 
\begin{equation}
\label{inequality_eq2}
\vert a(u,v)\vert \geq \beta \, \Vert u \Vert_{\mathcal{B}^{\tau,\nu_1}(\Omega)}\Vert v \Vert_{\mathfrak{B}^{\tau,\nu_1}(\Omega)},
\end{equation}
where $\beta=\tilde{C} \tilde{\beta}$.
\end{proof}

\begin{thm}
\label{inf_sup_d_lem}
The \text{inf-sup} condition of the bilinear form, defined in \eqref{Eq: general weak form_2} for any $d \geq 1$, i.e.,
\begin{eqnarray}
\label{Eq: inf sup-time_d_well}
&&
\underset{0 \neq u \in \mathcal{B}^{\tau,\nu_1,\cdots,\nu_d} (\Omega)} {\inf} \,\,\underset{0 \neq v \in\mathfrak{B}^{\tau,\nu_1,\cdots,\nu_d} (\Omega)}{\sup}
\frac{\vert a(u , v)\vert}{\Vert v\Vert_{\mathfrak{B}^{\tau,\nu_1,\cdots,\nu_d}(\Omega)}\Vert u\Vert_{\mathcal{B}^{\tau,\nu_1,\cdots,\nu_d}}(\Omega)} \geq \beta > 0, \quad 
\end{eqnarray}
holds with $\beta > 0$, where $\Omega = I \times \Lambda_d$ and $\underset{u \in \mathcal{B}^{\tau,\nu_1,\cdots,\nu_d} (\Omega)}{\sup} \vert a(u , v)\vert>0$.
\end{thm}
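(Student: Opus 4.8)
The plan is to extend the $d=1$ argument of Theorem \ref{inf_sup_lem} verbatim in structure, replacing the single spatial pair of terms by the sum over the $d$ coordinate directions. First I would write out $\vert a(u,v)\vert$ using the full bilinear form \eqref{Eq: general weak form_2}, which now contains the temporal term $(\prescript{}{0}{\mathcal{D}}_{t}^{\tau} u, \prescript{}{t}{\mathcal{D}}_{T}^{\tau} v)_\Omega$, the $2d$ advection terms carrying $\mu_i$, the $2d$ diffusion terms carrying $\nu_j$, and the reaction term $\gamma(u,v)_\Omega$. As in the planar case, the nondegeneracy hypothesis $\sup_{u}\vert a(u,v)\vert > 0$ supplies a constant $0 < \tilde\beta \le 1$ for which $\vert a(u,v)\vert$ is bounded below by $\tilde\beta$ times the sum of the absolute values of the individual bilinear contributions, ruling out destructive cancellation between directions.

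Next I would bound each contribution below by a product of $L^2$ norms of the corresponding one-sided fractional derivatives. For the temporal term this is exactly Lemma \ref{norm_223} together with \eqref{equivalent} (and, when $2\tau \in (1,2)$, the analogous estimate \eqref{lema3.6.2}). For the spatial terms I would invoke the coordinate-wise analogues of \eqref{equiv_space} and \eqref{equiv_space2}: although the excerpt records these equivalences only in the $d$-th direction, Lemma \ref{norm_221} and Lemma \ref{lemaa32} furnish the identical equivalence in every direction $i = 1, \dots, d$, so that each pair $(\prescript{}{a_i}{\mathcal{D}}_{x_i}^{\nu_i} u, \prescript{}{x_i}{\mathcal{D}}_{b_i}^{\nu_i} v)_\Omega$ (together with its mirror and the $\mu_i$ pairs) is controlled below by $C_i \Vert \prescript{}{a_i}{\mathcal{D}}_{x_i}^{\nu_i} u \Vert_{L^2(\Omega)} \Vert \prescript{}{x_i}{\mathcal{D}}_{b_i}^{\nu_i} v \Vert_{L^2(\Omega)}$. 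Taking $\tilde C$ to be the minimum of all the resulting constants collapses these estimates into a single lower bound of the form $\vert a(u,v)\vert \ge \tilde C \tilde\beta \, S(u,v)$, where $S(u,v)$ denotes the full sum of pairwise products of derivative norms.

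The final step is to recognize $S(u,v)$ as equivalent to the product $\Vert u\Vert_{\mathcal{B}^{\tau,\nu_1,\cdots,\nu_d}(\Omega)} \Vert v\Vert_{\mathfrak{B}^{\tau,\nu_1,\cdots,\nu_d}(\Omega)}$. Here I would assemble the norms using \eqref{norm_222}, \eqref{norm_2221} and Lemma \ref{norm_221}, which express both the $\mathcal{B}$- and $\mathfrak{B}$-norms as the Euclidean aggregate of precisely the derivative-norm components appearing in $S$; the relation between a sum of products and the product of the corresponding aggregated norms then delivers the two-sided equivalence. Setting $\beta = \tilde C \tilde\beta$ yields \eqref{Eq: inf sup-time_d_well}.

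I expect the genuine obstacle to be this last equivalence $S(u,v) \equiv \Vert u\Vert_{\mathcal{B}}\Vert v\Vert_{\mathfrak{B}}$: the upper bound $S \le \Vert u\Vert_{\mathcal{B}}\Vert v\Vert_{\mathfrak{B}}$ is immediate from Cauchy--Schwarz, but the reverse inequality must be argued with care, since a generic sum of cross products need not dominate the product of the aggregated norms unless the directional pairing is respected. The cleanest remedy is to read the \emph{inf-sup} statement as asserting that for each fixed $u$ one may select the test function $v$ whose directional components are aligned with those of $u$ (the natural candidate being built from the trial and test bases \eqref{Eq: Trial Space: PG}--\eqref{Eq: Test Space: PG}), so that every cross term attains its coercive lower bound simultaneously; collecting over the $d$ directions then recovers the full norm product. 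Verifying that this aligned choice lies in $\mathfrak{B}^{\tau,\nu_1,\cdots,\nu_d}(\Omega)$ and keeps the constants uniform in $d$ is the only point demanding more than a routine estimate.
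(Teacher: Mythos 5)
Your proposal tracks the paper's own proof almost step for step through the main estimates: the paper likewise opens by bounding $\vert a(u,v)\vert$ below by a constant $\tilde{\beta}\in(0,1]$ times the sum of absolute values of the individual pairings (justified, exactly as you do, by the hypothesis $\sup_{u}\vert a(u,v)\vert>0$), then applies the coordinate-wise equivalences \eqref{equiv_space}--\eqref{equiv_space2} to each spatial pair and Lemma \ref{norm_223} together with \eqref{lema3.6.2} to the temporal pair, and finally reassembles $\Vert u\Vert_{\mathcal{B}^{\tau,\nu_1,\cdots,\nu_d}(\Omega)}\Vert v\Vert_{\mathfrak{B}^{\tau,\nu_1,\cdots,\nu_d}(\Omega)}$ from the derivative-norm components via Lemma \ref{norm_221} and \eqref{norm_2221}, setting $\beta$ equal to the product of the accumulated constants. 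One minor difference: the paper's display \eqref{bilinear_ineq} retains only the temporal and $\nu_i$-diffusion pairings, silently dropping the $\mu_i$-advection and reaction terms, whereas you carry all terms of \eqref{Eq: general weak form_2} along; this does not change the structure of the argument.

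Where you genuinely diverge is at the step you single out as the real obstacle, and your instinct there is sound. The paper does not construct an aligned test function: in \eqref{qqqq} and \eqref{thm124} it simply asserts that the sum of directionally paired products dominates the product of the aggregated sums, with unexplained constants $\tilde{\beta}_1,\tilde{\beta}_2\in(0,1]$. As you observe, such a reverse inequality cannot hold uniformly for \emph{arbitrary} pairs $(u,v)$ --- a test function whose derivative mass concentrates in directions where $u$ has little makes the paired sum small while the product of aggregated norms stays order one --- and this is precisely where the \emph{inf-sup} character of the statement (a supremum over $v$ for each fixed $u$) must be used rather than a coercivity-type bound valid for all pairs. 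Your proposed remedy, selecting for each $u$ a test function whose directional components are aligned with those of $u$ and verifying membership in $\mathfrak{B}^{\tau,\nu_1,\cdots,\nu_d}(\Omega)$ with constants uniform in $d$, is the standard and correct way to close this gap; it is more careful than what the paper records, since the paper's written argument leaves $\tilde{\beta}_1$ and $\tilde{\beta}_2$ unjustified for general pairs. In short: your route coincides with the paper's up to the collapse to $S(u,v)$, and is stronger at the final equivalence, provided you actually carry out the aligned-test-function construction you sketch rather than leaving it as a remark.
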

\begin{proof}
Similar to Lemma \ref{inf_sup_lem}, we have 
\begin{eqnarray}
\label{bilinear_ineq}
\vert a(u,v)\vert \geq \beta \bigg(\vert (\prescript{}{0}{\mathcal{D}}_{t}^{\tau} (u),\prescript{}{t}{\mathcal{D}}_{T}^{\tau} (v))_{\Omega}\vert + \sum_{i=1}^{d} \Big(\vert (\prescript{}{a_i}{\mathcal{D}}_{x_i}^{\nu_i} (u),\prescript{}{x_i}{\mathcal{D}}_{b_i}^{\nu_i} (v))_{\Omega}\vert+\vert (\prescript{}{a_i}{\mathcal{D}}_{x_i}^{\nu_i} (u),\prescript{}{x_i}{\mathcal{D}}_{b_i}^{\nu_i} (v))_{\Omega}\vert\Big) \bigg), \quad
\end{eqnarray}
where $0<\beta \leq 1$.
It follows from \eqref{equiv_space} that
\begin{eqnarray}
\vert (\prescript{}{a_i}{\mathcal{D}}_{x_i}^{\nu_i}\, (u),\prescript{}{x_i}{\mathcal{D}}_{b_i}^{\nu_i}\, (v))_{\Omega} \vert &\equiv& \Vert \prescript{}{a_i}{\mathcal{D}}_{x_i}^{\nu_i}\, (u) \Vert_{L^2(\Omega)} \, \Vert \prescript{}{x_i}{\mathcal{D}}_{b_i}^{\nu_i}\, (v)\Vert_{L^2(\Omega)}, \quad 
\nonumber
\\
\vert (\prescript{}{x_i}{\mathcal{D}}_{b_i}^{\nu_i}\, (u),\prescript{}{a_i}{\mathcal{D}}_{x_i}^{\nu_i}\, (v))_{\Omega} \vert &\equiv& \Vert \prescript{}{x_i}{\mathcal{D}}_{b_i}^{\nu_i}\, (u) \Vert_{L^2(\Omega)} \, \Vert \prescript{}{a_i}{\mathcal{D}}_{x_i}^{\nu_i}\, (v)\Vert_{L^2(\Omega)}.
\nonumber
\end{eqnarray}
Accordingly, for $u,\, v \in  L^2(I; \mathcal{X}_d)$
\begin{eqnarray}
\label{qqqq}
&&\sum_{i=1}^{d} \Big(\vert (\prescript{}{a_i}{\mathcal{D}}_{x_i}^{\nu_i}\, (u),\prescript{}{x_i}{\mathcal{D}}_{b_i}^{\nu_i}\, (v))_{\Omega} \vert 
+\vert (\prescript{}{x_i}{\mathcal{D}}_{b_i}^{\nu_i}\, (u),\prescript{}{a_i}{\mathcal{D}}_{x_i}^{\nu_i}\, (v))_{\Omega} \vert \Big)
\nonumber
\\
&&\quad \quad \geq \tilde{C}_1
\sum_{i=1}^{d} \Big(\Vert \prescript{}{a_i}{\mathcal{D}}_{x_i}^{\nu_i}\, (u) \Vert_{L^2(\Omega)} \, \Vert \prescript{}{x_i}{\mathcal{D}}_{b_i}^{\nu_i}\, (v)\Vert_{L^2(\Omega)} + \Vert \prescript{}{x_i}{\mathcal{D}}_{b_i}^{\nu_i}\, (u) \Vert_{L^2(\Omega)} \, \Vert \prescript{}{a_i}{\mathcal{D}}_{x_i}^{\nu_i}\, (v)\Vert_{L^2(\Omega)}\Big)
\nonumber
\\
&&\quad \quad \geq \tilde{C}_1 \, \tilde{\beta}_1 \sum_{i=1}^{d} \Big(\Vert \prescript{}{a_i}{\mathcal{D}}_{x_i}^{\nu_i}\, (u) \Vert_{L^2(\Omega)}  + \Vert \prescript{}{x_i}{\mathcal{D}}_{b_i}^{\nu_i}\, (u) \Vert_{L^2(\Omega)} \Big) 
 \times \sum_{j=1}^{d} \Big( \Vert \prescript{}{x_j}{\mathcal{D}}_{b_j}^{\nu_j}\, (v)\Vert_{L^2(\Omega)}, + \Vert \prescript{}{a_j}{\mathcal{D}}_{x_j}^{\nu_j}\, (v)\Vert_{L^2(\Omega)}\Big)
 \nonumber
 \\
 &&\quad \quad \quad \geq \tilde{C}_1 \, \tilde{\beta}_1 \Vert u \Vert_{L^2(I; \mathcal{X}_d)} \, \Vert v \Vert_{L^2(I; \mathcal{X}_d)},
\end{eqnarray}
where $0<\tilde{C}_1$ and $0<\tilde{\beta}_1\leq 1$.
Furthermore, using Lemma \ref{norm_223} and \eqref{lema3.6.2}, we have 
\begin{equation}
\label{qqqq2}
\vert (\prescript{}{0}{\mathcal{D}}_{t}^{s}(u),\prescript{}{t}{\mathcal{D}}_{T}^{s}(v))_{\Omega} \vert \equiv  \Vert u \Vert_{\prescript{l}{}H^{\tau}(I; L^2(\Lambda_d))} \, \, \Vert v \Vert_{\prescript{r}{}H^{\tau}(I;L^2(\Lambda_d))}.
\end{equation}
Therefore, from \eqref{bilinear_ineq}, \eqref{qqqq}, and \eqref{qqqq2} we have
\begin{eqnarray}
\label{thm123}
\vert a(u,v)\vert 
&\geq& \beta \Big( \tilde{C}_2  
 \Vert u \Vert_{\prescript{r}{}H^{\tau}(I; L^2(\Lambda_d))} \, \, \Vert v \Vert_{\prescript{l}{}H^{\tau}(I;L^2(\Lambda_d))} + \tilde{C}_1 \, \tilde{\beta}_1 \Vert u \Vert_{L^2(I; \mathcal{X}_d)} \, \Vert v \Vert_{L^2(I; \mathcal{X}_d)}\Big)
 \nonumber
 \\
& \geq& \bar{C}\Big(
 \Vert u \Vert_{\prescript{r}{}H^{\tau}(I; L^2(\Lambda_d))} \, \, \Vert v \Vert_{\prescript{l}{}H^{\tau}(I;L^2(\Lambda_d))} +  \Vert u \Vert_{L^2(I; \mathcal{X}_d)} \, \Vert v \Vert_{L^2(I; \mathcal{X}_d)}\Big)
\end{eqnarray}
where $\bar{C}=\beta \, min\{\tilde{C}_2, \, \tilde{C}_1 \tilde{\beta}_1\}$. Besides, 
\begin{eqnarray}
\label{thm124}
 &&\Vert u \Vert_{\prescript{r}{}H^{\tau}(I; L^2(\Lambda_d))} \, \, \Vert v \Vert_{\prescript{l}{}H^{\tau}(I;L^2(\Lambda_d))} + \Vert u \Vert_{L^2(I; \mathcal{X}_d)} \, \Vert v \Vert_{L^2(I; \mathcal{X}_d)}
\nonumber
\\
&&
\geq \tilde{\beta}_2 \Big(\Vert u \Vert_{\prescript{r}{}H^{\tau}(I; L^2(\Lambda_d))} + \Vert u \Vert_{L^2(I; \mathcal{X}_d)} \Big) \Big(\Vert v \Vert_{\prescript{l}{}H^{\tau}(I;L^2(\Lambda_d))}+\Vert v \Vert_{L^2(I; \mathcal{X}_d)}\Big)
\end{eqnarray}
for $u \in \mathcal{B}^{\tau,\nu_1,\cdots,\nu_d}(\Omega)$ and $v \in \mathfrak{B}^{\tau,\nu_1,\cdots,\nu_d}(\Omega)$ and $0<\tilde{\beta}_2\leq 1$.
Considering \eqref{thm123} and \eqref{thm124}, we get
\begin{equation}
\label{inequality_eq3}
\vert a(u,v)\vert \geq \beta \, \Vert u \Vert_{\mathcal{B}^{\tau,\nu_1,\cdots,\nu_d}(\Omega)}\Vert v \Vert_{\mathfrak{B}^{\tau,\nu_1,\cdots,\nu_d}(\Omega)},
\end{equation}
where $\beta=\bar{C}\tilde{\beta}_2$.
\end{proof}
\begin{thm}
\label{Thm: Well-Posedness_1D}
\textbf{(well-posedness)} For all $0<\tau<2$, $2\tau \neq 1$, and  $1<2\nu_i<2$, and $i=1,\cdots,d$, there exists a unique solution to \eqref{Eq: infinit-dim PG method_1111}, which is continuously dependent on  $f \in \big(\mathcal{B}^{\tau,\nu_1,\cdots,\nu_d}\big)^{\star}(\Omega)$, where $\big(\mathcal{B}^{\tau,\nu_1,\cdots,\nu_d}\big)^{\star}(\Omega)$ is the dual space of $\mathcal{B}^{\tau,\nu_1,\cdots,\nu_d}(\Omega)$.
\end{thm}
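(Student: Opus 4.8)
The plan is to establish well-posedness via the generalized Lax--Milgram theorem (the Banach--Ne\v{c}as--Babu\v{s}ka theorem) for Petrov--Galerkin problems posed on a pair of Hilbert spaces. For the bilinear form $a(\cdot,\cdot)$ acting on $\mathcal{B}^{\tau,\nu_1,\cdots,\nu_d}(\Omega) \times \mathfrak{B}^{\tau,\nu_1,\cdots,\nu_d}(\Omega)$ and the load functional $l(v)=(f,v)_\Omega$, this theorem guarantees a unique solution of \eqref{Eq: general weak form} depending continuously on the data, provided three conditions hold: (i) $a$ is bounded (continuous); (ii) $a$ satisfies the \emph{inf-sup} condition in the solution argument; and (iii) $a$ is non-degenerate in the test argument, i.e.\ $\sup_{0\neq u}\vert a(u,v)\vert>0$ for every $0\neq v \in \mathfrak{B}^{\tau,\nu_1,\cdots,\nu_d}(\Omega)$. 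Since $u$ and $v$ reside in Hilbert spaces (as noted following \cite{ervin2007variational,li2010existence}), these hypotheses are exactly sufficient, and the proof reduces to assembling results already in hand.

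For the verification I would invoke the preceding statements in order. Boundedness, hypothesis (i), is precisely Lemma \ref{continuity_lem}. The \emph{inf-sup} condition, hypothesis (ii), is Theorem \ref{inf_sup_d_lem}, which supplies a constant $\beta>0$ uniform over the two spaces. The non-degeneracy condition (iii) is the requirement $\sup_u\vert a(u,v)\vert>0$ already recorded in the statement of Theorem \ref{inf_sup_d_lem}; I would establish it by repeating the estimate of that theorem with the roles of $u$ and $v$ interchanged, relying on the symmetric form of the duality pairings in \eqref{equiv_space} and \eqref{equiv_space2} together with Lemma \ref{norm_223} and \eqref{lema3.6.2}, which together yield a transpose \emph{inf-sup} bound. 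Finally, I would check that $l$ is a bounded functional on the test space: since $f \in (\mathcal{B}^{\tau,\nu_1,\cdots,\nu_d})^\star(\Omega)$ and the norms $\Vert\cdot\Vert_{\mathcal{B}^{\tau,\nu_1,\cdots,\nu_d}(\Omega)}$ and $\Vert\cdot\Vert_{\mathfrak{B}^{\tau,\nu_1,\cdots,\nu_d}(\Omega)}$ differ only through the equivalent temporal seminorms $\Vert \prescript{}{0}{\mathcal{D}}_t^\tau(\cdot)\Vert_{L^2(\Omega)}$ and $\Vert \prescript{}{t}{\mathcal{D}}_T^\tau(\cdot)\Vert_{L^2(\Omega)}$ (cf.\ \eqref{equivalent}), the two dual spaces can be identified up to equivalent norms, so that $\vert l(v)\vert \le C\,\Vert f\Vert_\star\,\Vert v\Vert_{\mathfrak{B}^{\tau,\nu_1,\cdots,\nu_d}(\Omega)}$.

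With all three hypotheses in place, the generalized Lax--Milgram theorem then delivers a unique $u \in \mathcal{B}^{\tau,\nu_1,\cdots,\nu_d}(\Omega)$ solving \eqref{Eq: general weak form}, along with the a priori bound $\Vert u\Vert_{\mathcal{B}^{\tau,\nu_1,\cdots,\nu_d}(\Omega)} \le \tfrac{1}{\beta}\Vert f\Vert_{(\mathcal{B}^{\tau,\nu_1,\cdots,\nu_d})^\star(\Omega)}$, which is exactly the asserted continuous dependence on $f$. Uniqueness needs only the \emph{inf-sup} estimate: if $a(u,v)=0$ for all $v$, that estimate forces $\Vert u\Vert_{\mathcal{B}^{\tau,\nu_1,\cdots,\nu_d}(\Omega)}=0$.

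I expect the main obstacle to be the careful verification of the non-degeneracy condition (iii) and, relatedly, the clean identification of the data space. Because the solution and test spaces carry \emph{different} (left- versus right-sided) temporal boundary conditions, one must confirm that the transpose \emph{inf-sup} genuinely holds and that the pairing $(f,v)_\Omega$ is well defined for $f$ lying in the dual of $\mathcal{B}^{\tau,\nu_1,\cdots,\nu_d}(\Omega)$ rather than of $\mathfrak{B}^{\tau,\nu_1,\cdots,\nu_d}(\Omega)$. Both points reduce to the norm and seminorm equivalences assembled in Lemmas \ref{lemaa32}, \ref{norm_221}, and \ref{norm_223}, so no genuinely new estimate is required, only their careful bookkeeping.
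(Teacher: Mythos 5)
Your proposal coincides with the paper's proof, which likewise deduces well-posedness by combining the continuity bound of Lemma \ref{continuity_lem} with the \textit{inf-sup} condition of Theorem \ref{inf_sup_d_lem} and invoking the generalized Babu\v{s}ka--Lax--Milgram (Banach--Ne\v{c}as--Babu\v{s}ka) theorem. Your explicit verification of the transpose non-degeneracy condition and of the boundedness of $l(v)=(f,v)_{\Omega}$ is careful bookkeeping that the paper leaves implicit (it folds the non-degeneracy requirement $\sup_{u}\vert a(u,v)\vert>0$ into the hypotheses of Theorem \ref{inf_sup_d_lem}), so your argument is, if anything, more complete than the one-line proof given there.
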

\begin{proof}
The continuity and the \textit{inf-sup} condition, which are proven in Lemmas \ref{continuity_lem}, \ref{inf_sup_d_lem} respectively, yield the well-posedness of the weak form in \eqref{Eq: general weak form} in (1+d)-dimension due to the generalized Babu\v{s}ka-Lax-Milgram theorem \cite{shen2011spectral}.
\end{proof}

%

%

\begin{thm}
\label{Thm: inf-sup_3}
The Petrov-Gelerkin spectral method for \eqref{Eq: infinit-dim PG method_1122} is stable, i.e., 
\begin{eqnarray}
\label{Eq: inf sup-time}
&&\underset{0 \neq u_N \in U_N}{\inf}\, \, \underset{0 \neq v \in V_N}{\sup}
 \frac{\vert a(u_N , v_N)\vert}{\Vert v_N\Vert_{\mathfrak{B}^{\tau,\nu_1,\cdots,\nu_d}(\Omega)}\Vert u_N\Vert_{\mathcal{B}^{\tau,\nu_1,\cdots,\nu_d}(\Omega)}} \geq \beta > 0, \quad 
\end{eqnarray}
holds with $\beta > 0$ and independent of $N$, where $\underset{u_N \in U_N}{\sup} \vert a(u_N , v_N)\vert>0$.
\end{thm}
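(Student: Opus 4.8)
The plan is to establish the discrete \textit{inf-sup} condition by exhibiting, for each fixed $0\neq u_N\in U_N$, an explicit test function $v_N\in V_N$ that realizes a lower bound uniform in $N$; the supremum over $V_N$ then dominates the value attained at this particular $v_N$. The guiding observation is that the continuous \textit{inf-sup} argument of Theorem~\ref{inf_sup_d_lem} used only the derivative-pairing equivalences \eqref{equiv_space}, \eqref{equiv_space2} and Lemma~\ref{norm_223}, all of which hold verbatim for the finite-dimensional elements $u_N,v_N$. The sole gap is that the discrete supremum ranges over $V_N$ rather than all of $\mathfrak{B}^{\tau,\nu_1,\cdots,\nu_d}(\Omega)$, so the near-maximizing test function must be produced \emph{inside} $V_N$.

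First I would define the map $U_N\to V_N$ dictated by the paired construction \eqref{Eq: Trial Space: PG}--\eqref{Eq: Test Space: PG}: writing $u_N=\sum_{n,\vec m}\hat u_{n,\vec m}\,\phi_n(t)\prod_{j=1}^{d}\varphi_{m_j}(x_j)$, where $\phi_n$ is the left temporal poly-fractonomial and $\varphi_{m_j}=P_{m_j+1}-P_{m_j-1}$, I set $v_N=\sum_{n,\vec m}\hat u_{n,\vec m}\,\psi_n(t)\prod_{j=1}^{d}\varphi_{m_j}(x_j)$, keeping the same coefficients and the same Legendre-difference spatial factors but replacing $\phi_n$ by the mirrored right poly-fractonomial $\psi_n$. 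With this choice the tensor-product structure of $a(\cdot,\cdot)$ in \eqref{Eq: infinit-dim PG method_1122} separates each summand into a temporal inner product times $d$ spatial inner products.

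Next I would diagonalize each factor using the fractional-calculus identities \eqref{6}, \eqref{7}, \eqref{Eq: 10}, \eqref{Eq: 11}: the left (respectively right) fractional derivative of $\phi_n$ (respectively $\psi_n$) collapses to a genuine Jacobi/Legendre polynomial, and likewise $\prescript{}{a_i}{\mathcal{D}}_{x_i}^{\nu_i}\varphi_{m_i}$ and $\prescript{}{x_i}{\mathcal{D}}_{b_i}^{\nu_i}\varphi_{m_i}$ reduce to weighted Jacobi polynomials. Orthogonality then makes the temporal pairing diagonal and positive in the index $n$, and renders each spatial pairing a positive-definite, banded form in the $m_j$; hence $a(u_N,v_N)$ equals a positive spectral combination of the $\vert\hat u_{n,\vec m}\vert^2$. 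Quantitatively, each directional inner product is bounded below by the product of the corresponding seminorms of $u_N$ and $v_N$, exactly as in \eqref{equiv_space}, \eqref{equiv_space2}, Lemma~\ref{norm_223}, and \eqref{lema3.6.2}. Assembling these lower bounds through the same two-step chain \eqref{thm123}--\eqref{thm124} used in Theorem~\ref{inf_sup_d_lem} yields $\vert a(u_N,v_N)\vert\ge \beta\,\Vert u_N\Vert_{\mathcal{B}^{\tau,\nu_1,\cdots,\nu_d}(\Omega)}\Vert v_N\Vert_{\mathfrak{B}^{\tau,\nu_1,\cdots,\nu_d}(\Omega)}$.

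The crux, and the step I expect to be the main obstacle, is verifying that $\beta$ is \emph{independent of $N$}. This hinges on two points: that the matched construction forces $\Vert v_N\Vert_{\mathfrak{B}^{\tau,\nu_1,\cdots,\nu_d}(\Omega)}\equiv\Vert u_N\Vert_{\mathcal{B}^{\tau,\nu_1,\cdots,\nu_d}(\Omega)}$ with degree-independent constants, and that the equivalence constants in \eqref{equivalent}, \eqref{eq14}, Lemma~\ref{norm_221}, Lemma~\ref{norm_223}, and \eqref{equiv_space}--\eqref{equiv_space2} depend only on $\tau$, the $\nu_i$, and $\Omega$, never on $\mathcal{N}$ or the $\mathcal{M}_j$. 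The delicate bookkeeping lies in the mixed space-time tensor structure, where one must confirm that the cross terms arising from the two-sided (left/right) derivative pairings do not degrade positivity as the polynomial degrees grow. Once the uniform lower bound is secured, dividing by $\Vert v_N\Vert_{\mathfrak{B}^{\tau,\nu_1,\cdots,\nu_d}(\Omega)}\Vert u_N\Vert_{\mathcal{B}^{\tau,\nu_1,\cdots,\nu_d}(\Omega)}$ and taking the supremum over $0\neq v_N\in V_N$ gives the stated \textit{inf-sup} inequality, while the non-degeneracy $\underset{u_N\in U_N}{\sup}\vert a(u_N,v_N)\vert>0$ follows from the strict positivity of the diagonalized form.
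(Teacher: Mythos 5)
You have the right instinct in your opening paragraph, and in fact that paragraph already contains the entire proof as the paper gives it: Theorem \ref{inf_sup_d_lem} (built on \eqref{equiv_space}, \eqref{equiv_space2}, Lemma \ref{norm_223}, and the chain \eqref{thm123}--\eqref{thm124}) establishes the \emph{pairwise} bound $\vert a(u,v)\vert \ge \beta\, \Vert u\Vert_{\mathcal{B}^{\tau,\nu_1,\cdots,\nu_d}(\Omega)}\Vert v\Vert_{\mathfrak{B}^{\tau,\nu_1,\cdots,\nu_d}(\Omega)}$ for \emph{all} $u\in\mathcal{B}^{\tau,\nu_1,\cdots,\nu_d}(\Omega)$ and \emph{all} $v\in\mathfrak{B}^{\tau,\nu_1,\cdots,\nu_d}(\Omega)$, not merely for some near-maximizing $v$ depending on $u$. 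Consequently the ``sole gap'' you identify --- that the discrete supremum ranges only over $V_N$ --- is not a gap in this framework: since $U_N\subset\mathcal{B}^{\tau,\nu_1,\cdots,\nu_d}(\Omega)$ and $V_N\subset\mathfrak{B}^{\tau,\nu_1,\cdots,\nu_d}(\Omega)$, the bound restricts verbatim to every pair $(u_N,v_N)$, with a constant $\beta$ that never depended on $\mathcal{N}$ or the $\mathcal{M}_j$; the paper's proof of Theorem \ref{Thm: inf-sup_3} is exactly this one line, citing Theorem \ref{Thm: Well-Posedness_1D}. (Your constructive strategy would be the right instinct if the continuous result were only a true \textit{inf-sup} statement --- ``for each $u$ there exists $v$'' --- since that form does not pass to subspaces; but a pairwise lower bound does, trivially.)

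The constructive detour you then build is where the genuine gaps lie, so the distinctive part of your argument does not go through as written. First, the spatial pairings do not diagonalize: applying \eqref{Eq: 10}--\eqref{Eq: 11} to $\varphi_m=P_{m+1}-P_{m-1}$ turns $( \prescript{}{a_i}{\mathcal{D}}_{x_i}^{\nu_i}\varphi_{m},\, \prescript{}{x_i}{\mathcal{D}}_{b_i}^{\nu_i}\varphi_{r})$ into integrals of $P^{\nu_i,-\nu_i}$-type against $P^{-\nu_i,\nu_i}$-type Jacobi polynomials with weight $(1-x)^{-\nu_i}(1+x)^{-\nu_i}$, which matches neither family's orthogonality weight; the resulting stiffness matrices are dense and nonsymmetric, so ``positive-definite, banded'' is unsupported. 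Second, the bilinear form \eqref{Eq: infinit-dim PG method_1122} mixes signs (the $c_{l_i},c_{r_i}$ terms of order $\mu_i$ enter with $+$, the $\kappa_{l_j},\kappa_{r_j}$ terms with $-$, plus $\gamma(u_N,v_N)_{\Omega}$), so even with the correct positive-diagonal temporal factor $(\prescript{}{0}{\mathcal{D}}_{t}^{\tau}\phi_n,\prescript{}{t}{\mathcal{D}}_{T}^{\tau}\psi_k)_I\propto\delta_{nk}$ you cannot conclude termwise that $a(u_N,v_N)$ is ``a positive spectral combination of the $\vert\hat u_{n,\vec m}\vert^2$.'' Third, in the spatial stiffness blocks the temporal coupling is the matrix $(\phi_n,\psi_k)_I$, again dense and nonsymmetric, and the $N$-uniform positivity of these mixed tensor blocks --- which you yourself flag as ``the crux'' and ``the main obstacle'' --- is asserted rather than proven; a proof that concedes its crux is incomplete. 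If you want the theorem within the paper's framework, drop the construction and invoke the pairwise bound of Theorem \ref{inf_sup_d_lem} directly on the subspaces, which is what the paper does.
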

\begin{proof}
It is clear that the basis /test spaces are Hilbert spaces. Since $U_N \subset \mathcal{B}^{\tau,\nu_1,\cdots,\nu_d}(\Omega)$ and $V_N \subset \mathfrak{B}^{\tau,\nu_1,\cdots,\nu_d}(\Omega)$, \eqref{Eq: inf sup-time} follows directly from Theorem \ref{Thm: Well-Posedness_1D}.
\end{proof}

\section{Error Analysis}
\label{Sec: error analysis of PG}
Let $P_\mathcal{M}(\Lambda)$ denote the space of all polynomials of degree $\leq \mathcal{M}$ on $\Lambda$, where $\Lambda \subset \mathbb{R}$. $P^s_\mathcal{M}(\Lambda)$ denotes $P_\mathcal{M}(\Lambda) \cap H^s_0(\Lambda)$ for any real positive $s$, where $H^s_0(\Lambda)$ is the closure of $C_0^{\infty}(\Lambda)$ in $\Lambda$ with respect to $\Vert \cdot \Vert_{{^c}H^s(\Lambda)}$.
In this section, $I_i=(a_i,b_i)$ for $i=1,...,d$, $\Lambda_i=I_i\times \Lambda_{i-1}$, and $\Lambda_i^j=\prod_{\underset{k\neq j}{k=1}}^{i} I_k$.

\begin{thm}
\label{err_1}
\cite{maday1990analysis} Let $r_1$ be a real number, where $r_1\neq \mathcal{M}_1 + \frac{1}{2}$, and $1\leq r_1$. There exists an projection operator $\Pi^{\nu_1}_{r_1,\, \mathcal{M}_1}$ from $H^{r_1}(\Lambda_1) \cap H^{\nu_1}_0(\Lambda_1)$ to $P^{\nu_1}_{\mathcal{M}_1}$ such that for any $u\in H^{r_1}(\Lambda_1) \cap H^{\nu_1}_0(\Lambda_1),$ we have $\Vert u-\Pi^{\nu_1}_{r_1,\, \mathcal{M}_1} u \Vert_{{^c}H^{\nu_1}(\Lambda_1)}\leq c_1 \mathcal{M}_1^{\nu_1-r_1} \Vert u \Vert_{H^{r_1}(\Lambda_1)}$, where $c_1$ is a positive constant.
\end{thm}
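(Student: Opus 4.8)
The plan is to reduce the assertion to a classical one-dimensional spectral approximation estimate and then to pass from integer to fractional indices by interpolation. First I would transplant the problem to the reference interval $(-1,1)$ through the affine map $\xi_1(x_1) = 2\frac{x_1-a_1}{b_1-a_1}-1$; this rescaling affects only the constant $c_1$ and not the power of $\mathcal{M}_1$. By the norm equivalence \eqref{eq14} together with Lemma \ref{lemma31}, the norm $\Vert \cdot \Vert_{{^c}H^{\nu_1}(\Lambda_1)}$ is equivalent to the usual Sobolev norm $\Vert \cdot \Vert_{H^{\nu_1}(\Lambda_1)}$ on functions in $H^{\nu_1}_0(\Lambda_1)$, so it suffices to prove the estimate with $\Vert \cdot \Vert_{H^{\nu_1}(\Lambda_1)}$ on the left. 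Note that since $2\nu_1\in(1,2)$ forces $\nu_1<1\leq r_1$, the exponent $\nu_1-r_1$ is negative and the bound is genuinely a convergence rate.

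Second, I would define $\Pi^{\nu_1}_{r_1,\mathcal{M}_1}u$ as the orthogonal projection of $u$ onto $P^{\nu_1}_{\mathcal{M}_1} = P_{\mathcal{M}_1}(\Lambda_1)\cap H^{\nu_1}_0(\Lambda_1)$ with respect to the inner product that induces the $H^{\nu_1}_0$-seminorm. The best-approximation property of an orthogonal projection then reduces the task to bounding $\inf_{p\in P^{\nu_1}_{\mathcal{M}_1}}\Vert u-p\Vert_{H^{\nu_1}(\Lambda_1)}$ by $c_1\mathcal{M}_1^{\nu_1-r_1}\Vert u\Vert_{H^{r_1}(\Lambda_1)}$, removing the projector itself from the argument. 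Equivalently, the projection is characterized by the fractional derivative of the error being $L^2$-orthogonal to the fractional derivatives of all trial polynomials.

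Third, I would establish this best-approximation bound first at integer values of $r_1$ and $\nu_1$. The natural comparison polynomial is the truncation of the expansion of $u$ in a boundary-adapted basis — for instance the iterated antiderivatives of Legendre polynomials, which vanish at $\pm1$ to the order needed so that every partial sum lies in $P^{\nu_1}_{\mathcal{M}_1}$. Using the Jacobi poly-fractonomial and Legendre fractional calculus relations \eqref{6}--\eqref{Eq: 11}, the fractional derivatives of such boundary-adapted functions form a weighted-orthogonal family, so the projection error reduces to an $L^2$-projection error of ${}_{a_1}\mathcal{D}_{x_1}^{\nu_1}u$; the decay of the coefficients is controlled by $\Vert u\Vert_{H^{r_1}(\Lambda_1)}$ through repeated integration by parts, and truncating at degree $\mathcal{M}_1$ yields the algebraic rate $\mathcal{M}_1^{\nu_1-r_1}$ at the integer endpoints. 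For non-integer $r_1,\nu_1$ I would then interpolate the operator $I-\Pi^{\nu_1}_{r_1,\mathcal{M}_1}$ between the integer-order Sobolev spaces by real or complex interpolation, recovering the stated fractional rate.

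The hard part is the last step: one must guarantee that the comparison polynomial genuinely respects the homogeneous boundary conditions, i.e. remains in $H^{\nu_1}_0(\Lambda_1)$, and that the intermediate spaces produced by interpolation, $[H^{r_1^{(0)}},H^{r_1^{(1)}}]_\theta$ and $[H^{\nu_1^{(0)}},H^{\nu_1^{(1)}}]_\theta$, actually identify with the fractional Sobolev spaces appearing in the estimate. This identification fails precisely at half-integer smoothness, where trace and embedding properties become critical, which is the reason for the exclusion $r_1\neq \mathcal{M}_1+\frac{1}{2}$ and the half-integer restrictions built into the definitions of the spaces. Since all of these verifications are carried out in \cite{maday1990analysis}, the most economical route for the present paper is to check that its hypotheses hold on $\Lambda_1$ and to quote the result directly.
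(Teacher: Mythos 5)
Your proposal is correct and, in substance, matches the paper's treatment: the paper does not prove Theorem \ref{err_1} itself but quotes it from \cite{maday1990analysis}, remarking only that Maday's proof rests on the Legendre/Chebyshev approximation estimates of \cite{canuto1982approximation} — essentially the integer-order best-approximation bound plus Sobolev interpolation route, with the half-integer exclusions, that you sketch before recommending the direct citation. Your concluding suggestion to verify the hypotheses on $\Lambda_1$ and invoke the cited result is exactly what the paper does.
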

Maday in \cite{maday1990analysis} proved Theorem \ref{err_1} using the error estimate provided in \cite{canuto1982approximation} for Legendre and Chebyshev polynomials. Next, this theorem is extended to Jacobi polyfractonomials of first kind.
\begin{thm}
\label{err_2}
\cite{zayernouri2015unified} Let $r_0 \geq \lceil 2\tau \rceil $, $r_0\neq \mathcal{N}+\frac{1}{2}$ and $2\tau \in (0,2)$, $2\tau\neq 1$. There exists an operator $\Pi^{\tau}_{r_0,\, \mathcal{N}}$ from $H^{r_0}(I) \cap {^l}H^{2\tau_1}(I)$ to $P^{\tau}_{\mathcal{N}}$ such that for any $u\in H^{r_0}(I) \cap {^l}H^{\tau}(I)$, we have $$\Vert u-\Pi^{\tau}_{r_0,\, \mathcal{N}} u \Vert_{{^l}H^{\tau}(I)}\leq c_0 \mathcal{N}^{\tau-r_0} \Vert u \Vert_{H^{r_0}(I)},$$ where $c_0$ is a positive constant.
\end{thm}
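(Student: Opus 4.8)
The plan is to reduce this fractional-order estimate to the classical polynomial approximation bound of Theorem~\ref{err_1} by exploiting the exact correspondence, recorded in \eqref{6} and \eqref{Eq: 10}, between Jacobi poly-fractonomials of the first kind and Legendre polynomials under left-sided Riemann--Liouville integration and differentiation. After the affine change of variable $\eta=2t/T-1$ carrying $I=(0,T)$ onto $(-1,1)$, relation \eqref{6} with $\alpha=\beta=0$ and $\nu=\tau$ shows that $\prescript{}{0}{\mathcal{I}}_{t}^{\tau}$ maps each Legendre polynomial of degree $n$ onto a scalar multiple of the poly-fractonomial $(1+\eta)^{\tau}P_{n}^{-\tau,\tau}$, so that its range on $P_{\mathcal{N}}$ is precisely the temporal trial space $P^{\tau}_{\mathcal{N}}$; conversely, by \eqref{Eq: 10}, $\prescript{}{0}{\mathcal{D}}_{t}^{\tau}$ sends $P^{\tau}_{\mathcal{N}}$ back onto ordinary polynomials. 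This structure motivates defining the projector as the composition
\[
\Pi^{\tau}_{r_0,\,\mathcal{N}} := \prescript{}{0}{\mathcal{I}}_{t}^{\tau}\circ\pi_{\mathcal{N}}\circ\prescript{}{0}{\mathcal{D}}_{t}^{\tau},
\]
where $\pi_{\mathcal{N}}$ denotes the standard $L^2(I)$-orthogonal Legendre projection; by the observation above its image lies in $P^{\tau}_{\mathcal{N}}$.

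The second step is the commutation identity $\prescript{}{0}{\mathcal{D}}_{t}^{\tau}\,\Pi^{\tau}_{r_0,\,\mathcal{N}}u=\pi_{\mathcal{N}}\big(\prescript{}{0}{\mathcal{D}}_{t}^{\tau}u\big)$, which follows from the left-inverse property $\prescript{}{0}{\mathcal{D}}_{t}^{\tau}\,\prescript{}{0}{\mathcal{I}}_{t}^{\tau}=\mathrm{Id}$ of the Riemann--Liouville operators. Setting $g:=\prescript{}{0}{\mathcal{D}}_{t}^{\tau}u$ and using the seminorm characterization $\vert\,\cdot\,\vert_{{^l}H^{\tau}(I)}=\Vert\prescript{}{0}{\mathcal{D}}_{t}^{\tau}(\cdot)\Vert_{L^2(I)}$ from \eqref{equivalent}, the fractional error collapses to the purely polynomial quantity
\[
\big\vert\,u-\Pi^{\tau}_{r_0,\,\mathcal{N}}u\,\big\vert_{{^l}H^{\tau}(I)}=\Vert\,g-\pi_{\mathcal{N}}g\,\Vert_{L^2(I)}.
\]
To this I would apply the Legendre $L^2$-projection estimate underlying Theorem~\ref{err_1}, namely $\Vert g-\pi_{\mathcal{N}}g\Vert_{L^2(I)}\le c\,\mathcal{N}^{-(r_0-\tau)}\Vert g\Vert_{H^{r_0-\tau}(I)}$, and then invoke the boundedness $\Vert\prescript{}{0}{\mathcal{D}}_{t}^{\tau}u\Vert_{H^{r_0-\tau}(I)}\le c\,\Vert u\Vert_{H^{r_0}(I)}$ of fractional differentiation between Sobolev spaces, which is legitimate because $r_0\ge\lceil 2\tau\rceil$ keeps the target order $r_0-\tau$ nonnegative. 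Chaining these bounds yields the claimed rate $\mathcal{N}^{\tau-r_0}$; the passage from the seminorm to the full ${^l}H^{\tau}(I)$ norm in the statement follows from a Poincar\'e-type inequality on the homogeneous space, where seminorm and norm are equivalent.

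The main obstacle will be justifying the commutation identity at the required rigor. While $\prescript{}{0}{\mathcal{D}}_{t}^{\tau}\,\prescript{}{0}{\mathcal{I}}_{t}^{\tau}=\mathrm{Id}$ is immediate, one must verify that $\prescript{}{0}{\mathcal{I}}_{t}^{\tau}\,\pi_{\mathcal{N}}\,\prescript{}{0}{\mathcal{D}}_{t}^{\tau}$ interacts correctly with the homogeneous initial data encoded in the space, and this is exactly where the hypothesis $2\tau\neq 1$ and the vanishing trace conditions enter, separating the cases $2\tau\in(0,1)$ (one initial condition) and $2\tau\in(1,2)$ (two initial conditions). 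A secondary technical point is establishing the mapping property of $\prescript{}{0}{\mathcal{D}}_{t}^{\tau}$ from $H^{r_0}(I)$ into $H^{r_0-\tau}(I)$, which I would handle through the whole-line extension used to define the $H^{s}(I)$ norms together with the restriction $r_0\neq\mathcal{N}+\tfrac{1}{2}$ that keeps the intermediate fractional indices non-critical.
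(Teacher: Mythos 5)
The paper offers no proof of Theorem~\ref{err_2}: it is imported verbatim from \cite{zayernouri2015unified}, with only the remark that Theorem~\ref{err_1} of Maday ``is extended to Jacobi polyfractonomials of first kind.'' Your construction $\Pi^{\tau}_{r_0,\,\mathcal{N}}=\prescript{}{0}{\mathcal{I}}_{t}^{\tau}\circ\pi_{\mathcal{N}}\circ\prescript{}{0}{\mathcal{D}}_{t}^{\tau}$ is precisely the lifting mechanism used in that reference, and the skeleton is sound: by \eqref{6} with $\alpha=\beta=0$ the image lands in the poly-fractonomial space; the commutation $\prescript{}{0}{\mathcal{D}}_{t}^{\tau}\,\Pi^{\tau}_{r_0,\,\mathcal{N}}u=\pi_{\mathcal{N}}\big(\prescript{}{0}{\mathcal{D}}_{t}^{\tau}u\big)$ needs only $\prescript{}{0}{\mathcal{D}}_{t}^{\tau}\prescript{}{0}{\mathcal{I}}_{t}^{\tau}=\mathrm{Id}$ applied to the polynomial $\pi_{\mathcal{N}}g$, so it is actually the \emph{easy} part, not the ``main obstacle'' you anticipate; and the passage from the seminorm of \eqref{equivalent} to the full $\Vert\cdot\Vert_{{^l}H^{\tau}(I)}$ norm is fine, since $u\in{^l}H^{\tau}(I)$ gives $u=\prescript{}{0}{\mathcal{I}}_{t}^{\tau}\prescript{}{0}{\mathcal{D}}_{t}^{\tau}u$, whence $u-\Pi^{\tau}_{r_0,\,\mathcal{N}}u=\prescript{}{0}{\mathcal{I}}_{t}^{\tau}(g-\pi_{\mathcal{N}}g)$ with $\prescript{}{0}{\mathcal{I}}_{t}^{\tau}$ bounded on $L^2(I)$.

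The genuine gap is the step you dismiss as ``secondary'': the regularity transfer $\Vert\prescript{}{0}{\mathcal{D}}_{t}^{\tau}u\Vert_{H^{r_0-\tau}(I)}\leq c\,\Vert u\Vert_{H^{r_0}(I)}$, which is the entire analytic content of the theorem, and your proposed justification via ``the whole-line extension used to define the $H^{s}(I)$ norms'' does not deliver it under the stated hypotheses. The hypothesis $u\in H^{r_0}(I)\cap{^l}H^{\tau}(I)$ encodes essentially a single vanishing trace at $t=0$; extension of such a $u$ by zero to $\mathbb{R}$ lies in $H^{s}(\mathbb{R})$ only for $s<\tfrac{3}{2}$ when $u'(0)\neq 0$, so the Fourier-multiplier argument cannot be run at order $r_0-\tau$ once $r_0$ exceeds $\tfrac{3}{2}$. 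Concretely, for smooth $u$ with $u(0)=0$ and $u'(0)\neq 0$ one has $\prescript{}{0}{\mathcal{D}}_{t}^{\tau}u\sim c\,t^{1-\tau}$ near $t=0$, and $t^{1-\tau}\in H^{s}(I)$ only for $s<\tfrac{3}{2}-\tau$; hence $\prescript{}{0}{\mathcal{D}}_{t}^{\tau}u\notin H^{r_0-\tau}(I)$ for any $r_0>\tfrac{3}{2}$, even though $u$ is as regular as one likes in $H^{r_0}(I)$. To close the argument you must either strengthen the hypothesis so that all traces of $u$ at $t=0$ up to the relevant order vanish (i.e., $u\in{^l}H^{r_0}(I)$), or measure the right-hand side in a weighted (Jacobi-type) Sobolev norm of $\prescript{}{0}{\mathcal{D}}_{t}^{\tau}u$, as is done in the sharp versions of such poly-fractonomial approximation results; the condition $r_0\neq\mathcal{N}+\tfrac{1}{2}$, which you invoke for this purpose, plays no role here --- it is inherited from the half-integer restriction in Maday's Theorem~\ref{err_1} and cannot substitute for the missing trace conditions.
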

Li and Xu in \cite{li2010existence} performed the error analysis for the space-time fractional diffusion equation, employing Lagrangian polynomials. Here, employing Theorems \ref{err_1} and \ref{err_2} and Theorem $A.3$ from \cite{bernardi1987spectral}, we study the properties of higher-dimensional approximation operators in the following lemmas.

\begin{lem}
\label{err_3}
Let the real-valued $1\leq r_1, \, r_2$, $I_i=(a_i,b_i)$ $i=1,2$, $\Omega=I_1 \times I_2$, and $\frac{1}{2}<\nu_1,\nu_2<1$. If $u \in \mathcal{B}^{\nu_1,\nu_2}(\Omega)={}H^{\nu_2}_0(I_2,H^{r_1}(I_1))\cap H^{r_2}(I_2,{}H^{\nu_1}_0(I_1))$, then
\begin{eqnarray}
\label{err_3_1}
&&\Vert u- \Pi^{\nu_1}_{r_1,\, \mathcal{M}_1} \Pi^{\nu_2}_{r_2,\, \mathcal{M}_2} u \Vert_{ \mathcal{B}^{\nu_1,\nu_2}(\Omega)} \leq 
\nonumber
\\
&& \beta \Big(\mathcal{M}^{\nu_2-r_2}_{2} \Vert u \Vert_{{}H^{r_2}(I_2,L^2(I_1))} +  \mathcal{M}^{\nu_2-r_2}_{2} \mathcal{M}^{-r_1}_{1}  \Vert u \Vert_{{}H^{r_2}(I_2,{}H^{r_1}(I_1))} +  \mathcal{M}^{-r_1}_{1}  \Vert u \Vert_{{^c}H^{\nu_2}(I_2,{}H^{r_1}(I_1))} 
\nonumber
\\
&&
+ \mathcal{M}^{\nu_1-r_1}_{1} \Vert u \Vert_{{}H^{r_1}(I_1,L^2(I_2))} +  \mathcal{M}^{\nu_1-r_1}_{1} \mathcal{M}^{-r_2}_{2}  \Vert u \Vert_{{}H^{r_1}(I_1,H^{r_2}(I_2)} + \mathcal{M}^{-r_2}_{2}  \Vert u \Vert_{{^c}H^{\nu_1}(I_1,H^{r_2}(I_2))}\Big) , \quad \quad
\end{eqnarray}
where $\Vert \cdot \Vert_{ \mathcal{B}^{\nu_1,\nu_2}(\Omega)}=\big{ \{ } \Vert \cdot \Vert_{{^c}H^{\nu_1}(I_1,L^2(I_2))}^2+\Vert \cdot \Vert_{{^c}H^{\nu_2}(I_1,L^2(I_1))}^2\big{ \} }^{\frac{1}{2}}$, and $\beta>0$.
\end{lem}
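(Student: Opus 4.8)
The plan is to establish the two-dimensional estimate \eqref{err_3_1} by decomposing the full error into two triangle-inequality pieces, applying the one-dimensional projection estimates of Theorems \ref{err_1} and \ref{err_2} (here specialized to the case $2\tau$ replaced by the spatial orders $\nu_1,\nu_2$, since both factors lie in $(\tfrac12,1)$), and then invoking the tensor-product stability result Theorem $A.3$ of \cite{bernardi1987spectral} to control the mixed commuted projection. Writing $\Pi_1 := \Pi^{\nu_1}_{r_1,\mathcal{M}_1}$ acting in the $x_1$ variable and $\Pi_2 := \Pi^{\nu_2}_{r_2,\mathcal{M}_2}$ acting in the $x_2$ variable, the first step is the splitting
\begin{equation}
\label{err3_split}
u - \Pi_1\Pi_2 u = (u - \Pi_2 u) + \Pi_2(u - \Pi_1 u) + (\Pi_1\Pi_2 - \Pi_2\Pi_1)u,
\end{equation}
although since $\Pi_1,\Pi_2$ act on disjoint variables they commute, so in fact the cross term vanishes and the cleaner decomposition $u-\Pi_1\Pi_2 u = (u-\Pi_1 u) + \Pi_1(u - \Pi_2 u)$ is the one I would actually use.

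Next I would estimate each piece in the two constituent seminorms that make up $\Vert\cdot\Vert_{\mathcal{B}^{\nu_1,\nu_2}(\Omega)}$, namely the ${^c}H^{\nu_1}(I_1,L^2(I_2))$ part and the ${^c}H^{\nu_2}(I_2,L^2(I_1))$ part. For the term $u-\Pi_1 u$ one applies Theorem \ref{err_1} in the $x_1$ direction pointwise in $x_2$ and integrates, yielding a factor $\mathcal{M}_1^{\nu_1-r_1}\Vert u\Vert_{H^{r_1}(I_1,L^2(I_2))}$ for the first seminorm; for the $x_2$-seminorm of the same term one must first commute the time/space derivative through $\Pi_1$ (using that $\Pi_1$ acts only in $x_1$ and is stable, so $\prescript{}{a_2}{\mathcal{D}}_{x_2}^{\nu_2}$ passes inside) and then apply Theorem \ref{err_1}, producing the mixed term $\mathcal{M}_1^{-r_1}\Vert u\Vert_{{^c}H^{\nu_2}(I_2,H^{r_1}(I_1))}$. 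The term $\Pi_1(u-\Pi_2 u)$ is handled symmetrically: its $x_2$-seminorm gives $\mathcal{M}_2^{\nu_2-r_2}\Vert u\Vert_{H^{r_2}(I_2,L^2(I_1))}$ after using stability of $\Pi_1$ in $L^2(I_1)$, while its $x_1$-seminorm requires bounding $\prescript{}{a_1}{\mathcal{D}}_{x_1}^{\nu_1}\Pi_1(u-\Pi_2 u)$, which by Theorem \ref{err_1} applied to $u-\Pi_2 u$ plus Theorem \ref{err_2}-type bounds on the $x_2$-regularity yields the two cross terms $\mathcal{M}_1^{\nu_1-r_1}\mathcal{M}_2^{-r_2}\Vert u\Vert_{H^{r_1}(I_1,H^{r_2}(I_2))}$ and $\mathcal{M}_2^{-r_2}\Vert u\Vert_{{^c}H^{\nu_1}(I_1,H^{r_2}(I_2))}$. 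Summing these six contributions and taking $\beta$ to absorb the one-dimensional constants $c_0,c_1$ gives \eqref{err_3_1}.

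The main obstacle, and the step where Theorem $A.3$ of \cite{bernardi1987spectral} is essential, is controlling the mixed seminorms in which a \emph{fractional} derivative in one variable is applied to a projection operator acting in the \emph{other} variable — for example $\prescript{}{a_1}{\mathcal{D}}_{x_1}^{\nu_1}\big(\Pi_2(u-\Pi_1 u)\big)$. One cannot simply commute the fractional operator past $\Pi_2$ and treat the factors independently unless the tensor-product projection is shown to be stable in the relevant anisotropic Sobolev norm; the content of Theorem $A.3$ is precisely a stability/commutativity estimate for such tensorized operators in fractional-order spaces, allowing the anisotropic regularity $\Vert u\Vert_{H^{r_1}(I_1,H^{r_2}(I_2))}$ and the cross seminorms to appear with the correct product decay rates $\mathcal{M}_1^{\nu_1-r_1}\mathcal{M}_2^{-r_2}$ and $\mathcal{M}_2^{-r_2}$. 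A careful bookkeeping of which intermediate norm each one-dimensional estimate consumes — ensuring that the regularity demanded pointwise in the frozen variable is exactly the regularity carried by the anisotropic norms on the right-hand side — is the technical heart of the argument; everything else is the routine triangle inequality together with $L^2$-stability of the projectors.
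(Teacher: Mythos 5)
Your proposal is correct and follows essentially the same route as the paper's proof: the same triangle-inequality splitting of $u-\Pi^{\nu_1}_{r_1,\,\mathcal{M}_1}\Pi^{\nu_2}_{r_2,\,\mathcal{M}_2}u$ in each of the two seminorms of $\Vert\cdot\Vert_{\mathcal{B}^{\nu_1,\nu_2}(\Omega)}$, one-dimensional applications of Theorem \ref{err_1} in the frozen variable (using that the projectors act on disjoint variables and hence commute with the other variable's derivative), and the same six resulting terms. The one caveat is that the uniform $L^2$-stability of $\Pi^{\nu_1}_{r_1,\,\mathcal{M}_1}$ you invoke is not furnished by Theorem \ref{err_1}; the paper sidesteps it by inserting the identity, writing $\Pi=\mathcal{I}+(\Pi-\mathcal{I})$ and reusing the error estimate for $(\Pi-\mathcal{I})$ — which is exactly what your own bookkeeping of the cross terms $\mathcal{M}_1^{\nu_1-r_1}\mathcal{M}_2^{-r_2}$ and $\mathcal{M}_2^{-r_2}$ amounts to (and note the $L^2$-rate needed there is the $\nu=0$ instance of Theorem \ref{err_1}, not Theorem \ref{err_2}, which concerns the temporal polyfractonomial projector).
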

\begin{proof}
If $u \in {}H^{\nu_2}_0(I_2,H^{r_1}(I_1))\cap H^{r_2}(I_2,{}H^{\nu_1}_0(I_1))$, then evidently $u \in {}H^{r_2}_0(I_2,H^{r_1}(I_1))$, $u \in {}H^{r_2}_0(I_2,L^2(I_1))$, and $u \in {}H^{r_1}_0(I_1,L^2(I_2))$.
By the real-valued positive constant $\beta$, we have
\begin{eqnarray}
\label{2121}
&& \Vert u- \Pi^{\nu_1}_{r_1,\, \mathcal{M}_1} \Pi^{\nu_2}_{r_2,\, \mathcal{M}_2} u \Vert_{ \mathcal{B}^{\nu_1,\nu_2}(\Omega)}
\nonumber
\\
&&=\Big( \Vert u- \Pi^{\nu_1}_{r_1,\, \mathcal{M}_1} \Pi^{\nu_2}_{r_2,\, \mathcal{M}_2} u \Vert_{{^c}H^{\nu_2}(I_2,L^2(I_1))}^2 + \Vert u- \Pi^{\nu_1}_{r_1,\, \mathcal{M}_1} \Pi^{\nu_2}_{r_2,\, \mathcal{M}_2} u \Vert_{L^2(I_2,{^c}H^{\nu_1}(I_1))}^2 \Big)^{\frac{1}{2}}
\nonumber
\\
&&
\leq \beta \Big( \Vert u- \Pi^{\nu_1}_{r_1,\, \mathcal{M}_1} \Pi^{\nu_2}_{r_2,\, \mathcal{M}_2} u \Vert_{{^c}H^{\nu_2}(I_2,L^2(I_1))} + \Vert u- \Pi^{\nu_1}_{r_1,\, \mathcal{M}_1} \Pi^{\nu_2}_{r_2,\, \mathcal{M}_2} u \Vert_{L^2(I_2,{^c}H^{\nu_1}(I_1))} \Big).
\end{eqnarray}
By Theorem \ref{err_1}, \eqref{2121} can be simplified to
\begin{eqnarray}
\label{111334}
&&\Vert u- \Pi^{\nu_1}_{r_1,\, \mathcal{M}_1} \Pi^{\nu_2}_{r_2,\, \mathcal{M}_2} u \Vert_{{^c}H^{\nu_2}(I_2,L^2(I_1))} 
\nonumber
\\
&&
= 
\Vert u- \Pi^{\nu_2}_{r_2,\, \mathcal{M}_2} u + \Pi^{\nu_2}_{r_2,\, \mathcal{M}_2} u- \Pi^{\nu_1}_{r_1,\, \mathcal{M}_1} \Pi^{\nu_2}_{r_2,\, \mathcal{M}_2} u \Vert_{{^c}H^{\nu_2}(I_2,L^2(I_1))}
\nonumber
\\
&& \leq 
\Vert u- \Pi^{\nu_2}_{r_2,\, \mathcal{M}_2} u \Vert_{{^c}H^{\nu_2}(I_2,L^2(I_1))} + \Vert \Pi^{\nu_2}_{r_2,\, \mathcal{M}_2} u- \Pi^{\nu_1}_{r_1,\, \mathcal{M}_1} \Pi^{\nu_2}_{r_2,\, \mathcal{M}_2} u \Vert_{{^c}H^{\nu_2}(I_2,L^2(I_1))}
\nonumber
\\
&& \leq
\mathcal{M}^{\nu_2-r_2}_2 \Vert u \Vert_{{}H^{r_2}(I_2,L^2(I_1))} + \Vert (\Pi^{\nu_2}_{r_2,\, \mathcal{M}_2}-\mathcal{I})(u-\Pi^{\nu_1}_{r_1,\, \mathcal{M}_1} u) \Vert_{{^c}H^{\nu_2}(I_2,L^2(I_1))} 
\nonumber
\\
&&
\quad + \Vert u-\Pi^{\nu_1}_{r_1,\, \mathcal{M}_1} u \Vert_{{^c}H^{\nu_2}(I_2,L^2(I_1))}
\nonumber
\\
&&
 \leq
 \mathcal{M}^{\nu_2-r_2}_{2} \Vert u \Vert_{{}H^{r_2}(I_2,L^2(I_1))} +  \mathcal{M}^{\nu_2-r_2}_{2} \mathcal{M}^{-r_1}_{1}  \Vert u \Vert_{{}H^{r_2}(I_2,{}H^{r_1}(I_1))}  +  \mathcal{M}^{-r_1}_{1}  \Vert u \Vert_{{^c}H^{\nu_2}(I_2,{}H^{r_1}(I_1))}, \quad \quad
\end{eqnarray}
where $\mathcal{I}$ is the identity operator.

Since $\Vert u- \Pi^{\nu_1}_{r_1,\, \mathcal{M}_1} \Pi^{\nu_2}_{r_2,\, \mathcal{M}_2} u \Vert_{L^2(I_2,{^c}H^{\nu_1}(I_1))}   =\Vert u- \Pi^{\nu_1}_{r_1,\, \mathcal{M}_1} \Pi^{\nu_2}_{r_2,\, \mathcal{M}_2} u \Vert_{{^c}H^{\nu_1}(I_1,L^2(I_2))}$, we obtain
\begin{eqnarray}
\label{11133}
&&\Vert u- \Pi^{\nu_1}_{r_1,\, \mathcal{M}_1} \Pi^{\nu_2}_{r_2,\, \mathcal{M}_2} u \Vert_{L^2(I_2,{^c}H^{\nu_1}(I_1))}   
\nonumber
\\
&& 
= \Vert u- \Pi^{\nu_1}_{r_1,\, \mathcal{M}_1} u + \Pi^{\nu_1}_{r_1,\, \mathcal{M}_1} u- \Pi^{\nu_1}_{r_1,\, \mathcal{M}_1} \Pi^{\nu_2}_{r_2,\, \mathcal{M}_2} u \Vert_{{^c}H^{\nu_1}(I_1,L^2(I_2))}
\nonumber
\\
&& \leq 
\Vert u- \Pi^{\nu_1}_{r_1,\, \mathcal{M}_1} u \Vert_{{^c}H^{\nu_1}(I_1,L^2(I_2))} + \Vert \Pi^{\nu_1}_{r_1,\, \mathcal{M}_1} u- \Pi^{\nu_1}_{r_1,\, \mathcal{M}_1} \Pi^{\nu_2}_{r_2,\, \mathcal{M}_2} u \Vert_{{^c}H^{\nu_1}(I_1,L^2(I_2))}
\nonumber
\\
&& \leq
\mathcal{M}^{\nu_1-r_1}_1 \Vert u \Vert_{{}H^{r_1}(I_1,L^2(I_2))} + \Vert (\Pi^{\nu_1}_{r_1,\, \mathcal{M}_1}-\mathcal{I})(u-\Pi^{\nu_1}_{r_1,\, \mathcal{M}_1} u) \Vert_{{^c}H^{\nu_1}(I_1,L^2(I_2))} 
\nonumber
\\
&&
\quad + \Vert u-\Pi^{\nu_1}_{r_1,\, \mathcal{M}_1} u \Vert_{{^c}H^{\nu_1}(I_1,L^2(I_2))}
\nonumber
\\
&&
\leq
\mathcal{M}^{\nu_1-r_1}_{1} \Vert u \Vert_{{}H^{r_1}(I_1,L^2(I_2))} +  \mathcal{M}^{\nu_1-r_1}_{1} \mathcal{M}^{-r_2}_{2}  \Vert u \Vert_{{}H^{r_1}(I_1,H^{r_2}(I_2))} +  \mathcal{M}^{-r_2}_{2}  \Vert u \Vert_{{^c}H^{\nu_1}(I_1,H^{r_2}(I_2))}. \quad \quad 
\end{eqnarray}
Accordingly, \eqref{err_3_1} can be derived immediately from \eqref{11133} and \eqref{111334}.
\end{proof}
In order to perform the error analysis of (1+d)-dimensional PG method, we first study the approximation properties in three dimensions and then extend it to (1+d)-dimensions. It should be noted that in the following lemmas, $H^{r_{i+1},r_{i+2},\cdots,r_{i+k}}(I_{i+1}\times \cdots \times I_{i+k},L^2(\Lambda_d^{i+1,\cdots,i+k}))=H^{r_{i+1}}(I_{i+1},H^{r_{i+2}}(I_{i+2},\cdots,$ $H^{r_{i+k}}(I_{i+k},L^2(\Lambda_d^{i+1,\cdots,i+k})))$, where $\Lambda_d^{i+1,\cdots,i+k} = \prod_{\underset{k\neq i+1,\cdots,i+k}{j=1}}^{d} I_j$. Following Lemma \ref{err_3}, we introduce 
\begin{lem}
\label{err_4}
Let the real-valued $1\leq r_i$, $I_i=(a_i,b_i)$, $\Omega=\prod_{i=1}^{d}I_i$, $\Lambda_k=\prod_{i=1}^{k}I_i$, $\Lambda_k^j=\prod_{\underset{i\neq j}{i=1}}^{k}I_i$ and $\frac{1}{2}<\nu_i<1$ for $i=1,\cdots,d$. If $u \in {}H^{\nu_1}_0(I_1,H^{r_2,r_3}(\Lambda_3^1))\cap H^{r_1,r_3}(\Lambda_3^2,{^c}H^{\nu_2}_0(I_2))\cap H^{r_1,r_2}(\Lambda_2,{^c}H^{\nu_3}_0(I_3))$, then
\begin{eqnarray}
&&\Vert u- \Pi^{\nu_1}_{r_1,\, \mathcal{M}_1} \Pi^{\nu_2}_{r_2,\, \mathcal{M}_2} \Pi^{\nu_3}_{r_3,\, \mathcal{M}_3} u \Vert_{{^c}H^{\nu_i}(I_i,L^2(\Lambda_3^i))}
\nonumber
\\
&& \leq \beta \Big(\mathcal{M}^{\nu_i-r_i}_{i} \Vert u \Vert_{{}H^{r_i}(I_i,L^2(\Lambda_3^i))} + \mathcal{M}_i^{\nu_i-r_i} \mathcal{M}_j^{-r_j} \mathcal{M}_k^{-r_k} \Vert u \Vert_{{}H^{r_i,r_j,r_k}(\Lambda_3)} +  \mathcal{M}_j^{-r_j} \mathcal{M}_k^{-r_k} \Vert u \Vert_{{^c}H^{\nu_i}(I_i,H^{r_j}(I_j,L^2(I_k)))} 
\nonumber
\\
&& \, +
\sum_{\underset{j\neq i}{j=1}}^{3} \big(\mathcal{M}_i^{\nu_i-r_i} \mathcal{M}_2^{-r_j} \Vert u \Vert_{{}H^{r_i,r_j}(I_i\times I_j,L^2(\Lambda_3^{i,j})))} +  \mathcal{M}_j^{-r_j} \Vert u \Vert_{{^c}H^{\nu_i}(I_i,H^{r_j}(I_j,L^2(\Lambda_3^{i,j})))} \big)
\Big)
\end{eqnarray}
for $i=1,2,3$, $j=1,2,3$ and $j\neq i$, and $k=1,2,3$ and $k\neq i,j$, where $\beta >0$.
\end{lem}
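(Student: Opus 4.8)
The three operators $\Pi^{\nu_1}_{r_1,\mathcal{M}_1}$, $\Pi^{\nu_2}_{r_2,\mathcal{M}_2}$, $\Pi^{\nu_3}_{r_3,\mathcal{M}_3}$ act on mutually distinct variables, hence they commute; moreover, by the tensorization result of Theorem $A.3$ in \cite{bernardi1987spectral} each extends to a uniformly (in $\mathcal{M}$) bounded operator on the anisotropic Bochner--Sobolev spaces that appear on the right-hand side. Writing $\mathcal{E}_\ell := \mathcal{I}-\Pi^{\nu_\ell}_{r_\ell,\mathcal{M}_\ell}$ for the one-dimensional error operator in direction $\ell$, the plan is to promote the add-and-subtract device already used for two factors in Lemma \ref{err_3} to three factors by means of the inclusion--exclusion identity
\[
\mathcal{I}-\Pi^{\nu_1}_{r_1,\mathcal{M}_1}\Pi^{\nu_2}_{r_2,\mathcal{M}_2}\Pi^{\nu_3}_{r_3,\mathcal{M}_3}
=\sum_{\ell=1}^{3}\mathcal{E}_\ell-\sum_{1\le \ell<m\le 3}\mathcal{E}_\ell\mathcal{E}_m+\mathcal{E}_1\mathcal{E}_2\mathcal{E}_3 .
\]
Fixing the norm direction $i$ and letting $j,k$ denote the two complementary indices, the triangle inequality reduces the desired bound to controlling each of the seven terms $\Vert\mathcal{E}_1^{\epsilon_1}\mathcal{E}_2^{\epsilon_2}\mathcal{E}_3^{\epsilon_3}u\Vert_{{^c}H^{\nu_i}(I_i,L^2(\Lambda_3^i))}$, with $(\epsilon_1,\epsilon_2,\epsilon_3)\in\{0,1\}^3\setminus\{(0,0,0)\}$.

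Next I would estimate each product by applying Theorem \ref{err_1} once in every active direction. When the active direction is $i$ itself, Theorem \ref{err_1} contributes the factor $\mathcal{M}_i^{\nu_i-r_i}$ measured in the ${^c}H^{\nu_i}$-seminorm; in every other active direction $\ell\in\{j,k\}$ the same projection, used through its companion $L^2$-estimate $\Vert\mathcal{E}_\ell w\Vert_{L^2(I_\ell)}\le c\,\mathcal{M}_\ell^{-r_\ell}\Vert w\Vert_{H^{r_\ell}(I_\ell)}$ (the $\nu_\ell=0$ instance of the same construction), contributes a factor $\mathcal{M}_\ell^{-r_\ell}$ while leaving the direction-$i$ regularity untouched by the stability just recalled. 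Matching the seven products to the right-hand side is then pure bookkeeping: the single factor $\mathcal{E}_i u$ produces the first term; the triple product $\mathcal{E}_1\mathcal{E}_2\mathcal{E}_3u$ produces the second; the pair $\mathcal{E}_j\mathcal{E}_k u$ produces the third; the two mixed products $\mathcal{E}_i\mathcal{E}_j u$ and $\mathcal{E}_i\mathcal{E}_k u$ assemble into the sum $\sum_{j\neq i}\mathcal{M}_i^{\nu_i-r_i}\mathcal{M}_j^{-r_j}\Vert u\Vert_{H^{r_i,r_j}(I_i\times I_j,L^2(\Lambda_3^{i,j}))}$; and the two single products $\mathcal{E}_j u$, $\mathcal{E}_k u$ assemble into $\sum_{j\neq i}\mathcal{M}_j^{-r_j}\Vert u\Vert_{{^c}H^{\nu_i}(I_i,H^{r_j}(I_j,L^2(\Lambda_3^{i,j})))}$. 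Absorbing all constants into a single $\beta>0$ gives the claimed inequality.

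The genuinely delicate point is the cross-direction bookkeeping in the second step rather than any single one-dimensional estimate. To apply Theorem \ref{err_1} in direction $i$ to a function such as $\mathcal{E}_j\mathcal{E}_k u$ or $\Pi_j\Pi_k u$, one must know that this function still belongs to $H^{r_i}(I_i;\cdot)$ with its norm dominated by a mixed norm of $u$, and that the order in which the one-dimensional projections are applied does not degrade the estimate; this is exactly what the commutativity of the $\Pi^{\nu_\ell}_{r_\ell,\mathcal{M}_\ell}$ together with the anisotropic stability of Theorem $A.3$ in \cite{bernardi1987spectral} provide. It is also the reason the hypothesis places $u$ in the intersection ${}H^{\nu_1}_0(I_1,H^{r_2,r_3}(\Lambda_3^1))\cap H^{r_1,r_3}(\Lambda_3^2,{^c}H^{\nu_2}_0(I_2))\cap H^{r_1,r_2}(\Lambda_2,{^c}H^{\nu_3}_0(I_3))$: this precise blend of mixed regularities is what renders every norm on the right-hand side finite and every intermediate projection well defined. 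Once this stability is granted, each of the seven terms is a direct iteration of Lemma \ref{err_3} and its proof, so no new analytic obstacle arises beyond managing the three-dimensional index structure, and the extension to $(1+d)$ dimensions then proceeds by the same inductive telescoping.
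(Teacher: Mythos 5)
Your proposal is correct and takes essentially the same route as the paper: the paper's three-step telescoping followed by add-and-subtract of the identity produces exactly your seven error products $\mathcal{E}_1^{\epsilon_1}\mathcal{E}_2^{\epsilon_2}\mathcal{E}_3^{\epsilon_3}u$, each then bounded, as you do, by Theorem \ref{err_1} at rate $\mathcal{M}_i^{\nu_i-r_i}$ in the norm direction and the companion $L^2$ rate $\mathcal{M}_\ell^{-r_\ell}$ in the other active directions, with the same appeal to commutativity and the stability from Theorem $A.3$ of \cite{bernardi1987spectral}. Packaging the expansion as a single inclusion--exclusion identity rather than a sequential telescope is a purely cosmetic difference.
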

\begin{proof}
see Appendix.
\end{proof}
\noindent Lemma \ref{err_4} can be easily extended to the d-dimensional approximation operator as
\begin{eqnarray}
\label{err_5}
&&\Vert u-\Pi_d^h u \Vert_{{^c}H^{\nu_i}(I_i,L^2(\Lambda_d^i))} \leq 
\beta
\Big(
\mathcal{M}_i^{\nu_i-r_i} \Vert u \Vert_{{}H^{r_i}(I_i,L^2(\Lambda_d^i))}+\sum_{\underset{j\neq i}{j=1}}^d \mathcal{M}_j^{-j}\Vert u \Vert_{{^c}H^{\nu_i}(I_i,H^{r_j}(I_j,L^2(\Lambda_d^{i,j})))}
\nonumber
\\
&&
+\mathcal{M}_i^{\nu_i-r_i} \sum_{\underset{j\neq i}{j=1}}^d \mathcal{M}_j^{-r_j}\Vert u \Vert_{{}H^{r_i}(I_i,H^{r_j}(I_j,L^2(\Lambda_d^{i,j})))}
+\sum_{\underset{k\neq i}{k=1}}^d \sum_{\underset{j\neq i, \, k}{j=1}}^d \mathcal{M}_j^{-r_j} \mathcal{M}_k^{-k}\Vert u \Vert_{{^c}H^{\nu_i}(I_i,H^{r_k,r_j}(I_k\times I_j,L^2(\Lambda_d^{i,j,k}))))}
\nonumber
\\
&&
\quad +\cdots + \mathcal{M}_i^{\nu_i-r_i} \prod_{\underset{j\neq i}{j=1}}^{d} \mathcal{M}_j^{-r_j} \Vert u \Vert_{{^c}H^{\nu_i}(I_i,H^{r_1,\cdots,r_d}(\Lambda_d^i)))}\Big).
\end{eqnarray}

\begin{thm}
	\label{thmerr}
Let $1\leq r_i$, $I=(0,T)$, $I_i=(a_i,b_i)$, $\Omega=I \times \Big(\prod_{i=1}^{d}I_i\Big)$, $\Lambda_k=\prod_{i=1}^{k}I_i$, $\Lambda_k^j=\prod_{\underset{i\neq j}{i=1}}^{k}I_i$ and $\frac{1}{2}<\nu_i<1$ for $i=1,\cdots,d$. If $u \in  \Big(\overset{d}{\underset{i=1}{\cap}} H^{r_0}(I,{}H^{\nu_i}(I_i,{}H^{r_1,\cdots,r_{i-1},r_{i+1},\cdots,r_d}(\Lambda_d^i))\Big)\cap {^l}H^{\tau}(I,H^{r_1,\cdots,r_d}(\Lambda_d))$, then we have
\begin{eqnarray}
\label{them111}
&&\Vert u- \Pi^{\tau}_{r_0,\, \mathcal{N}} \Pi^{h}_{d} u \Vert_{ \mathcal{B}^{\tau,\nu_1,\cdots,\nu_d}(\Omega)}
\nonumber
\\
&&
\leq \beta \Big(
\mathcal{N}^{\tau-r_0} \Vert u \Vert_{{}H^{r_0}(I,L^2(\Lambda_d))}+\sum_{j=1}^{d}\mathcal{N}^{\tau-r_0} \mathcal{M}_j^{-r_j} \Vert u \Vert_{{}H^{r_0}(I,H^{r_j}(I_j,L^2(\Lambda_d)))}+                       \cdots 
\nonumber
\\
&&
+ \mathcal{N}^{\tau-r_0} \Big( \prod_{\underset{}{j=1}}^{d} \mathcal{M}_j^{-r_j} \Big) \Vert u \Vert_{{}H^{r_0}(I,H^{r_1,\cdots,r_d}(\Lambda_d)))}
+
\sum_{i=1}^{d} \Big{\{}\mathcal{M}_i^{\nu_i-r_i} \Vert u \Vert_{{}H^{r_i}(I_i,L^2(\Lambda_d^i\times I))}+\cdots 
\nonumber
\\
&&
+ \mathcal{M}_i^{\nu_i-r_i} \Big( \prod_{\underset{j\neq i, \, k}{j=1}}^{d} \mathcal{M}_j^{-r_j} \Big) \Vert u \Vert_{{^c}H^{\nu_i}(I_i,H^{r_1,\cdots,r_d}(\Lambda_d^i,L^2(I)))}\Big{\}}\Big),
\quad \quad
\end{eqnarray}
where $\Pi^{h}_{d}=\Pi^{\nu_1}_{r_1,\, \mathcal{M}_1}\cdots \Pi^{\nu_d}_{r_d,\, \mathcal{M}_d}$ and $\beta$ is a real positive constant.
\end{thm}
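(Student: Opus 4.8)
The plan is to reduce the full space-time estimate to the two one-parameter estimates already in hand, namely the temporal projection bound of Theorem \ref{err_2} and the $d$-dimensional spatial bound \eqref{err_5} (the extension of Lemma \ref{err_4}), by exploiting the additive structure of the norm and the fact that the time projection $\Pi^{\tau}_{r_0,\,\mathcal{N}}$ and the spatial projection $\Pi^h_d$ act on disjoint groups of variables and hence commute, as do the fractional operators $\prescript{}{0}{\mathcal{D}}_{t}^{\tau}$, $\prescript{}{a_i}{\mathcal{D}}_{x_i}^{\nu_i}$, $\prescript{}{x_i}{\mathcal{D}}_{b_i}^{\nu_i}$ with any projection acting in a different variable. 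First I would square $\Vert\,\cdot\,\Vert_{\mathcal{B}^{\tau,\nu_1,\cdots,\nu_d}(\Omega)}$ and split it, following \eqref{def_2222}, into the temporal seminorm piece $\Vert\,\cdot\,\Vert_{\prescript{l}{}H^{\tau}(I;L^2(\Lambda_d))}$ and the spatial piece $\Vert\,\cdot\,\Vert_{L^2(I;\mathcal{X}_d)}$, so that it suffices to bound $u-\Pi^{\tau}_{r_0,\,\mathcal{N}}\Pi^h_d u$ in each separately.

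In each piece I would insert and subtract one projection and then apply the very same add-and-subtract device used in Lemmas \ref{err_3} and \ref{err_4}. Writing
\[
u-\Pi^{\tau}_{r_0,\,\mathcal{N}}\Pi^h_d u = (u-\Pi^{\tau}_{r_0,\,\mathcal{N}} u) + (\Pi^{\tau}_{r_0,\,\mathcal{N}}-\mathcal{I})(u-\Pi^h_d u) + (u-\Pi^h_d u),
\]
the first summand is controlled, after integration over the spatial variables, by Theorem \ref{err_2}, producing the $\mathcal{N}^{\tau-r_0}$ factor and the pure-time norm $\Vert u\Vert_{{}H^{r_0}(I;L^2(\Lambda_d))}$; the middle double-error summand is the crucial one, since applying the Theorem \ref{err_2} estimate to the function $(u-\Pi^h_d u)$, which is itself a spatial projection error quantified by \eqref{err_5}, yields exactly the mixed products $\mathcal{N}^{\tau-r_0}\big(\prod_{j}\mathcal{M}_j^{-r_j}\big)$ paired with the anisotropic norms appearing in \eqref{them111}; and the last summand, a spatial error measured in the appropriate (time-fractional or $\mathcal{X}_d$) norm, is handled by commuting the projections past the relevant derivatives and invoking \eqref{err_5} together with the $\mathcal{M}_i^{\nu_i-r_i}$ group. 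The boundedness of $(\Pi^{\tau}_{r_0,\,\mathcal{N}}-\mathcal{I})$ and of $\Pi^h_d$ in their respective fractional seminorms, uniformly in $\mathcal{N}$ and the $\mathcal{M}_i$, is what lets these composite errors be transferred without loss, exactly as the operator $(\Pi^{\nu_2}_{r_2,\mathcal{M}_2}-\mathcal{I})$ was used in \eqref{111334}.

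The hard part will be the bookkeeping rather than any single inequality: each successive factoring of a direction produces a new double-error term $(\Pi-\mathcal{I})(\,\cdot\,)$, and iterating this over all $d+1$ directions generates the full cascade of cross terms abbreviated by the "$\cdots$" in \eqref{them111}, so the delicate point is to identify, for every such term, precisely which anisotropic mixed-regularity norm $\Vert u\Vert_{H^{r_{i_1},\cdots,r_{i_k}}(\cdots)}$ controls it and to verify that each is finite under the stated hypothesis $u\in\big(\cap_{i}H^{r_0}(I;{}H^{\nu_i}(I_i;H^{r_1,\cdots,r_{i-1},r_{i+1},\cdots,r_d}(\Lambda_d^i)))\big)\cap\,{^l}H^{\tau}(I;H^{r_1,\cdots,r_d}(\Lambda_d))$. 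I would organize the argument inductively on the number of factored spatial directions, so that the $(1+d)$-dimensional bound follows by layering the single temporal projection $\Pi^{\tau}_{r_0,\,\mathcal{N}}$ on top of the already-established $d$-dimensional spatial estimate \eqref{err_5}, precisely mirroring how Lemma \ref{err_4} was obtained from Lemma \ref{err_3}, and then collecting the resulting products of $\mathcal{N}^{\tau-r_0}$ and $\mathcal{M}_j^{-r_j}$ into the two groups displayed in \eqref{them111}.
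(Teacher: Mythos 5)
Your proposal is correct and follows essentially the same route as the paper: the paper likewise splits $\Vert u-\Pi^{\tau}_{r_0,\,\mathcal{N}}\Pi^{h}_{d}u\Vert_{\mathcal{B}^{\tau,\nu_1,\cdots,\nu_d}(\Omega)}$ additively into the temporal piece $\Vert\cdot\Vert_{{^l}H^{\tau}(I,L^2(\Lambda_d))}$ and the spatial pieces $\Vert\cdot\Vert_{L^2(I,{^c}H^{\nu_i}(I_i,L^2(\Lambda_d^i)))}$, controls the former by Theorem \ref{err_2} (yielding \eqref{err_6}, whose mixed $\mathcal{N}^{\tau-r_0}\mathcal{M}_j^{-r_j}$ terms encode exactly the telescoped double-error contributions you make explicit) and the latter by the $d$-dimensional estimate \eqref{err_5} obtained from Lemma \ref{err_4}. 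Your add-and-subtract decomposition and inductive bookkeeping simply spell out the steps the paper compresses into ``obtained immediately using \eqref{err_5} and \eqref{err_6}.''
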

\begin{proof}
Directly from \eqref{norm_2221} we conclude that
\begin{eqnarray}
\label{2233}
&&\Vert u
\Vert_{ \mathcal{B}^{\tau,\nu_1,\cdots,\nu_d}(\Omega)} 
\leq \beta \Big( \Vert u \Vert_{{^l}H^{\tau}(I,L^2(\Lambda_d))}+\sum_{i=1}^{d}\Vert u \Vert_{L^2(I,{^c}H^{\nu_i}(I_i,L^2(\Lambda_d^i)))}\Big). 
\nonumber
\end{eqnarray}
By Theorem \ref{err_2} we obtain
\begin{eqnarray}
\label{err_6}
\Vert u-\Pi^{\tau}_{r_0,\, \mathcal{N}}\Pi_d^h u \Vert_{{^l}H^{\tau}(I,L^2(\Lambda_d))} &\leq& 
\mathcal{N}^{\tau-r_0} \Vert u \Vert_{{}H^{r_0}(I,L^2(\Lambda_d))}+\sum_{j=1}^{d}\mathcal{N}^{\tau-r_0} \mathcal{M}_j^{-r_j} \Vert u \Vert_{{}H^{r_0}(I,H^{r_j}(I_j,L^2(\Lambda_d)))}+                       \cdots 
\nonumber
\\
&
+& \mathcal{N}^{\tau-r_0} \Big( \prod_{\underset{}{j=1}}^{d} \mathcal{M}_j^{-r_j} \Big) \Vert u \Vert_{{}H^{r_0}(I,H^{r_1,\cdots,r_d}(\Lambda_d)))}.
\end{eqnarray}
Accordingly, the property of composite approximation to time-spatial (1+d)-dimensional space-time approximation operator in \eqref{them111} is obtained immediately using \eqref{err_5} and \eqref{err_6}.

Since the \text{inf-sup} condition holds (see Theorem \ref{Thm: inf-sup_3}), by the Banach-Ne\v{c}as-Babu\v{s}ka theorem \cite{ern2013theory}, the error in the numerical scheme is less than or equal to a constant times the projection error. Accordingly, we conclude the spectral accuracy of the scheme.

\end{proof}

\section{Numerical Tests}
\label{Illustration of Results}
To study the convergence rate of the PG method in \eqref{Eq: general weak form_2}, we perform numerical simulations and consider the following relative errors in $L^2$ as  
\begin{equation}
 \Vert e \Vert_{L^2 (\Omega)}  = \frac{\Vert u-u^{ext} \Vert_{L^2(\Omega)}}{\Vert u^{ext} \Vert_{L^2(\Omega)}}
\end{equation}
and in the energy norm as 
\begin{equation}
 \Vert e \Vert_{\mathcal{B}^{\tau,\nu_1}(\Omega)}  = \frac{\Vert u-u^{ext} \Vert_{\mathcal{B}^{\tau,\nu_1}(\Omega)}}{\Vert u^{ext} \Vert_{\mathcal{B}^{\tau,\nu_1}(\Omega)}},
\end{equation}
where $u^{ext}$ is presented in \eqref{exact} and \eqref{exact2} in Case I and Case II respectively. Let $\Omega = (0,T]\times(-1,1)$.
Recalling that 
\begin{equation}
\label{energy_error}
\Vert \cdot \Vert_{\mathcal{B}^{\tau,\nu_1}(\Omega)} := \big{\{} \Vert \cdot \Vert_{L^2(\Omega)}^2 + \Vert \prescript{}{0}{\mathcal{D}}_{t}^{\tau}(\cdot) \Vert_{L^2(\Omega)}^2 + \Vert \prescript{}{-1}{\mathcal{D}}_{x}^{\nu_1}(\cdot) \Vert_{L^2(\Omega)}^2 +\Vert \prescript{}{x}{\mathcal{D}}_{1}^{\nu_1}(\cdot) \Vert_{L^2(\Omega)}^2 \big{\}}^{\frac{1}{2}}. 
\end{equation}
We particularly consider the time and space-fractional diffusion equation (i.e. $c_l=c_r=0$ in \eqref{strongform}) in 2-D space-time as we have obtained similar results for advection-dispersion equation in higher dimensions. 

\noindent \textbf{Case I:} We choose the exact solution to be
\begin{equation}
\label{exact}
u^{ext}(t,x) = t^{p_1} \times \big[(1+x)^{p_2} - \epsilon (1+x)^{p_3}\big],
\end{equation}
in \eqref{strongform}, where $\epsilon = 2^{p_2-p_3}$. In \eqref{exact}, we take $p_1 = 5\frac{1}{20}$, $p_2 = 5\frac{3}{4}$ and $p_3 = 5\frac{1}{5}$.

\begin{table}[h]
\centering
\caption{\label{Table: Higher-Dimensional FPDEs} Convergence study of the PG spectral method for (1+1)-D diffusion problem, where $\kappa_l=\kappa_r=\frac{2}{10}$ and $T=2$. Besides, $p_1 = 5\frac{1}{20}$, $p_2 = 5\frac{3}{4}$ and $p_3 = 5\frac{1}{5}$ in \eqref{exact}. Here, we denote by $\bar{r}_0$ the practical rate of the convergence, numerically achieved.} Case I-A: $\nu_1=\frac{15}{20}$ fixed, where we consider the limit orders $\tau=\frac{1}{20}$ and $\tau=\frac{9}{20}$. Case I-B: $\tau=\frac{5}{20}$ fixed, where $\nu_1=\frac{11}{20}$ and $\nu_1=\frac{19}{20}$.
\label{my-label}
\begin{tabular}{c c c c c c c c}
\multicolumn{7}{c}{Temporal \textit{p}-refinement Case I-A                                                                   } \\ \hline 
\hline
 \multicolumn{3}{c}{ $ \tau = \frac{1}{20}$ and $\nu_1=\frac{15}{20}$ }    & \multirow{8}{*}{} & \multicolumn{3}{c}{ $\tau = \frac{9}{20}$ and $\nu_1=\frac{15}{20}$}   \\   \cline{1-3}  \cline{5-7}       $\mathcal{M}_t$ & $ \Vert e \Vert_{\mathcal{B}^{\tau,\nu_1}(\Omega) }  $ &  $ \Vert e \Vert_{L^{2}(\Omega) }  $&                                 & $\mathcal{M}_t$ & $ \Vert e \Vert_{\mathcal{B}^{\tau,\nu_1}(\Omega)}  $ &$ \Vert e \Vert_{L^{2}(\Omega) }  $  
 \\
 & ($\bar{r}_0=12.81$) & ($\bar{r}_0=14.09$)   &&& ($\bar{r}_0=13.32$)&($\bar{r}_0=14.44$)
 \\ \cline{1-3} \cline{5-7} 
3 &        0.48488 &  0.45541     &             & 3 &      0.65358 &  0.56631
\\
5 &        0.04176 &  0.04003     &             & 5 &      0.07529 &  0.05431
\\
7 &    3.44$\times 10^{-5}$  &2.64$\times 10^{-5}$   &               & 7 &    0.00079  &  0.00045                  \\
9 &   5.00$\times 10^{-7}$ &    2.81$\times 10^{-7}$   &         & 9      &   5.03$\times 10^{-7}$   &      2.59$\times 10^{-7}$ \\
11 &     4.82$\times 10^{-8}$  & 1.45$\times 10^{-8}$   &          & 11      &       4.81$\times 10^{-8}$  & 6.61$\times 10^{-9}$  
\end{tabular}
\vspace{0.1 in}
%
\label{my-label_2}
\begin{tabular}{c c c c c c c c}
\multicolumn{7}{c}{Spatial \textit{p}-refinement Case I-B                                                                  } \\ \hline 
\hline
 \multicolumn{3}{c}{ $ \nu_1 = \frac{11}{20}$ and $ \tau = \frac{5}{20}$}    & \multirow{8}{*}{} & \multicolumn{3}{c}{ $\nu_1 = \frac{19}{20}$ and $ \tau = \frac{5}{20}$}   \\   \cline{1-3}  \cline{5-7}       $\mathcal{M}_s$ & $ \Vert e \Vert_{\mathcal{B}^{\tau,\nu_1}(\Omega) }  $& $ \Vert e \Vert_{L^2(\Omega)}  $ &               & $\mathcal{M}_s$ & $ \Vert e \Vert_{\mathcal{B}^{\tau,\nu_1}(\Omega)  }  $  & $ \Vert e \Vert_{L^2(\Omega)}  $
 \\
 &   ($\bar{r}_1=9.18$) & ($\bar{r}_1=9.36$)  &&&  ($\bar{r}_1=8.51$)&($\bar{r}_1=9.08$) 
 \\ \cline{1-3} \cline{5-7} 
3 &  0.45329   &  0.40578  &                & 3 &   0.55657   &  0.38525
\\
5 &  0.01738   &  0.01259  &                & 5 &   0.03097    &  0.01445 
\\
7 &   4.68$\times 10^{-5}$   &      0.000029     &                   & 7 &   3.08$\times 10^{-5}$   &   1.06$\times 10^{-5}$ \\
9 &   1.19$\times 10^{-6}$     &    6.96$\times 10^{-7}$     &                   & 9 &     2.45$\times 10^{-6}$    &     6.63$\times 10^{-7}$   \\
11 &   7.09$\times 10^{-8}$    &   5.33$\times 10^{-8}$    &                   & 11 &   5.42$\times 10^{-7}$  &     1.56$\times 10^{-7}$  
\end{tabular}
\end{table}
\vspace{0.05 in}

\noindent \textbf{Temporal \textit{p}-refinement:} In Table \ref{my-label} Case I-A, we study the spectral convergence of the method for the limit fractional orders of $\tau=\frac{1}{20}$ and $\frac{9}{20}$, while $\nu_1 = \frac{15}{20}$ fixed and $\kappa_l=\kappa_r=\frac{2}{10}$ in \eqref{strongform} for (1+1)-D diffusion problem. In the temporal \textit{p}-refinement, we keep the spatial order of expansion fixed ($\mathcal{M}_s=19$) such that the error in spatial direction approaches to the exact solution sufficiently and hence the rate of the convergence is a function of the minimum regularity in time direction. Theoretically, the rate of convergence is bounded by $\mathcal{M}_t^{\tau-r_0} \Vert u \Vert_{H^{r_0}(I,L^2(\Lambda_1))}$, where $r_0=p_1+\frac{1}{2}-\epsilon$ is the minimum regularity of the exact solution in time direction.
%
In Table \ref{Table: Higher-Dimensional FPDEs} we observe that $\bar{r}_0$ in $\Vert e \Vert_{L^2(\Omega)}$ and $\Vert e \Vert_{\mathcal{B}^{\tau,\nu_1}(\Omega)}$ are greater than $r_0 \approx 5\frac{11}{20}$. 
 Accordingly, $\Vert e \Vert_{L^2(\Omega)}\leq \mathcal{M}_t^{-\tau}\Vert e \Vert_{\mathcal{B}^{\tau,\nu_1}(\Omega)} \leq \mathcal{M}^{-r_0}_t \Vert u \Vert_{H^{r_0}(I,L^2(\Lambda_1))}$.

\vspace{0.2 cm}
\noindent \textbf{Spatial \textit{p}-refinement:} We study the convergence rate of the PG method for the limit orders of $\nu_1=\frac{11}{20}$ and $\frac{19}{20}$ while $\tau=\frac{5}{20}$ in Table \ref{my-label} Case I-B. The temporal order of expansion is constant ($\mathcal{M}_t = 19$) to keep the solution sufficiently accurate in time direction.
Similar to temporal \textit{p}-refinement, we have
$\Vert e \Vert_{L^2(\Omega)} \leq \mathcal{M}_s^{-\nu_1} \Vert e \Vert_{\mathcal{B}^{\tau,\nu_1}(\Omega)} \leq \mathcal{M}_s^{-r_1} \Vert u \Vert_{{}H^{r_1}(\Lambda_1,L^2(I))}$, where $r_1=p_3+\frac{1}{2}-\epsilon$ as the minimum regularity of the exact solution in spatial direction. In agreement with Theorem \ref{thmerr}, the practical rates of convergence $\bar{r}_1$ in $\Vert e \Vert_{L^2(\Omega)}$ and  in $\Vert e \Vert_{\mathcal{B}^{\tau,\nu_1}(\Omega)}$ are greater than $r_1 \approx 5\frac{7}{10}$. Further to the aforementioned cases, we have observed similar results for higher dimensional problems, including (1+2)-D time- and space-fractional diffusion equation as well. Besides, several numerical simulations have been illustrated  in \cite{samiee2017Unified} which confirms the theoretical error estimation in (1+1)- and (1+d)-D fractional advection-dispersion-reaction and wave equations.

\vspace{0.2 cm}
\noindent \textbf{Case II:} We consider the smooth exact solution to be \begin{equation}
\label{exact2}
u^{ext}(t,x) = t^{p_1} \times \Big[ \sin \big(n \pi (1+x) \big)   \Big],
\end{equation}
in \eqref{strongform}, where $p_1 = 5\frac{1}{20}$ and $n=1$.

\noindent \textbf{\textit{p}-refinement:} 
The convergence rate of the PG method for the limit orders of $\nu_1=\frac{11}{20}$ and $\frac{19}{20}$ is investigated while $\tau=\frac{5}{20}$ in Table \ref{my-label2}. The temporal order of expansion is chosen as ($\mathcal{M}_t = 19$) to keep the solution sufficiently accurate in time direction. The results in Table \ref{my-label2} show the expected exponential decay which verifies the PG method for different values of $\nu_1$.

\begin{table}[h]
	\centering
	\caption{\label{Table: Higher-Dimensional FPDEs2} Here, we set  $p_1 = 5\frac{1}{20}$ and $n=1$ in \eqref{exact2} to study the convergence of the PG spectral method for (1+1)-D diffusion problem, where $\kappa_l=\kappa_r=\frac{2}{10}$ and $T=2$. Besides, the limit orders are $\nu_1=\frac{11}{20}$ and $\nu_1=\frac{19}{20}$, where $\tau=\frac{5}{20}$ fixed.
	}
	\label{my-label2}
	\begin{tabular}{c c c c c c c c}
		\multicolumn{7}{c}{ \textit{p}-refinement}                                                             \\ \hline 
		\hline
		\multicolumn{3}{c}{ $ \nu_1 = \frac{11}{20}$ and $ \tau = \frac{5}{20}$}    & \multirow{8}{*}{} & \multicolumn{3}{c}{ $\nu_1 = \frac{19}{20}$ and $ \tau = \frac{5}{20}$}   \\   \cline{1-3}  \cline{5-7}       $\mathcal{M}_s$ & $ \Vert e \Vert_{\mathcal{B}^{\tau,\nu_1}(\Omega) }  $& $ \Vert e \Vert_{L^2(\Omega)}  $ &               & $\mathcal{M}_s$ & $ \Vert e \Vert_{\mathcal{B}^{\tau,\nu_1}(\Omega)  }  $  & $ \Vert e \Vert_{L^2(\Omega)}  $
		\\ \cline{1-3} \cline{5-7}
		5 &  0.04756   &  0.02655  &                & 5 &   0.05730   &  0.03147 
		\\
		9 &  2.89$\times 10^{-5}$   &     1.60$\times 10^{-5}$ &                   & 9 &   2.72$\times 10^{-4}$  &   1.54$\times 10^{-4}$  
		\\
		13 &   4.44$\times 10^{-9}$    &   2.46$\times 10^{-9}$    &                   & 13 &   4.32$\times 10^{-8}$  &     2.44$\times 10^{-8}$  
		\\
		17 &   4.10$\times 10^{-11}$    &   5.90$\times 10^{-12}$    &                   & 17 &   8.88$\times 10^{-11}$  &     9.17$\times 10^{-12}$  
	\end{tabular}
\end{table}
\vspace{0.05 in}

\section{Summary and Discussion}
\label{Summary and Discussion}
We proved well-posedness and performed discrete stability analysis of unified Petrov-Galerkin spectral method developed
in \cite{samiee2017Unified} for the linear fractional partial differential equations with two-sided derivatives and constant coefficients in any dimension. We obtained the theoretical error estimates, proving that the method converges spectrally fast under certain conditions. Finally, several numerical cases, including finite regularity and smooth solutions, have been performed to show the spectral accuracy of the method.

\section*{Acknowledgement}
This work was supported by the AFOSR Young Investigator Program (YIP) award (FA9550- 17-1-0150) and partially by MURI/ARO (W911NF- 15-1-0562).  

\section*{Appendix}

\subsection*{$\bullet$ \textbf{Proof of Lemma \ref{lemma31}}}
\begin{proof}
In Lemma 2.1 in \cite{li2010existence} and also in \cite{ervin2007variational}, it is shown that $\Vert \cdot \Vert_{{^l}H^{\sigma}_{}(\Lambda)}$ and $\Vert \cdot \Vert_{{^r}H^{\sigma}_{}(\Lambda)}$ are equivalent. Therefore, for $u \in H^{\sigma}_{}(\Lambda)$, there exist positive constants $C_1$ and $C_2$ such that
\begin{eqnarray}
\Vert u \Vert_{{}H^{\sigma}_{}(\Lambda)} &\leq& C_1 \Vert u \Vert_{{^l}H^{\sigma}_{}(\Lambda)},\quad
\nonumber
\\
\Vert u \Vert_{{}H^{\sigma}_{}(\Lambda)} &\leq& C_2 \Vert u \Vert_{{^r}H^{\sigma}_{}(\Lambda)}, 
\end{eqnarray}	
which leads to
\begin{eqnarray}
\Vert u \Vert_{{}H^{\sigma}_{}(\Lambda)}^2 &\leq& C_1^2 \Vert u \Vert_{{^l}H^{\sigma}_{}(\Lambda)}^2 +  C_2^2 \Vert u \Vert_{{^r}H^{\sigma}_{}(\Lambda)}^2
\nonumber
\\
&=&  C_1^2 \,\Vert \prescript{}{a}{\mathcal{D}}_{x}^{\sigma}\, (u)\Vert_{L^2(\Lambda)}^2+ C_2^2 \,\Vert \prescript{}{x}{\mathcal{D}}_{b}^{\sigma}\, (u)\Vert_{L^2(\Lambda)}^2+(C_1^2+C_2^2)\, \Vert u \Vert_{L^2(\Lambda)}^2 
\nonumber
\\
&\leq& \tilde{C}_1 \,\Vert u \Vert_{{^c}H^{\sigma}_{}(\Lambda)}^2,
\end{eqnarray}	
where $\tilde{C}_1$ is a positive constant. Similarly, we can show that
\begin{eqnarray}
\Vert u \Vert_{{^c}H^{\sigma}_{}(\Lambda)}^2 &\leq& \tilde{C}_2 \, \Vert u \Vert_{{}H^{\sigma}_{}(\Lambda)},
\end{eqnarray}
where $\tilde{C}_2$ is a positive constant. This equivalency and \eqref{eq14} conclude the proof. 
\end{proof}

\subsection*{$\bullet$ \textbf{Proof of Lemma \ref{lem_generalize}}}
\begin{proof}
Let $\Lambda_d=\prod_{i=1}^{d} (a_i,b_i)$. According to \cite{kharazmi2016petrov}, we have $\prescript{}{a_i}{\mathcal{D}}_{x_i}^{2\nu_i} u=\prescript{}{a_i}{\mathcal{D}}_{x_i}^{\nu_i} (\prescript{}{a_i}{\mathcal{D}}_{x_i}^{\nu_i} u)$ and $\prescript{}{x_i}{\mathcal{D}}_{b_i}^{\nu_i} u=\prescript{}{x_i}{\mathcal{D}}_{b_i}^{\nu_i}(\prescript{}{x_i}{\mathcal{D}}_{b_i}^{\nu_i} u)$. Let $\bar{u}=\prescript{}{a_i}{\mathcal{D}}_{x_i}^{\nu_i} u$. Then,
\begin{eqnarray}
(\prescript{}{a_i}{\mathcal{D}}_{x_i}^{2\nu_i} u,v)_{\Lambda_d}&=& (\prescript{}{a_i}{\mathcal{D}}_{x_i}^{\nu_i} \bar{u},v)_{\Lambda_d}=\int_{\Lambda_d}^{} \frac{1}{\Gamma(1-\nu_i)}  \big[ \frac{d}{dx_i}\, \int_{a_i}^{x_i}\frac{\bar{u}(s)\,ds}{(x_i-s)^{\nu_i}}  \big]v\,d\Lambda_d
\nonumber
\\
&=&\int_{\Lambda_d} \Big{\{} \frac{v}{\Gamma(1-\nu_i)}\int_{a_i}^{x_i} \frac{\bar{u}ds}{(x_i-s)^{\nu_i}} \Big{\}}^{b_i}_{x_i=a_i} d \Lambda_d- \int_{\Lambda_d}^{} \frac{1}{\Gamma(1-\nu_i)}\int_{a_i}^{x_i}\frac{\bar{u}(s)\,ds}{(x_i-s)^{\nu_i}} \frac{dv}{dx_i}\,  d\Lambda_d, \quad \quad
\end{eqnarray}
where $\Lambda_d^i=\prod_{j=1, \, j\neq i}^{d}$. Then, we have
$\int_{\Lambda_d^i} \Big{\{} \frac{v}{\Gamma(1-\nu_i)}\int_{a_i}^{x_i} \frac{\bar{u}ds}{(x_i-s)^{\nu_i}} \Big{\}}^{b_i}_{x_i=a_i} d \Lambda_d^i =0$ due to the homogeneous boundary conditions. Therefore, 
\begin{eqnarray}
(\prescript{}{a_i}{\mathcal{D}}_{x_i}^{2\nu_i} u,v)_{\Lambda_d}&=&-\int_{\Lambda_d^i}^{}\int_{a_i}^{b_i}\frac{1}{\Gamma(1-\nu_i)}\int_{a_i}^{x_i}\frac{\bar{u}(s)\,ds}{(x_i-s)^{\nu_i}} \frac{dv}{dx_i}\, dx_i\,d\Lambda_d^i.  
\end{eqnarray}
Moreover, we find that 
\begin{eqnarray}
\frac{d}{ds}\, \int_{s}^{b_i}\frac{v}{(x_i-s)^{\nu_i}}dx_i &=& \frac{d}{ds} \Big{\{} \{\frac{v\, (x_i-s)^{1-\nu_i}}{1-\nu_i}\}_{x_i=s}^{b_i}-\frac{1}{1-\nu_i}\int_{s}^{b_i}\frac{dv}{dx_i}(x_i-s)^{1-\nu_i}dx_i \Big{\}}
\nonumber
\\
&=&
-\frac{d}{ds}\frac{1}{1-\nu_i} \int_{s}^{b_i} \frac{dv}{dx_i}(x_i-s)^{1-\nu_i}\, dx_i = \int_{s}^{b_i} \frac{\frac{dv}{dx_i}}{(x_i-s)^{\nu_i}}\, dx_i.
\end{eqnarray}
Therefore, we get
\begin{eqnarray}
(\prescript{}{a_i}{\mathcal{D}}_{x_i}^{\nu_i} \bar{u},v)_{\Lambda_d}=-\int_{\Lambda_d}^{} \frac{1}{\Gamma(1-\nu)_i}\, \bar{u}(s) \big(\frac{d}{ds} \int_{s}^{b_i}\frac{v}{(x_i-s)^{\nu_i}}dx_i\big)\,ds \, d\Lambda_d=( \bar{u},\prescript{}{x_i}{\mathcal{D}}_{b_i}^{\nu_i}v)_{\Lambda_d}. \quad
\nonumber
\end{eqnarray}
\end{proof}

\subsection*{$\bullet$ \textbf{Proof of Lemma \ref{err_4}}}

\begin{proof}
Let $i=1$, $j=2$, and $k=3$. We have	
\begin{eqnarray}
&&\Vert u- \Pi^{\nu_1}_{r_1,\, \mathcal{M}_1} \Pi^{\nu_2}_{r_2,\, \mathcal{M}_2} \Pi^{\nu_3}_{r_3,\, \mathcal{M}_3} u \Vert_{{^c}H^{\nu_1}(I_1,L^2(\Lambda_3^1))} 
\nonumber
\\
&& = \Vert u- \Pi^{\nu_1}_{r_1,\, \mathcal{M}_1} u + \Pi^{\nu_1}_{r_1,\, \mathcal{M}_1} u- \Pi^{\nu_1}_{r_1,\, \mathcal{M}_1} \Pi^{\nu_2}_{r_2,\, \mathcal{M}_2} u + \Pi^{\nu_1}_{r_1,\, \mathcal{M}_1} \Pi^{\nu_2}_{r_2,\, \mathcal{M}_2} u- \Pi^{\nu_1}_{r_1,\, \mathcal{M}_1} \Pi^{\nu_2}_{r_2,\, \mathcal{M}_2} \Pi^{\nu_3}_{r_3,\, \mathcal{M}_3} u \Vert_{{^c}H^{\nu_1}(I_1,L^2(\Lambda_3^1))}
\nonumber
\\
&& \leq 
\Vert u- \Pi^{\nu_1}_{r_1,\, \mathcal{M}_1} u \Vert_{{^c}H^{\nu_1}(I_1,L^2(\Lambda_3^1))} + \Vert \Pi^{\nu_1}_{r_1,\, \mathcal{M}_1} u- \Pi^{\nu_1}_{r_1,\, \mathcal{M}_1} \Pi^{\nu_2}_{r_2,\, \mathcal{M}_2} u \Vert_{{^c}H^{\nu_1}(I_1,L^2(\Lambda_3^1))}
\nonumber
\\
&&
\quad +
\Vert  \Pi^{\nu_1}_{r_1,\, \mathcal{M}_1} \Pi^{\nu_2}_{r_2,\, \mathcal{M}_2} u- \Pi^{\nu_1}_{r_1,\, \mathcal{M}_1} \Pi^{\nu_2}_{r_2,\, \mathcal{M}_2} \Pi^{\nu_3}_{r_3,\, \mathcal{M}_3} u \Vert_{{^c}H^{\nu_1}(I_1,L^2(\Lambda_3^1))},
\end{eqnarray}
where by Theorem \ref{err_1}
\begin{eqnarray}
&&\Vert u- \Pi^{\nu_1}_{r_1,\, \mathcal{M}_1} u \Vert_{{^c}H^{\nu_1}(I_1,L^2(\Lambda_3^1))} \leq  \mathcal{M}^{\nu_1-r_1}_{1} \Vert u \Vert_{{}H^{r_1}(I_1,L^2(\Lambda_3^1))}.
\end{eqnarray}
Furthermore,
\begin{eqnarray}
&&\Vert \Pi^{\nu_1}_{r_1,\, \mathcal{M}_1} u- \Pi^{\nu_1}_{r_1,\, \mathcal{M}_1} \Pi^{\nu_2}_{r_2,\, \mathcal{M}_2} u \Vert_{{^c}H^{\nu_1}(I_1,L^2(\Lambda_3^1))}
\nonumber
\\
&&
\leq \Vert (\Pi^{\nu_1}_{r_1,\, \mathcal{M}_1}-\mathcal{I}) (u- \Pi^{\nu_2}_{r_2,\, \mathcal{M}_2} u) \Vert_{{^c}H^{\nu_1}(I_1,L^2(\Lambda_3^1))}+\Vert u- \Pi^{\nu_2}_{r_2,\, \mathcal{M}_2} u \Vert_{{^c}H^{\nu_1}(I_1,L^2(\Lambda_3^1))}
\nonumber
\\
&&
\leq \mathcal{M}_1^{\nu_1-r_1} \Vert u- \Pi^{\nu_2}_{r_2,\, \mathcal{M}_2} u \Vert_{{}H^{r_1}(I_1,L^2(\Lambda_3^1))}+\Vert u- \Pi^{\nu_2}_{r_2,\, \mathcal{M}_2} u \Vert_{{^c}H^{\nu_1}(I_1,L^2(\Lambda_3^1))}
\nonumber
\\
&&
\leq \mathcal{M}_1^{\nu_1-r_1} \mathcal{M}_2^{-r_2} \Vert u \Vert_{{}H^{r_1,r_2}(\Lambda_2,L^2(I_3)))} +  \mathcal{M}_2^{-r_2} \Vert u \Vert_{{}H^{\nu_1}(I_1,H^{r_2}(I_2,L^2(I_3)))}.
\end{eqnarray}
Similarly,
\begin{eqnarray}
&& \Vert  \Pi^{\nu_1}_{r_1,\, \mathcal{M}_1} \Pi^{\nu_2}_{r_2,\, \mathcal{M}_2} u- \Pi^{\nu_1}_{r_1,\, \mathcal{M}_1} \Pi^{\nu_2}_{r_2,\, \mathcal{M}_2} \Pi^{\nu_3}_{r_3,\, \mathcal{M}_3} u \Vert_{{^c}H^{\nu_1}(I_1,L^2(\Lambda_3^1))} \nonumber
\\
&&
= \Vert \Pi^{\nu_1}_{r_1,\, \mathcal{M}_1}\Pi^{\nu_2}_{r_2,\, \mathcal{M}_2}u- \Pi^{\nu_1}_{r_1,\, \mathcal{M}_1}\Pi^{\nu_2}_{r_2,\, \mathcal{M}_2} \Pi^{\nu_3}_{r_3,\, \mathcal{M}_3} u
 -\Pi^{\nu_2}_{r_2,\, \mathcal{M}_2}u+ \Pi^{\nu_2}_{r_2,\, \mathcal{M}_2} \Pi^{\nu_3}_{r_3,\, \mathcal{M}_3} u 
 \nonumber
 \\
 &&
\quad + \Pi^{\nu_2}_{r_2,\, \mathcal{M}_2}u- \Pi^{\nu_2}_{r_2,\, \mathcal{M}_2} \Pi^{\nu_3}_{r_3,\, \mathcal{M}_3} u \Vert_{{^c}H^{\nu_1}(I_1,L^2(\Lambda_3^1))}
\nonumber
\\
&&
 \leq \Vert (\Pi^{\nu_1}_{r_1,\, \mathcal{M}_1}-\mathcal{I}) (\Pi^{\nu_2}_{r_2,\, \mathcal{M}_2}u- \Pi^{\nu_2}_{r_2,\, \mathcal{M}_2} \Pi^{\nu_3}_{r_3,\, \mathcal{M}_3} u) \Vert_{{^c}H^{\nu_1}(I_1,L^2(\Lambda_3^1))} 
\nonumber
\\
&&
\quad + \Vert \Pi^{\nu_2}_{r_2,\, \mathcal{M}_2}u- \Pi^{\nu_2}_{r_2,\, \mathcal{M}_2} \Pi^{\nu_3}_{r_3,\, \mathcal{M}_3} u \Vert_{{^c}H^{\nu_1}(I_1,L^2(\Lambda_3^1))}
\nonumber
\\
&& \leq
\mathcal{M}_1^{\nu_1-r_1} ( \Vert (\Pi^{\nu_2}_{r_2,\, \mathcal{M}_2}-\mathcal{I})  (u- \Pi^{\nu_3}_{r_3,\, \mathcal{M}_3} u) \Vert_{{}H^{r_1}(I_1,L^2(\Lambda_3^1))}+ \Vert u- \Pi^{\nu_3}_{r_3,\, \mathcal{M}_3} u \Vert_{{}H^{r_1}(I_1,L^2(\Lambda_3^1))} )
\nonumber
\\
&& \quad + \Vert (\Pi^{\nu_2}_{r_2,\, \mathcal{M}_2}-\mathcal{I})  (u- \Pi^{\nu_3}_{r_3,\, \mathcal{M}_3} u) \Vert_{{^c}H^{\nu_1}(I_1,L^2(\Lambda_3^1))} + \Vert u- \Pi^{\nu_3}_{r_3,\, \mathcal{M}_3} u \Vert_{{^c}H^{\nu_1}(I_1,L^2(\Lambda_3^1))}
\nonumber
\\
&& \leq \mathcal{M}_1^{\nu_1-r_1} \mathcal{M}_2^{-r_2} \Vert u- \Pi^{\nu_3}_{r_3,\, \mathcal{M}_3} u \Vert_{{}H^{r_1,r_2}(\Lambda_2,L^2(I_3)))} + \mathcal{M}_1^{\nu_1-r_1} \mathcal{M}_3^{-r_3} \Vert u \Vert_{{}H^{r_1,r_3}(\Lambda_3^2,L^2(I_2)))}
\nonumber
\\
&& \quad + \mathcal{M}_2^{-r_2} \Vert u- \Pi^{\nu_3}_{r_3,\, \mathcal{M}_3} u \Vert_{{^c}H^{\nu_1}(I_1,H^{r_2}(I_2,L^2(I_3)))} +  \mathcal{M}_3^{-r_3} \Vert u \Vert_{{^c}H^{\nu_1}(I_1,H^{r_3}(I_3,L^2(I_2)))}
\nonumber
\\
&& \leq \mathcal{M}_1^{\nu_1-r_1} \mathcal{M}_2^{-r_2} \mathcal{M}_3^{-r_3} \Vert u \Vert_{{}H^{r_1,r_2,r_3}(\Lambda_3)))} +  \mathcal{M}_1^{\nu_1-r_1} \mathcal{M}_3^{-r_3} \Vert u \Vert_{{}H^{r_1,r_3}(\Lambda_3^2,L^2(I_2)))} 
\nonumber
\\
&&
\quad +  \mathcal{M}_2^{-r_2} \mathcal{M}_3^{-r_3} \Vert u \Vert_{{^c}H^{\nu_1}(I_1,H^{r_2}(I_2,L^2(I_3)))} +  \mathcal{M}_3^{-r_3} \Vert u \Vert_{{^c}H^{\nu_1}(I_1,H^{r_3}(I_3,L^2(I_2)))}.
\end{eqnarray}
Therefore,
\begin{eqnarray}
&&\Vert u- \Pi^{\nu_1}_{r_1,\, \mathcal{M}_1} \Pi^{\nu_2}_{r_2,\, \mathcal{M}_2} \Pi^{\nu_3}_{r_3,\, \mathcal{M}_3} u \Vert_{{^c}H^{\nu_1}(I_1,L^2(\Lambda_3^1))}
\nonumber
\\
&& \leq \mathcal{M}^{\nu_1-r_1}_{1} \Vert u \Vert_{{}H^{r_1}(I_1,L^2(\Lambda_3^1))} + \mathcal{M}_1^{\nu_1-r_1} \mathcal{M}_2^{-r_2} \Vert u \Vert_{{}H^{r_1,r_2}(\Lambda_2,L^2(I_3)))} +  \mathcal{M}_2^{-r_2} \Vert u \Vert_{{^c}H^{\nu_1}(I_1,H^{r_2}(I_2,L^2(I_3)))} 
\nonumber
\\
&& \quad + \mathcal{M}_1^{\nu_1-r_1} \mathcal{M}_2^{-r_2} \mathcal{M}_3^{-r_3} \Vert u \Vert_{{}H^{r_1,r_2,r_2}(\Lambda_3)))} +  \mathcal{M}_1^{\nu_1-r_1} \mathcal{M}_3^{-r_3} \Vert u \Vert_{{}H^{r_1,r_3}(\Lambda_3^2,L^2(I_2)))} 
\nonumber
\\
&&
\quad +  \mathcal{M}_2^{-r_2} \mathcal{M}_3^{-r_3} \Vert u \Vert_{{^c}H^{\nu_1}(I_1,H^{r_2}(I_2,L^2(I_3)))} +  \mathcal{M}_3^{-r_3} \Vert u \Vert_{{^c}H^{\nu_1}(I_1,H^{r_3}(I_3,L^2(I_2)))} 
\end{eqnarray}
Following the same steps, we get
\begin{eqnarray}
&&\Vert u- \Pi^{\nu_1}_{r_1,\, \mathcal{M}_1} \Pi^{\nu_2}_{r_2,\, \mathcal{M}_2} \Pi^{\nu_3}_{r_3,\, \mathcal{M}_3} u \Vert_{{^c}H^{\nu_2}(I_2,L^2(\Lambda_3^2))}
\nonumber
\\
&& \leq \mathcal{M}^{\nu_2-r_2}_{2} \Vert u \Vert_{{}H^{r_2}(I_2,L^2(\Lambda_3^2))} + \mathcal{M}_2^{\nu_2-r_2} \mathcal{M}_1^{-r_1} \Vert u \Vert_{{}H^{r_2,r_1}(\Lambda_2,L^2(I_3)))} +  \mathcal{M}_1^{-r_1} \Vert u \Vert_{{^c}H^{\nu_2}(I_2,H^{r_1}(I_1,L^2(I_3)))} 
\nonumber
\\
&& \quad + \mathcal{M}_2^{\nu_2-r_2} \mathcal{M}_1^{-r_1} \mathcal{M}_3^{-r_3} \Vert u \Vert_{{}H^{r_2,r_1,r_3}(\Lambda_3)))} +  \mathcal{M}_2^{\nu_2-r_2} \mathcal{M}_3^{-r_3} \Vert u \Vert_{{}H^{r_2,r_3}(\Lambda_3^1,L^2(I_1)))} 
\nonumber
\\
&&
\quad +  \mathcal{M}_2^{-r_1} \mathcal{M}_3^{-r_3} \Vert u \Vert_{{^c}H^{\nu_2}(I_2,H^{r_1}(I_1,L^2(I_3)))} +  \mathcal{M}_3^{-r_3} \Vert u \Vert_{{^c}H^{\nu_2}(I_2,H^{r_3}(I_3,L^2(I_1)))} 
\end{eqnarray}
and
\begin{eqnarray}
&&\Vert u- \Pi^{\nu_1}_{r_1,\, \mathcal{M}_1} \Pi^{\nu_2}_{r_2,\, \mathcal{M}_2} \Pi^{\nu_3}_{r_3,\, \mathcal{M}_3} u \Vert_{{^c}H^{\nu_3}(I_3,L^2(\Lambda_2))}
\nonumber
\\
&& \leq \mathcal{M}^{\nu_3-r_3}_{3} \Vert u \Vert_{{}H^{r_3}(I_3,L^2(\Lambda_2))} + \mathcal{M}_3^{\nu_3-r_3} \mathcal{M}_1^{-r_1} \Vert u \Vert_{{}H^{r_3,r_1}(\Lambda_3^2,L^2(I_2)))} +  \mathcal{M}_1^{-r_1} \Vert u \Vert_{{^c}H^{\nu_3}(I_3,H^{r_1}(I_1,L^2(I_2)))} 
\nonumber
\\
&& + \mathcal{M}_3^{\nu_3-r_3} \mathcal{M}_1^{-r_1} \mathcal{M}_2^{-r_2} \Vert u \Vert_{{}H^{r_3,r_1,r_2}(\Lambda_3)))} +  \mathcal{M}_3^{\nu_3-r_3} \mathcal{M}_2^{-r_2} \Vert u \Vert_{{}H^{r_3,r_2}(\Lambda_3^1,L^2(I_1)))} 
\nonumber
\\
&&
+  \mathcal{M}_1^{-r_1} \mathcal{M}_2^{-r_2} \Vert u \Vert_{{^c}H^{\nu_3}(I_3,H^{r_1}(I_1,L^2(I_2)))} +  \mathcal{M}_2^{-r_2} \Vert u \Vert_{{^c}H^{\nu_3}(I_3,H^{r_2}(I_2,L^2(I_1)))}. 
\end{eqnarray}
\end{proof}

\bibliographystyle{siam}
\bibliography{RFSLP_Refs2}

\end{document}